\documentclass[lettersize,journal]{IEEEtran}
\usepackage{array}
\usepackage[caption=false,font=normalsize,labelfont=sf,textfont=sf]{subfig}
\usepackage{stfloats}
\usepackage{verbatim}

\usepackage{cite}
\usepackage{bm}
\usepackage{amsmath,amssymb,amsfonts}
\usepackage{algorithm}
\usepackage{algorithmic}
\usepackage{graphicx}
\usepackage{textcomp}
\usepackage{xcolor}
\usepackage{gensymb}
\usepackage{comment}
\usepackage{soul}
\usepackage{color, xcolor}
\usepackage{amsthm}
\usepackage{bm}
\usepackage{multicol}
\usepackage{enumitem}
\usepackage{booktabs}
\usepackage{array}

\theoremstyle{plain}

\newtheorem{lemma}{Lemma}
\newtheorem{theorem}{Theorem}
\newtheorem{remark}{Remark}
\newtheorem{proposition}{Proposition}
\newtheorem{corollary}{Corollary}
\newcolumntype{C}[1]{>{\centering\arraybackslash}p{#1}}
\newcolumntype{M}[1]{>{\centering\arraybackslash}m{#1}}

\def\BibTeX{{\rm B\kern-.05em{\sc i\kern-.025em b}\kern-.08em
		T\kern-.1667em\lower.7ex\hbox{E}\kern-.125emX}}
\usepackage{balance}

\begin{document}
	\title{Spectral Approximation and Ergodic-Capacity Convergence of HMIMO Channels under Spatial–Wavenumber Domain Mismatch}
	
	\author{Hangsong Yan, \textit{Member}, \textit{IEEE}, Hong Yang, \textit{Senior Member}, 
	\textit{IEEE}, Shu Sun, \textit{Senior Member}, \textit{IEEE}
	
		\thanks{
			Hangsong Yan is with the Hangzhou Institute of Technology, Xidian University, Hangzhou, China (email: yanhangsong@xidian.edu.cn).
		
			Hong Yang (retired) was with the Department of Mathematics and Algorithms Research, Nokia Bell Labs, Murray Hill, USA (email: hyang.bell.labs@gmail.com). 
			
			Shu Sun is with the School of Information Science and Electronic Engineering, Shanghai Jiao Tong University, Shanghai, China (email: shusun@sjtu.edu.cn).

		}}
	
	
\maketitle

\begin{abstract}
	We establish quantitative results on finite-dimensional spectral approximation and ergodic-capacity convergence for continuous Holographic Multiple-Input Multiple-Output (HMIMO) channels in the square-aperture setting with physically prescribed circular wavenumber support.
	The resulting spatial-wavenumber domain mismatch leads to a non-separable square–disk concentration problem for which the classical separable construction based on prolate spheroidal wave functions (PSWFs) cannot be directly applied.
	Specifically, we project the continuous operator onto a tensor-product subspace of one-dimensional (1D) PSWFs while preserving the circular wavenumber support, yielding a generally non-diagonal but highly sparse finite-dimensional matrix.
	We show that the whole-spectrum approximation error, jointly accounting for retained-eigenvalue perturbations and the residual spectral tail, remains controlled by a 1D PSWF eigenvalue-tail envelope despite the loss of separability and induced off-diagonal coupling. Beyond an explicit 1D truncation threshold, this error decays at a certified super-exponential rate.
	This analysis further yields an explicit asymptotic upper envelope for the eigenspectrum under the flattened two-dimensional eigenvalue ordering. 
	We further establish an explicit non-asymptotic upper bound on the gap between the actual ergodic capacities of the continuous and tensor-PSWF-truncated channels under their respective transmit-covariance optimizations. Combined with the spectral result, this capacity-gap bound inherits the same super-exponential dependence on the truncation order. Finally, quadrature rules with explicit radial and angular node thresholds are developed for numerical evaluation of the projected matrix. 
	Numerical results reveal the advantage of the proposed finite-dimensional analysis over conventional truncation based on spatial degrees of freedom in retaining performance-relevant modes, particularly for compact apertures.
\end{abstract}
	
\begin{IEEEkeywords}
	HMIMO, Spectral Approximation, Ergodic Capacity, Spatial-Wavenumber Domain Mismatch, PSWFs
\end{IEEEkeywords}

\section{Introduction}
\label{sec:Introduction}
The evolution of wireless communications towards Holographic Multiple-Input Multiple-Output (HMIMO) has motivated a shift from discrete matrix channels to continuous spatial electromagnetic operators~\cite{Wei2026EIT,Hu2026Electromagnetic}. Unlike conventional MIMO systems, where spatial dimensions are represented by discrete antenna indices, HMIMO relies on propagating electromagnetic fields generated and received over spatially continuous apertures. Consequently, from an information-theoretic perspective, the physical channel is no longer modeled as a finite-dimensional matrix, but rather as a continuous spatial-wavenumber integral operator. Evaluating the fundamental capacity limits of such channels therefore requires characterizing the eigenspectrum of the underlying continuous operator.

Historically, the spectral analysis of bandlimited integral operators traces back to the seminal concentration theory developed in~\cite{Slepian1961Prolate}, where prolate spheroidal wave functions (PSWFs) arise as the eigenfunctions of the classical one-dimensional time--frequency limiting problem. This theory provides a rigorous mathematical foundation for characterizing the spatial dimensionality of one-dimensional bandlimited continuous channels. A systematic exposition of the analytical properties of PSWFs is provided in~\cite{osipov2013prolate}.
PSWFs have also been applied to HMIMO and related electromagnetic information-theoretic problems~\cite{Liang2024Achievable, zhu2026mimocapacity}. The authors of~\cite{Liang2024Achievable} analyzed the achievable rate of linear HMIMO systems for both normal and non-normal additive white Gaussian noise channels. Separately, the authors of~\cite{zhu2026mimocapacity} established a discrete-continuous correspondence based on PSWFs for linear HMIMO systems and proposed a PSWF-based channel estimator.

Extensions of the classical Slepian concentration problem to higher dimensions give rise to generalized PSWFs (GPSWFs), introduced by Slepian~\cite{Slepian1964Prolate}.
Related higher-dimensional spatial--spectral concentration problems have subsequently been studied for several symmetric geometries, including the sphere~\cite{simons2006spatiospectral} and the three-dimensional ball~\cite{KHALID2016470}. More recently, \cite{greengard2024generalized} developed algorithms and analysis for GPSWFs in arbitrary dimensions, including efficient evaluation, eigenvalue computation, quadrature rules, and interpolation formulae. Furthermore, \cite{ZHANG2020539} introduced ball PSWFs on the unit ball in arbitrary dimensions as a generalization of orthogonal ball polynomials. For the symmetry-based GPSWF constructions discussed above, analytical tractability is closely associated with geometries that admit a commuting differential operator, such as spherical or ball domains~\cite{simons2006spatiospectral,ZHANG2020539}. A square aperture coupled with the inherently circular wavenumber support dictated by wave propagation, however, does not possess the same rotationally symmetric and separable structure. Consequently, the classical commuting Sturm--Liouville formulations underlying these analytically tractable GPSWF constructions cannot be directly applied to the square--disk configuration.

Beyond the symmetry-based GPSWF constructions discussed above, the Slepian concentration problem has also been studied for more general spatial and spectral geometries. The two-dimensional Cartesian formulation in~\cite{simons2011cartesian} allows the spatial and spectral concentration regions to have arbitrary geometries in principle. Quantitative spectral analysis was subsequently developed in~\cite{israel2024eigenvalue} for hypercube spatial domains and coordinate-wise symmetric convex spectral domains, with bounds on eigenvalue counting, the plunge region, and eigenvalue decay. The analysis in~\cite{marceca2024eigenvalue} extended quantitative eigenvalue estimates to substantially broader regular spatial and spectral domains and further studied finite discretizations of the corresponding Fourier concentration operators. For bounded spatial and spectral domains with maximally Ahlfors-regular boundaries, the results in~\cite{hughes2025eigenvalueII} further sharpened the quantitative characterization of the plunge region and eigenvalue distribution. More recently, a disk-adapted wave-packet frame was constructed in~\cite{hughes2026wavepackets} for a disk spatial domain and a broad class of well-shaped spectral domains, yielding improved eigenvalue estimates for this geometry.

While the above mathematical studies provide increasingly general and quantitative characterizations of multidimensional concentration operators, they primarily concern the mathematical analysis of the underlying concentration problem, including its formulation, eigenvalue distribution, geometry-adapted representations, and numerical discretization. 
From the HMIMO perspective, conventional spatial degrees of freedom (DoF) characterizations provide a macroscopic measure of effective spatial dimensionality, but do not quantify how accurately a given finite-dimensional truncation captures the eigenspectrum of the underlying continuous operator. This raises a more refined spectral question: whether, in the presence of spatial–wavenumber domain mismatch and the resulting loss of separability, a structured finite-dimensional representation can be constructed whose whole-spectrum approximation accuracy can be rigorously quantified as a function of truncation order.
A further information-theoretic question is whether the actual ergodic-capacity loss induced by this finite-dimensional truncation can be rigorously quantified under transmit-covariance optimization. 

To address these questions in the square-aperture setting, we exploit the well-established analytical properties of one-dimensional (1D) PSWFs and construct a two-dimensional (2D) projection framework from their tensor products while preserving the physically prescribed circular wavenumber support. Specifically, our main contributions are summarized as follows:
\begin{itemize}
	\item
	We develop a structured finite-dimensional representation of the non-separable square--disk concentration operator by projecting it onto a tensor-product subspace constructed from 1D PSWFs, while retaining the physical circular wavenumber support. The resulting projected matrix is generally non-diagonal, thereby preserving the coupling induced by the square--disk mismatch. We derive an explicit analytical expression for its matrix elements and show that the parity structure induces exact sparsity, substantially reducing the number of elements that require evaluation.
	
	\item
	We rigorously quantify the spectral approximation error over the entire spectrum, jointly accounting for retained-eigenvalue perturbations and the residual spectral tail of the continuous operator. Using the max--min principle for compact self-adjoint operators, we establish an exact trace-defect representation of this total error. Despite the loss of separability and the resulting non-diagonal coupling, we further show that this trace defect can be controlled by a 1D PSWF eigenvalue tail envelope. This yields an explicit non-asymptotic super-exponential upper bound beyond an explicit truncation threshold. The resulting bound further yields an explicit asymptotic upper envelope for the eigenvalues of the continuous operator under the flattened 2D eigenvalue ordering. Moreover, the same trace-defect argument can be extended to rectangular apertures with unequal side lengths.
	
	\item
	We establish an explicit non-asymptotic upper bound on the gap between the actual ergodic capacities of the continuous and tensor-PSWF-truncated channels under perfect instantaneous channel state information at the receiver (CSIR) and statistical channel state information at the transmitter (CSIT). The bound shows that the capacity loss under transmit-covariance optimization is controlled by the same trace defect governing the whole-spectrum approximation error. Combined with the spectral result above, the capacity-gap bound inherits the same super-exponential dependence on the truncation order.
 
	\item
	For the accurate numerical evaluation of the eigenspectrum, we propose quadrature rules for computing the elements of the truncated sparse matrix. Specifically, we derive explicit quadrature-node thresholds for both the radial and angular integrals. Beyond these thresholds, the radial and angular quadrature errors exhibit super-exponential and exponential convergence, respectively. 
	Numerical results further reveal that, particularly for compact apertures, truncating the 1D PSWF basis at the conventional 1D spatial DoF benchmark can omit modes that make non-negligible contributions to the evaluated spectral efficiency.
\end{itemize}

The organization of this paper is as follows.
Section~\ref{sec:Continuous small-scale fading model} formulates the physical constraints of continuous HMIMO channels and establishes the mathematical origin of the spatial--wavenumber domain mismatch. Section~\ref{sec:linear aperture} reviews the spectral properties and numerical evaluation of 1D PSWFs for the linear aperture. Section~\ref{sec:two dimensional case} develops the tensor-PSWF spectral approximation framework for the square aperture, deriving the whole-spectrum approximation bound and the resulting eigenvalue upper envelope. Section~\ref{sec:channel_capacity} establishes the ergodic capacities of the continuous and tensor-PSWF-truncated channels and derives an explicit bound on the corresponding capacity gap. Section~\ref{sec:Quadrature Rules and Numerical Results} develops the quadrature rules for matrix-element computation and presents the numerical results. Finally, Section~\ref{sec:conclusion} concludes the paper and discusses future research directions.

\textbf{Notation}: Unless otherwise stated, scalar quantities are denoted by italic symbols, while boldface lowercase and uppercase symbols denote vectors and matrices, respectively. Calligraphic uppercase symbols denote linear operators. The superscripts $(\cdot)^T$ and $(\cdot)^{\dagger}$ denote the transpose and adjoint, respectively, where the latter reduces to the conjugate transpose for finite-dimensional matrices. The operators $\mathbb{E}[\cdot]$, $\operatorname{Tr}(\cdot)$, and $\operatorname{diag}(\cdot)$ denote expectation, trace, and the diagonal matrix/operator formed by its arguments, respectively. The notations $\|\cdot\|$ and $\|\cdot\|_{\mathrm{HS}}$ denote the operator and Hilbert-Schmidt norms, respectively. The notation $\mathcal{CN}(0, 1)$ represents a zero-mean proper complex Gaussian distribution with unit variance. We adopt the Fourier transform conventions $\mathcal{F}[f(x)](k_x) = \int_{-\infty}^{\infty}f(x)e^{-ik_x x} dx$ and $\mathcal{F}[f(x, y)](k_x, k_y) = \int_{-\infty}^{\infty}\int_{-\infty}^{\infty}f(x,y)e^{-i(k_x x + k_y y)} dxdy$.
	
\section{Physical Constraints of Continuous HMIMO Channels}
\label{sec:Continuous small-scale fading model}
We begin with the monochromatic, far-field, continuous small-scale fading, $h$, in a source-free environment. Utilizing the plane-wave representation from \cite{Pizzo2020Spatially}, the fading is described as a scalar random field fundamentally constrained by the physical laws of wave propagation \cite{Paulraj2003Introduction}. The scalar Helmholtz equation in the wavenumber domain \cite{hildebrand1962advanced} implies the following relationship between the wavenumber $\kappa = 2\pi/\lambda$ (where $\lambda$ is the wavelength) and the Cartesian components of the wavenumber vector $(k_x, k_y, k_z)$:
$$
k_x^2 + k_y^2 + k_z^2 = \kappa^2.
\label{eq:k_x_k_y_k_z}
$$
This physical constraint yields two solutions for the vertical component $k_z$:
$$
k_z = \pm\sqrt{\kappa^2 - k_x^2 - k_y^2}.
\label{eq:k_z}
$$
The two solutions correspond to upgoing $(+)$ and downgoing $(-)$ propagating waves, and the total field at a spatial point $(x,y,z)$ is the superposition of these two components: $h(x, y, z) = h_{+}(x, y, z) + h_{-}(x, y, z)$. Each component is expressed through a plane-wave spectral representation~\cite{Pizzo2020Spatially} as follows:
\begin{align}
	h_{\pm}(x,y,z) = &\frac{1}{4\pi\sqrt{\pi}}\iint_{\mathcal{D}(\kappa)}\frac{A_{h,\pm}(k_x,k_y)}{(\kappa^2 - k_x^2 - k_y^2)^{1/4}}W^{\pm}(k_x, k_y) \nonumber \\
	&\qquad \times e^{i(k_x x + k_y y \pm \sqrt{\kappa^2 - k_x^2 -k_y^2}z)}\,dk_x\,dk_y,
	\label{eq:h_pm_x_y_z}
\end{align}
where $A_{h,\pm}$ are real-valued non-negative deterministic fields called spectral factors, and $W^{\pm}$ are two independent 2D, zero-mean, complex-valued, spatially white Gaussian random fields with unit variance. 
	
Because our analysis is restricted to the far-field regime, we can neglect the effects of evanescent waves (corresponding to the case where $k_z^2 < 0$), as they decay exponentially with distance and become negligible far from the source. This physical constraint confines the integration domain in \eqref{eq:h_pm_x_y_z} to the circular disk $\mathcal{D}(\kappa)$:
\begin{equation}
	\mathcal{D}(\kappa) = \{(k_x, k_y) \in \mathbb{R}^2: k_x^2 + k_y^2 \leq \kappa^2 \}.
	\label{eq:D_kappa}
\end{equation}
The domain defined in \eqref{eq:D_kappa} dictates that the considered small-scale fading is band-limited with a circular support in the wavenumber domain. This geometric property serves as the physical origin of the spatial-wavenumber domain mismatch addressed in the subsequent analysis.
	
\section{Linear Aperture}
\label{sec:linear aperture}
We begin by reviewing the linear aperture case. The spectral properties of a 1D continuous aperture provide the foundational building blocks for analyzing the 2D domain-mismatched continuous operator in subsequent sections.
We consider the 1D small-scale fading along the linear array. Let $s \in [0, L_x]$ denote the physical spatial coordinate, where $L_x$ is the physical aperture length. 
Based on the channel model established in Section \ref{sec:Continuous small-scale fading model}, the fading field is band-limited in the wavenumber domain, with spectral support confined to $[-\kappa, \kappa]$, while it is observed over the finite spatial aperture $s\in [0, L_x]$.
This setup corresponds to the classic Slepian concentration problem~\cite{Slepian1961Prolate} of finding band-limited functions whose energy is maximally concentrated within a finite interval.
	
With a slight abuse of notation, we reuse the variable $x$, previously denoting the absolute physical coordinate, to represent the standard normalized interval $[-1, 1]$, which is mapped from the physical domain $s \in [0, L_x]$ via an affine transformation. 
Under this setting, the optimal functions that maximize the energy concentration within $[-1, 1]$ among all strictly band-limited functions are the eigenfunctions of the following integral equation~\cite{Slepian1961Prolate}:
\begin{equation}
	F_c[\phi_n](x) \triangleq \int_{-1}^{1}e^{icxx'}\phi_n(x')dx' = \lambda_n \phi_n(x),
	\label{eq:PSWF_definition}
\end{equation}
where $x, x' \in [-1, 1]$, $\{\lambda_n\}$ are the eigenvalues of the integral operator $F_c$, and $c = 2\pi W T$ is the time-bandwidth product. This optimality arises because the eigenfunctions $\{\phi_n(x)\}$ maximize the energy concentration ratio in the normalized interval $[-1, 1]$ (analogous to the time interval $[-T, T]$) among all functions that are strictly band-limited. Thus, for a given number of 1D basis functions, any such signal can be approximated by its orthogonal projection onto the subspace spanned by $\{\phi_n(x)\}_{n=0}^{N_{1D}-1}$, which minimizes the $[-1, 1]$ approximation error within this subspace.
We also note that throughout this paper, $\phi_n(x)$ denotes the globally defined bandlimited PSWF whose restriction to $x \in [-1, 1]$ satisfies the finite Fourier eigenvalue equation \eqref{eq:PSWF_definition}. In particular, its values outside $[-1, 1]$ are defined by the bandlimited continuation of~\eqref{eq:PSWF_definition}.
	
In our considered physical model, the time interval $[-T, T]$ and the angular frequency band $[-2\pi W, 2\pi W]$ correspond to the spatial aperture interval $[0, L_x]$ and the wavenumber band $[-\kappa, \kappa]$, respectively. 
Under the spatial normalization from $[0, L_x]$ to $[-1, 1]$, the corresponding wavenumber variables are scaled by $L_x/2$. For notational simplicity, we continue to use the same wavenumber notation, so that the physical band $[-\kappa, \kappa]$ is mapped to the normalized band $[-c, c]$, where $c=\frac{\kappa L_x}{2}
=\frac{\pi L_x}{\lambda}$.
\begin{remark}
	For a linear HMIMO array of length $L_x$, the conventional spatial DoF are characterized by $\frac{2L_x}{\lambda}$~\cite{pizzo2020degrees}, which corresponds to $\frac{2c}{\pi}$ in our normalization. 
\end{remark}

\subsection{Eigenfunction Evaluation}
\label{sec:eigenfunction_computation}
The eigenfunctions $\{\phi_n(x)\}_{n=0}^{\infty}$ are the well-known PSWFs \cite{Slepian1961Prolate}, which form a complete basis in $L^2[-1, 1]$, the space of square-integrable functions on $[-1, 1]$. The evaluation of PSWFs is based on the self-adjoint operator $\mathcal{Q}_c$, defined as:
\begin{equation}
	\mathcal{Q}_c[\phi](x) \triangleq \int_{-1}^{1}\frac{\sin(c(x-x'))}{\pi(x-x')}\phi(x')\,dx'.
	\label{eq:operator_Q}
\end{equation} 
It can be shown that
$$
	\mathcal{Q}_c = \frac{c}{2\pi} F_c^* F_c,
$$
where $F_c^*$ denotes the adjoint operator of $F_c$ given in (\ref{eq:PSWF_definition}).
Consequently, the same PSWFs in \eqref{eq:PSWF_definition} satisfy the following self-adjoint concentration eigenvalue problem.
$$
	\int_{-1}^{1}\frac{\sin(c(x-x'))}{\pi(x-x')}\phi_n(x')\,dx' = \mu_n \phi_n(x), \quad
	n = 0, 1, 2, \ldots,
	\label{eq:PSWF_integral}
$$
where the real-valued eigenvalues $\mu_n$ are given by
\begin{equation}
	\mu_n = \frac{c}{2\pi}|\lambda_n|^2.
	\label{eq:mu_n_definition}
\end{equation}
	
While $\phi_n(x)$ does not admit an elementary closed-form expression, the PSWFs can be evaluated efficiently by exploiting the differential operator that commutes with $\mathcal{Q}_c$~\cite{Slepian1961Prolate}. Expanding the resulting differential eigenvalue problem in orthonormal Legendre polynomials leads to the matrix eigenvalue problem:
\begin{equation}
	\mathbf{A}\boldsymbol{\beta}_n = \chi_n\boldsymbol{\beta}_n, \quad \forall n \geq 0,
	\label{eq:eigenvalue_decomposition}
\end{equation}
where 
$\mathbf{A}$ is a symmetric penta-diagonal matrix whose non-zero entries are given by:
\begin{align}
	&a_{m, m} = m(m+1) + \frac{2m(m+1) - 1}{(2m+3)(2m - 1)}c^2, \nonumber \\
	&a_{m, m+2} = a_{m+2, m} = \nonumber \\ &\quad\quad\frac{(m+1)(m+2)}{(2m+3)\sqrt{(2m+1)(2m+5)}} c^2,  \; m = 0,1,\ldots \nonumber
	\label{eq:A_entries}
\end{align}
and $\boldsymbol{\beta}_n = [\beta_{n, 0}, \beta_{n, 1}, ...]^T$ contains the linear combination coefficients for the expansion of $\phi_n(x)$ in terms of the orthonormal Legendre polynomials, $\{\bar{P}_m(x)\}$:
\begin{equation}
	\phi_n(x) = \sum_{m=0}^{\infty}\beta_{n, m}\bar{P}_m(x).
	\label{eq:Legendre_linear_combination}
\end{equation}
In \eqref{eq:eigenvalue_decomposition}, $\chi_n$ is the eigenvalue corresponding to the eigenvector $\boldsymbol{\beta}_n$ associated with the commuting differential operator. For numerical evaluation, the infinite eigensystem must be truncated to a finite dimension by choosing a cutoff number, $M$ (i.e., $\boldsymbol{\beta}_n \in \mathbb{R}^{M \times 1}$ and $\mathbf{A} \in \mathbb{R}^{M \times M}$). Guidance on the selection of an appropriate value for  this polynomial expansion truncation $M$ can be found in~\cite{BOYD2003Large, schmutzhard2015numerical}.
Appendix \ref{appdix: PSWF_calculation_derivation} provides a concise derivation of \eqref{eq:eigenvalue_decomposition} and the relevant properties of Legendre polynomials.
	
\subsection{Eigenvalue Evaluation}
A stable approach to compute the eigenvalues $\{\mu_n\}_{n=0}^{N_{1D}-1}$ is to take advantage of the parity properties of PSWFs. 
Specifically, this approach first involves computing the magnitudes of the finite-Fourier eigenvalues, $|\lambda_n|$, using separate expressions for the even and odd indices:
\begin{equation}
	|\lambda_{2p}| = (-1)^p \frac{\sqrt{2}\beta_{2p, 0}}{\phi_{2p}(0)}, \; p \geq 0,
	\label{eq:lambda_computation_even}
\end{equation}
\begin{equation}
	|\lambda_{2p+1}| = (-1)^p \sqrt{\frac{2}{3}}\frac{c\beta_{2p+1, 1}}{\partial_{x} \phi_{2p+1}(0)}, \; p\geq 0.
	\label{eq:lambda_computation_odd}
\end{equation}
The eigenvalues $\{\mu_n\}_{n=0}^{N_{1D}-1}$ can then be computed using \eqref{eq:mu_n_definition}.
A derivation is provided in Appendix \ref{appdix: PSWF_eigenvalue_computation}, and a similar approach is also reported in \cite{wang2017review}. 
The resulting PSWFs and eigenvalues provide the 1D ingredients for the tensor-product construction and spectral analysis of square apertures in the next section.
	
\section{Square Aperture}
\label{sec:two dimensional case}

In this section, we focus our analysis on the square aperture case (i.e., $L_x = L_y = L$). Building upon the 1D spectral properties established in the previous section, we address the geometric mismatch arising from a square spatial aperture and a circular wavenumber support. 
We construct a tensor-PSWF representation of the resulting non-separable concentration operator and rigorously quantify its whole-spectrum approximation error, including an explicit truncation threshold for entering the super-exponential convergence regime.
	
\subsection{Domain Mismatch}
\label{subsec:Domain_mismatch}
In the considered case, a key challenge arises from a domain mismatch between the spatial and wavenumber domains. To illustrate this, we denote the 2D small-scale fading as $h(x, y)$ by setting $z = 0$. Based on the band-limited nature of $h(x, y)$ as specified in \eqref{eq:D_kappa}, we observe that $h(x, y)$ is observed over a square spatial aperture, while its wavenumber support is confined to a circular disk.

On the other hand, the tensor products
$\{\phi_j(x)\phi_\ell(y)\}_{j,\ell=0}^{\infty}$ form a complete orthonormal basis for
$L^2([-1,1]\times[-1,1])$~\cite[Chapter II.4]{reed1980methods}.
By enumerating the index pairs $(j,\ell)$ with a single index $n$, we denote these basis functions by
$$
	\psi_n(x, y) = \phi_j(x)\phi_\ell(y), 
	\quad j, \ell = 0,1,\ldots
	\label{eq:2D_PSWF_1}
$$
This follows from the fact that $\{\phi_j(x)\}_{j=0}^{\infty}$ and $\{\phi_\ell(y)\}_{\ell=0}^{\infty}$ are both complete in $L^2([-1, 1])$ as mentioned in Section \ref{sec:eigenfunction_computation}. Consequently, any square-integrable function $f(x, y)$ defined on $(x, y) \in [-1, 1]\times[-1,1]$ can be represented by
$$
	f(x, y) = \sum_{j=0}^{\infty}\sum_{\ell=0}^{\infty}a_{j\ell }\phi_j(x)\phi_\ell(y),
	\label{eq:linear_combination_h}
$$
where $\{a_{j\ell}\}$ are the linear combination coefficients.
However, the set $\{\psi_n(x, y)\}$ does not diagonalize the square--disk concentration operator, because its separable square spectral support does not match the physical circular wavenumber support.
To show this, we first note that the 2D Fourier transform of $\{\psi_n(x, y)\}$ is separable:
\begin{equation}
	\mathcal{F}[\phi_j(x)\phi_\ell(y)](k_x, k_y) = \mathcal{F}[\phi_j(x)](k_x)\mathcal{F}[\phi_\ell(y)](k_y),
	\label{eq:Fourier_seperable}
\end{equation}
where $\mathcal{F}[\cdot](k_x)$ and $\mathcal{F}[\cdot](k_y)$ denote the 1D Fourier transform along the $x$ and $y$ axes, respectively.
Moreover, according to \cite{osipov2013prolate}, the 1D Fourier transform of a PSWF is a scaled and truncated version of itself, which is given by
\begin{equation}
	\mathcal{F}[\phi_n(x)](k_x) = 
	\left\{ \begin{array}{rcl}
		\frac{2\pi}{c \lambda_n}\phi_n(\frac{k_x}{c})
		& \text{if} \; -c \leq k_x \leq c,  \\ 
		0 \quad\quad & \text{otherwise}.
	\end{array}\right.
	\label{eq:Fourier_PSWF}
\end{equation}
Based on (\ref{eq:Fourier_seperable}) and (\ref{eq:Fourier_PSWF}), we observe that the Fourier transform of $\{\psi_n(x, y)\}$ is supported on the square domain $(k_x, k_y) \in [-c, c] \times [-c, c]$. In contrast, for the physical small-scale fading $h(x, y)$, after scaling the variables to $(x, y) \in [-1, 1]\times [-1, 1]$, the corresponding wavenumber domain is the circular disk defined by $k_x^2 + k_y^2 \leq c^2$. This domain mismatch is illustrated in Fig. \ref{Fig:Kernel_2D_PSWF_Comparison}.
\begin{figure}[htbp]
	\centerline{\includegraphics[scale=0.35]{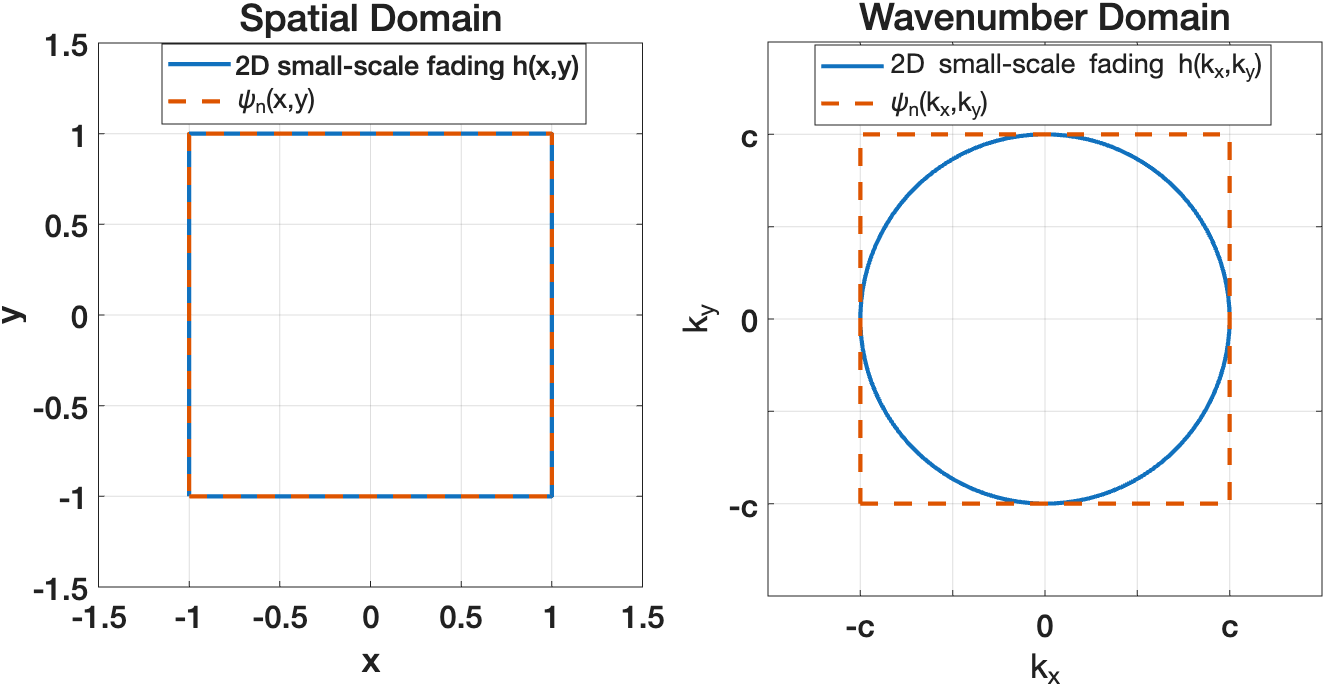}}
	\caption{2D illustration of the spatial and wavenumber domain shapes of the 2D small-scale fading, $h(x, y)$ and $\psi_n(x, y)$.}
	\label{Fig:Kernel_2D_PSWF_Comparison}
\end{figure}

\subsection{Tensor-PSWF Spectral Approximation}
\label{sec:Tensor-PSWF Spectral Approximation}
Following the same concentration principle as in the 1D case, we formulate the corresponding square--disk concentration problem and denote its orthogonal eigenfunctions by $\{\varphi_n(x, y)\}$. These functions are strictly bandlimited to the disk $k_x^2 + k_y^2 \leq c^2$ and maximize their energy concentration within the normalized square spatial domain.
The optimization problem of finding $\{\varphi_n(x, y)\}$ can be formulated as:
\begin{gather}
	\underset{\{\varphi_n(x, y)\}_{n=0}^{N-1}}{\max} \sum_{n=0}^{N-1}\mu_n \nonumber \\
	\text{s.t.} \quad \mu_n = \frac{\int_{-1}^{1}\int_{-1}^{1}|\varphi_n(x, y)|^2\,dx\,dy}{\int_{-\infty}^{\infty}\int_{-\infty}^{\infty}|\varphi_n(x, y)|^2\,dx\,dy} \nonumber \\
	\mathcal{F}[\varphi_n(x,y)](k_x, k_y) = 0 \quad \text{for} \quad k_x^2 + k_y^2 > c^2 \nonumber \\
	\int_{-\infty}^{\infty}\int_{-\infty}^{\infty}\varphi_n(x, y)\varphi_m^*(x, y)\,dx\,dy = \delta_{mn} \nonumber \\
	\text{for}\quad n,m = 0, 1, \ldots, N-1.
	\label{eq:problem_2D_optimal_PSWF} 
\end{gather}
Note that $\{\varphi_n(x, y)\}_{n=0}^{N-1}$ in \eqref{eq:problem_2D_optimal_PSWF} correspond to the leading eigenfunctions of the square-disk spatial-wavenumber concentration problem.
According to operator theory, (\ref{eq:problem_2D_optimal_PSWF}) can be reduced to finding the eigenfunctions of the spatial-wavenumber integral operator, which is formed by the composition of the spatial truncation operator, $\mathcal{P}_S$, and the wavenumber bandlimiting operator, $\mathcal{P}_c$ \cite{simons2010slepian, Slepian1964Prolate}. This leads to the eigenvalue equation:
$$
	\mu_n\varphi_n(x, y) = \mathcal{P}_c[\mathcal{P}_S[\varphi_n]](x, y).
	\label{eq:eigensystem_operator}
$$
	
In our case, the spatial truncation operator, $\mathcal{P}_S$, truncates the function outside the square $[-1, 1] \times [-1, 1]$:
\begin{equation}
	\mathcal{P}_S[\varphi](x, y) = \begin{cases} \varphi(x, y), & \left|x\right| \leq 1 \;\; \text{and} \;\; \left|y\right|  \leq 1, \\ 0, & \left|x\right|  > 1 \;\; \text{or} \;\; \left|y\right| > 1. \end{cases}
	\label{eq:spatial_operator}
\end{equation}
The wavenumber domain operator is given by
\begin{align}
	\mathcal{P}_c[\varphi](x, y) = \int_{-\infty}^{\infty}\int_{-\infty}^{\infty}G_c(x, y, x', y')\varphi(x', y')\,dx'\,dy',
	\label{eq:wavenumber_operator}
\end{align}
where
\begin{align}
	&G_c(x, y, x', y') \nonumber \\
	&\quad = \frac{1}{4\pi^2}\iint_{k_x^2 + k_y^2 \leq c^2}e^{i(k_x(x-x') + k_y(y-y'))}\,dk_x\,dk_y.
	\label{eq:low_pass_filter}
\end{align}
Given (\ref{eq:low_pass_filter}), we first note that $G_c(x, y, x', y')$ exhibits a shift-invariant property, allowing us to express it as $G_c(x, y, x', y') \equiv G_c(x-x', y-y')$. This can be interpreted as an ideal 2D low-pass filter whose wavenumber support is the disk $k_x^2 + k_y^2 \leq c^2$.

Combining (\ref{eq:spatial_operator}) with (\ref{eq:wavenumber_operator}) reduces the optimization problem (\ref{eq:problem_2D_optimal_PSWF}) to the eigenvalue problem of the non-separable continuous spatial-wavenumber integral operator, denoted as $\mathcal{P}$:
\begin{align}
	 \mu_n(\mathcal{P})\varphi_n(x, y)  &= (\mathcal{P}\varphi_n)(x, y) = \nonumber \\ &  \int_{-1}^{1}\int_{-1}^{1}G_c(x-x', y-y')\varphi_n(x', y')dx'dy', \nonumber \\
	& \qquad n = 0, 1, \ldots
	\label{eq:integral_equation}
\end{align}
	
To obtain a structured representation of~\eqref{eq:integral_equation}, we expand $\varphi_n(x,y)$ over the normalized spatial aperture in terms of the complete tensor-product basis
of 1D PSWFs:
\begin{equation}
	\varphi_n(x, y) = \sum_{j=0}^{\infty}\sum_{\ell=0}^{\infty}a_{j\ell}^n\phi_j(x)\phi_\ell(y),
	\label{eq:varphi_n}
\end{equation}
where $\{a_{j\ell}^n\}$ are the linear combination coefficients for the $n$-th eigenfunction. 
As mentioned in Section \ref{subsec:Domain_mismatch}, the tensor product basis $\{\phi_j(x)\phi_\ell(y)\}_{j,\ell=0}^{\infty}$ is complete in $L^2([-1, 1]\times [-1, 1])$~\cite[Chapter II.4]{reed1980methods}. Consequently, the infinite series expansion in \eqref{eq:varphi_n} converges to $\varphi_n(x, y)$ in $L^2([-1, 1]\times[-1, 1])$.
Substituting (\ref{eq:varphi_n}) into both sides of (\ref{eq:integral_equation}) maps the continuous integral equation into an equivalent infinite-dimensional discrete eigenvalue problem. To do this, we apply the following coefficient-extraction linear functional to both sides of (\ref{eq:integral_equation}):
$$
	\mathcal{P}_{pq}[f] = \int_{-1}^{1}\int_{-1}^{1}\phi_p(x)\phi_q(y)f(x, y)\,dx\,dy, \; p, q = 0, 1, 2, \ldots
	\label{eq:projection_operator}
$$
	
The left-hand side of (\ref{eq:integral_equation}) becomes
\begin{align}
	&\mu_n(\mathcal{P})\sum_{j=0}^{\infty}\sum_{\ell=0}^{\infty}a_{j\ell}^n\int_{-1}^{1}\int_{-1}^ {1}\phi_p(x)\phi_q(y)\phi_j(x)\phi_\ell(y)dxdy \nonumber \\
	&\quad = \mu_n(\mathcal{P})a_{pq}^n.
	\label{eq:LHS}
\end{align}
The equality in (\ref{eq:LHS}) holds because the 1D PSWFs are orthonormal in the interval $[-1, 1]$ (i.e., $\int_{-1}^{1}\phi_p(x)\phi_j(x)\,dx = \delta_{pj}$). 
	
We now consider the right-hand side of (\ref{eq:integral_equation}), which is:
$$
	\sum_{j=0}^{\infty}\sum_{\ell=0}^{\infty}M_{(pq)(j\ell)}a_{j\ell}^n,
	\label{eq:RHS}
$$
where
\begin{align}
	M_{(pq)(j\ell)} = &\iiiint_{[-1,1]^4} \phi_p(x)\phi_q(y)G_c(x-x', y-y')\nonumber \\ 
	&\qquad\qquad \times \phi_j(x')\phi_\ell(y') \, dx' \, dy' \, dx \, dy.
	\label{eq:M_pq_jl}
\end{align}
Combining this result with the left-hand side, we obtain the infinite-dimensional discrete eigenvalue problem:
\begin{equation}
	\mathbf{M}\mathbf{a}_n = \mu_n(\mathbf{M}) \mathbf{a}_n,
	\label{eq:linear_algebra}
\end{equation}
where $\mathbf{a}_n$ denotes the $n$-th eigenvector. 
Because the tensor-product PSWFs form a complete orthonormal basis of $L^2([-1, 1]\times[-1, 1])$, the infinite matrix $\mathbf{M}$ provides an exact matrix representation of the continuous operator $\mathcal{P}$ in this basis.
As established in functional analysis~\cite{hackbusch2011integral, reed1980methods}, $\mathbf{M}$ and $\mathcal{P}$ therefore share the identical set of non-zero eigenvalues $\mu_n$:
\begin{equation}
	\mu_n(\mathbf{M}) = \mu_n(\mathcal{P}), \quad n = 0, 1, 2, \ldots
	\label{eq:eigenvalue_equivalence}
\end{equation}
	
We now truncate the infinite-dimensional problem in (\ref{eq:linear_algebra}) into a finite-dimensional version:
\begin{equation}
	\mathbf{M}_N\tilde{\mathbf{a}}_n = \mu_n(\mathbf{M}_N) \tilde{\mathbf{a}}_n, \quad n = 0, \ldots, N-1,
	\label{eq:EVD_N}
\end{equation}
where $\mathbf{M}_N \in \mathbb{R}^{N \times N}$ with $N = N_xN_y$. Here, $N_x$ and $N_y$ denote the number of 1D PSWFs retained along the $x$ and $y$ axes, respectively.
For our considered square aperture, we have $N_x = N_y \triangleq N_{1D}$, which yields $N = N_{1D}^2$.
By mapping the 2D indices to a 1D sequence, the $(pN_{1D} + q+1, jN_{1D} + \ell + 1)$-th element of $\mathbf{M}_N$ is denoted as $M_{(pq)(j\ell)}$. The $n$-th eigenvector of $\mathbf{M}_N$ is denoted as $\tilde{\mathbf{a}}_n \in \mathbb{R}^{N \times 1}$, whose $(jN_{1D} + \ell + 1)$-th element is $\tilde{a}_{j\ell}^n$.
Consequently, for $0\le n<N$, the $n$-th eigenfunction for the continuous operator $\mathcal{P}$ in \eqref{eq:integral_equation} is approximated as:
$$
	\tilde{\varphi}_n(x, y) = \sum_{j=0}^{N_{1D} - 1}\sum_{\ell=0}^{N_{1D} - 1}  \tilde{a}_{j\ell}^n\phi_j(x)\phi_\ell(y).
	\label{eq:varphi_n_N}
$$
	
The calculation of the matrix elements $M_{(pq)(j\ell)}$ in (\ref{eq:M_pq_jl}) initially incurs a high-complexity four-dimensional integral. However, by exploiting the Fourier transform and the parity properties of the 1D PSWFs, we can rigorously reduce its dimensionality and establish a sparse structure, as summarized in the following lemma.
\begin{lemma}
	\label{lemma:M_pqjl_calculation}
	For any given indices $p, q, j, \ell \geq 0$, the four-dimensional integral $M_{(pq)(j\ell)}$ in \eqref{eq:M_pq_jl} satisfies the following properties:
	
	If $\operatorname{mod}(p+j, 2) = 1$ or $\operatorname{mod}(q+\ell, 2) = 1$, the element strictly vanishes:
	$$
		M_{(pq)(j\ell)} = 0. \label{eq:odd_function_1}
	$$
	Otherwise, i.e., if $\operatorname{mod}(p+j, 2) = 0$ and $\operatorname{mod}(q+\ell, 2) = 0$, the integral is reduced to a two-dimensional polar form in the wavenumber domain:
	\begin{align}
		M_{(pq)(j\ell)} = & \pm\frac{\sqrt{\mu_p\mu_q\mu_j\mu_\ell}}{c^2}\int_{0}^{2\pi}\int_{0}^{c} \phi_p\Big(\frac{r\cos\theta}{c}\Big) \nonumber \\ 
		& \times \phi_q\Big(\frac{r\sin\theta}{c}\Big)\phi_j\Big(\frac{r\cos\theta}{c}\Big)\phi_\ell\Big(\frac{r\sin\theta}{c}\Big)r \, dr \, d\theta.
		\label{eq:even_function_2}
	\end{align}
	The sign in (\ref{eq:even_function_2}) is positive if $\operatorname{mod}(p+q-j-\ell, 4) = 0$ and negative if $\operatorname{mod}(p+q-j-\ell, 4) = 2$. 
\end{lemma}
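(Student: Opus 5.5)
The plan is to move the four-dimensional integral \eqref{eq:M_pq_jl} into the wavenumber domain by inserting the Fourier representation \eqref{eq:low_pass_filter} of the kernel $G_c$. After that, the spatial integrals are evaluated with the finite Fourier eigenvalue equation \eqref{eq:PSWF_definition}.

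\textbf{Step 1 (reduce to the disk).} Substitute $G_c(x-x',y-y')=\frac{1}{4\pi^2}\iint_{\mathcal{D}(c)}e^{ik_x(x-x')}e^{ik_y(y-y')}\,dk_x\,dk_y$ into \eqref{eq:M_pq_jl}. All integrands are bounded and all domains are compact, so Fubini applies, and the spatial integrals factor into four one-dimensional integrals. Two of them have the form
\[
\int_{-1}^1\phi_p(x)e^{ik_xx}\,dx=\int_{-1}^1 e^{ic(k_x/c)x}\phi_p(x)\,dx=\lambda_p\,\phi_p(k_x/c),
\]
where \eqref{eq:PSWF_definition} is used together with the bandlimited continuation; this is valid since $|k_x/c|\le 1$ on the disk. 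The other two have the form
\[
\int_{-1}^1\phi_j(x')e^{-ik_xx'}\,dx'=\lambda_j\,\phi_j(-k_x/c)=(-1)^j\lambda_j\,\phi_j(k_x/c),
\]
using the parity $\phi_n(-x)=(-1)^n\phi_n(x)$. This gives
\[
M_{(pq)(j\ell)}=\frac{(-1)^{j+\ell}\lambda_p\lambda_q\lambda_j\lambda_\ell}{4\pi^2}\iint_{\mathcal{D}(c)}\phi_p\Big(\tfrac{k_x}{c}\Big)\phi_q\Big(\tfrac{k_y}{c}\Big)\phi_j\Big(\tfrac{k_x}{c}\Big)\phi_\ell\Big(\tfrac{k_y}{c}\Big)\,dk_x\,dk_y .
\]

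\textbf{Step 2 (vanishing).} The disk $\mathcal{D}(c)$ is invariant under $k_x\mapsto -k_x$. Under this reflection the integrand picks up the factor $(-1)^{p+j}$, so the integral vanishes whenever $p+j$ is odd. The reflection $k_y\mapsto -k_y$ handles the case $q+\ell$ odd in the same way.

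\textbf{Step 3 (constants and sign).} I will invoke the standard phase structure of the finite Fourier eigenvalues, $\lambda_n=i^n|\lambda_n|$ (see \cite{osipov2013prolate}), under the normalization of the PSWFs fixed in Section~\ref{sec:linear aperture}. Then
\[
(-1)^{j+\ell}\lambda_p\lambda_q\lambda_j\lambda_\ell=i^{\,p+q+3j+3\ell}\,|\lambda_p\lambda_q\lambda_j\lambda_\ell|=i^{\,p+q-j-\ell}\,|\lambda_p\lambda_q\lambda_j\lambda_\ell| .
\]
In the surviving case $p+j$ and $q+\ell$ are both even, so $p+q-j-\ell$ is even. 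Hence the phase equals $+1$ when $p+q-j-\ell\equiv 0 \pmod 4$ and $-1$ when $p+q-j-\ell\equiv 2 \pmod 4$, which is the sign rule in the statement.

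By \eqref{eq:mu_n_definition}, $|\lambda_n|=\sqrt{2\pi\mu_n/c}$. The product of the four moduli is therefore $4\pi^2\sqrt{\mu_p\mu_q\mu_j\mu_\ell}/c^2$, and dividing by $4\pi^2$ gives the prefactor in \eqref{eq:even_function_2}. Finally, the polar substitution $(k_x,k_y)=(r\cos\theta,r\sin\theta)$ with $dk_x\,dk_y=r\,dr\,d\theta$ yields \eqref{eq:even_function_2} exactly.

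\textbf{Main obstacle.} The delicate point is Step 3: fixing the phase convention $\lambda_n=i^n|\lambda_n|$ consistently with the sign normalization of the $\phi_n$. This should be consistent with \eqref{eq:lambda_computation_even}--\eqref{eq:lambda_computation_odd}, and I expect to justify it by evaluating \eqref{eq:PSWF_definition} at $x=0$ (even $n$) and differentiating at $x=0$ (odd $n$), exactly as in Appendix~\ref{appdix: PSWF_eigenvalue_computation}.

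As a consistency check, $\mathbf{M}$ is real because $G_c$ is real-valued; this is consistent with the odd-phase cases $i^{\pm1}$ being precisely those already excluded in Step 2. A second, minor point is that the bandlimited continuation of $\phi_n$ is needed only on $[-1,1]$, since $|k_x|,|k_y|\le c$ on the disk.
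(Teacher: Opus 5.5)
Your proposal is correct and follows essentially the same route as the paper's Appendix~\ref{appdix:2D_PSWF_Calculation}: you move to the wavenumber domain, evaluate each one-dimensional transform through the finite Fourier eigen-relation and PSWF parity, eliminate the odd cases by the reflection symmetry of the disk, and read off the sign and prefactor from $\lambda_n=i^n|\lambda_n|$ and $|\lambda_n|^2=2\pi\mu_n/c$; the only difference is cosmetic, in that you insert the Fourier representation of $G_c$ and apply Fubini directly where the paper writes a convolution and applies Parseval (your integrand is the pointwise complex conjugate of the paper's, which is harmless because $M_{(pq)(j\ell)}$ is real). On your ``main obstacle,'' replacing $\phi_n$ by $-\phi_n$ leaves $\lambda_n$ unchanged, so $\lambda_n=i^n|\lambda_n|$ is an intrinsic property to be cited from \cite{osipov2013prolate} (as the paper does) rather than a normalization issue, and it cannot be recovered from Appendix~\ref{appdix: PSWF_eigenvalue_computation}, which takes it as an input.
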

A detailed derivation of Lemma~\ref{lemma:M_pqjl_calculation} is provided in Appendix~\ref{appdix:2D_PSWF_Calculation}. 
\begin{remark}
	Under the conditions $\operatorname{mod}(p + j, 2) = 0$ and $\operatorname{mod}(q+\ell, 2) = 0$, $p-j$ and $q-\ell$ are both even, so that $\operatorname{mod}(p+q-j-\ell, 4)$ can only be $0$ or $2$.
\end{remark}
\begin{remark}
	Lemma~\ref{lemma:M_pqjl_calculation} indicates that $\mathbf{M}$ is a highly sparse matrix. As its truncated version, $\mathbf{M}_N$ inherits this sparsity. From a computational efficiency perspective, this property drastically reduces the number of non-zero elements to be computed. Moreover, among the remaining non-zero elements, the required calculations are further minimized: $\mathbf{M}_N$ is symmetric (requiring only the upper triangular part), and the index symmetries in \eqref{eq:even_function_2} lead to identical integral evaluations for permuted index combinations, thereby further avoiding redundant computations.
\end{remark}
	
Given the structure established in Lemma~\ref{lemma:M_pqjl_calculation}, the finite matrix $\mathbf{M}_N$ is the matrix representation obtained by projecting $\mathcal{P}$ onto the tensor-PSWF subspace spanned by the first $N_{1D}$ one-dimensional basis functions along each coordinate. This projection structure allows the spectral approximation error to be controlled directly through a trace defect, leading to the following result.
\begin{theorem}
	\label{theorem:approx_error}
	Let $\mu_n(\mathcal{P})$ be the $n$-th eigenvalue of the
	non-separable continuous spatial-wavenumber integral operator
	$\mathcal{P}$, arranged in non-increasing order, and let
	$\mu_n(\mathbf{M}_N)$ be the $n$-th eigenvalue of the
	finite-dimensional truncated matrix $\mathbf{M}_N$, also arranged
	in non-increasing order, with the symmetric truncation
	$N=N_{1D}^2$. If
	$$
		N_{1D}>\frac{ec}{4},
	$$
	then the total absolute spectral approximation error
	\begin{align}
		&\sum_{n=0}^{N-1}
		\left|
		\mu_n(\mathcal{P})-\mu_n(\mathbf{M}_N)
		\right|
		+
		\sum_{n=N}^{\infty}\mu_n(\mathcal{P})
		\nonumber\\
		= & 
		\operatorname{Tr}(\mathcal{P})
		-
		\operatorname{Tr}(\mathbf{M}_N) <
		C_1(N_{1D},c)
		e^{-2N_{1D}
			\ln\left(\frac{4N_{1D}}{ce}\right)}, \nonumber
		\label{eq:error_truncation}
	\end{align}
	where
	\begin{equation}
		C_1(N_{1D},c)
		=
		\frac{c^2}{\pi N_{1D}}
		\left[
		1-
		\left(
		\frac{ec}{4(N_{1D}+1)}
		\right)^2
		\right]^{-1}.
		\label{eq:C1_definition}
	\end{equation}
\end{theorem}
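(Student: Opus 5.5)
The plan is to prove the identity first and then bound the trace defect. Let $\Pi_N$ be the orthogonal projection of $L^2([-1,1]^2)$ onto $\operatorname{span}\{\phi_j(x)\phi_\ell(y)\}_{0\le j,\ell<N_{1D}}$. By construction, $\mathbf{M}_N$ is the matrix of the compression $\Pi_N\mathcal{P}\Pi_N$.

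\emph{Trace of $\mathcal{P}$.} Since $\mathcal{P}=\mathcal{P}_S\mathcal{P}_c\mathcal{P}_S$ with $\mathcal{P}_c$ an orthogonal projection, $\mathcal{P}$ is positive semidefinite. Its kernel $G_c(x-x',y-y')$ is continuous, so $\mathcal{P}$ is trace class. The trace is the integral of the kernel on the diagonal:
$$\operatorname{Tr}(\mathcal{P})=\int_{[-1,1]^2}G_c(0,0)\,dx\,dy=4\cdot\frac{\pi c^2}{4\pi^2}=\frac{c^2}{\pi}.$$

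\emph{The identity.} By the max--min (Courant--Fischer) principle for compact self-adjoint operators, eigenvalues of a compression are dominated by those of the original: $\mu_n(\mathbf{M}_N)\le\mu_n(\mathcal{P})$ for $0\le n<N$. All the absolute values in the statement can therefore be dropped. The left-hand side then telescopes to $\sum_{n\ge0}\mu_n(\mathcal{P})-\sum_{n<N}\mu_n(\mathbf{M}_N)=\operatorname{Tr}(\mathcal{P})-\operatorname{Tr}(\mathbf{M}_N)$, which proves the equality. Expanding $\operatorname{Tr}(\mathcal{P})$ in the complete tensor basis gives
$$\operatorname{Tr}(\mathcal{P})-\operatorname{Tr}(\mathbf{M}_N)=\sum_{(j,\ell)\notin\{0,\dots,N_{1D}-1\}^2}M_{(j\ell)(j\ell)}.$$

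\emph{Controlling the diagonal entries.} This is where the mismatch is handled. The disk is contained in the square $[-c,c]^2$, so the disk bandlimiting projection is dominated by the square one: $\mathcal{P}_c\preceq\mathcal{P}_{\square}$. Compressing to the spatial square, $\mathcal{P}_{\square}$ becomes exactly $\mathcal{Q}_c\otimes\mathcal{Q}_c$, which is diagonal in the tensor basis with eigenvalues $\mu_j\mu_\ell$. Hence
$$0\le M_{(j\ell)(j\ell)}=\langle\psi,\mathcal{P}_c\psi\rangle\le\langle\psi,(\mathcal{Q}_c\otimes\mathcal{Q}_c)\psi\rangle=\mu_j\mu_\ell,\qquad\psi=\phi_j\otimes\phi_\ell.$$
The excluded index set lies inside $\{j\ge N_{1D}\}\cup\{\ell\ge N_{1D}\}$. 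Using $\sum_j\mu_j=\operatorname{Tr}(\mathcal{Q}_c)=2c/\pi$, this gives
$$\operatorname{Tr}(\mathcal{P})-\operatorname{Tr}(\mathbf{M}_N)\le 2\cdot\frac{2c}{\pi}\sum_{n\ge N_{1D}}\mu_n .$$
The off-diagonal coupling of $\mathbf{M}$ therefore never enters.

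\emph{The 1D tail bound (main obstacle).} It remains to bound the PSWF eigenvalue tail. I would start from the classical explicit upper bound on $|\lambda_n|$, of the form $\sqrt{\pi}\,c^n(n!)^2/((2n)!\,\Gamma(n+3/2))$ (Osipov--Rokhlin--Xiao). Through $\mu_n=\frac{c}{2\pi}|\lambda_n|^2$ and Stirling-type estimates, this should give a bound like $\mu_n\le \frac{c}{4n}\bigl(\frac{ec}{4n}\bigr)^{2n}$, up to constants to be checked. Consecutive terms have ratio at most $\bigl(\frac{ec}{4(N_{1D}+1)}\bigr)^2<1$ when $N_{1D}>ec/4$. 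Summing the geometric series then yields
$$\sum_{n\ge N_{1D}}\mu_n<\frac{c}{4N_{1D}}\,e^{-2N_{1D}\ln\left(\frac{4N_{1D}}{ce}\right)}\left[1-\left(\frac{ec}{4(N_{1D}+1)}\right)^2\right]^{-1}.$$
Multiplying by $4c/\pi$ gives exactly $C_1(N_{1D},c)$.

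Getting the Stirling bookkeeping sharp enough to hit precisely the prefactor $c/(4N_{1D})$ is the delicate step. If it falls short, I would instead bound the ratio $\mu_{n+1}/\mu_n$ directly from the Gamma-function expression.
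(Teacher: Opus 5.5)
Your proposal follows the paper's proof essentially step for step: the max--min ordering gives the trace-defect identity, and disk-in-square domination gives $0\le M_{(j\ell)(j\ell)}\le\mu_j\mu_\ell$ (the paper reads this off the polar formula of Lemma~\ref{lemma:M_pqjl_calculation}, while you use $\mathcal{P}_c\preceq\mathcal{P}_{\square}$ and $\mathcal{Q}_c\otimes\mathcal{Q}_c$, which avoids that lemma). From there the trace defect is bounded by $\frac{4c}{\pi}\sum_{n\ge N_{1D}}\mu_n$ and the Osipov--Rokhlin--Xiao envelope is summed as a geometric series. The prefactor you left unchecked comes out exactly, so no fallback is needed: write $\nu(n,c)=\frac{2(4c)^n}{(2n+1)\,n!\binom{2n}{n}^2}$; the Wallis bound $\binom{2n}{n}>4^n/\sqrt{\pi(n+1/2)}$ cancels the factor $2n+1$ to give $\nu(n,c)<\frac{\pi}{n!}\left(\frac{c}{4}\right)^n$, and Robbins' bound $n!>\sqrt{2\pi n}\,(n/e)^n$ with $\mu_n=\frac{c}{2\pi}|\lambda_n|^2$ then yields $\mu_n<\frac{c}{4n}\left(\frac{ce}{4n}\right)^{2n}$.
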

\begin{proof}
	See Appendix~\ref{appendix:truncation_error}.
\end{proof}
\begin{remark}
	From Theorem~\ref{theorem:approx_error}, we see that once $N_{1D}>ec/4$, the total absolute spectral error between the zero-padded
	eigenspectrum of $\mathbf{M}_N$ and that of $\mathcal{P}$ admits an explicit non-asymptotic
	super-exponential upper envelope. 
\end{remark}

It is worth emphasizing that, although the square--disk mismatch destroys separability and induces non-diagonal coupling in the tensor-PSWF representation, the whole-spectrum approximation error remains controlled by the tail of the underlying 1D PSWF eigenvalues. In particular, since these 1D eigenvalues satisfy the explicit envelope $\mu_n(c)
< \frac{c}{4n}
e^{-2n\ln\left(\frac{4n}{ce}\right)}$ as derived in~\eqref{eq:inequality_n_c}, the resulting whole-spectrum error retains a certified super-exponential decay beyond the 1D truncation threshold. 
In fact, the trace-defect mechanism underlying Theorem~\ref{theorem:approx_error} also extends to rectangular apertures, as summarized next.
\begin{remark}
	For a rectangular aperture of side lengths $L_x$ and $L_y$, let $c_x = \frac{\kappa L_x}{2}$ and $c_y = \frac{\kappa L_y}{2}$. We further let $\mathcal{P}_{c_x, c_y}$ denote the corresponding normalized concentration operator, and let $\mathbf{M}_{N_x, N_y}$ denote its matrix representation obtained by projecting $\mathcal{P}_{c_x, c_y}$ onto the tensor-PSWF subspace spanned by the first $N_x$ and $N_y$ 1D PSWFs with parameters $c_x$ and $c_y$, respectively. Define
	$$ T_x = \sum_{p=N_x}^{\infty}\mu_p(c_x),\quad T_y = \sum_{q=N_y}^{\infty}\mu_q(c_y).$$
	Using the derivation approach in Appendix~\ref{appendix:truncation_error}, the trace defect for the rectangular apertures satisfies
	\begin{align}
		&\operatorname{Tr}(\mathcal{P}_{c_x, c_y}) - \operatorname{Tr}(\mathbf{M}_{N_x, N_y}) \nonumber \\ & \qquad \leq \frac{2c_y}{\pi}T_x + \frac{2c_x}{\pi}T_y - T_xT_y \leq \frac{2c_y}{\pi}T_x + \frac{2c_x}{\pi}T_y. \nonumber
	\end{align}
	Hence, the rectangular-aperture approximation error remains governed by the corresponding 1D PSWF eigenvalue tails, with the square result recovered by setting $c_x = c_y = c$ and $N_x = N_y = N_{1D}$.
\end{remark}

For the square-aperture setting, a direct bound on every eigenvalue in the flattened two-dimensional ordering can also be obtained rather than only at perfect-square truncation indices.
\begin{corollary}
	\label{corollary:eigenspectrum_decay}
	Define $m_n \triangleq \left\lfloor\sqrt{n}\right\rfloor.$
	If $m_n>ec/4$, then the $n$-th eigenvalue of the continuous
	operator $\mathcal{P}$ satisfies
	\begin{align}
		\mu_n(\mathcal{P})
		<
		\frac{C_1(m_n,c)}
		{n-m_n^2+1}
		e^{-2m_n
			\ln\left(\frac{4m_n}{ce}\right)}.
		\label{eq:UB_eigenvalue}
	\end{align}
	Consequently, the flattened two-dimensional eigenspectrum
	admits the asymptotic upper envelope
	\begin{equation}
		\ln\mu_n(\mathcal{P})
		\leq
		-2\sqrt{n}
		\ln\left(\frac{4\sqrt{n}}{ce}\right)
		+\mathcal{O}(\ln n).
		\label{eq:flattened_asymptotic}
	\end{equation}
\end{corollary}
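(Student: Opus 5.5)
The plan is to apply Theorem~\ref{theorem:approx_error} with the truncation parameter $N_{1D}=m_n$, so that $N=m_n^2\le n$. The theorem then bounds the residual spectral tail:
\begin{equation*}
\sum_{k=m_n^2}^{\infty}\mu_k(\mathcal{P})\le \operatorname{Tr}(\mathcal{P})-\operatorname{Tr}(\mathbf{M}_{m_n^2})<C_1(m_n,c)\,e^{-2m_n\ln\left(\frac{4m_n}{ce}\right)}.
\end{equation*}
This holds because the left-hand side of the identity in Theorem~\ref{theorem:approx_error} is a sum of nonnegative terms, and the tail term is one of them. The hypothesis $m_n>ec/4$ is exactly the threshold condition of the theorem with $N_{1D}=m_n$. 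The constant $C_1(m_n,c)$ is finite and positive, since $ec/(4(m_n+1))<1$.

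Next, I use monotonicity of the eigenvalues, which are arranged in non-increasing order and are nonnegative because $\mathcal{P}$ is a positive compact operator. The indices $k=m_n^2,\dots,n$ number $n-m_n^2+1$, and each satisfies $\mu_k(\mathcal{P})\ge\mu_n(\mathcal{P})$. Hence
\begin{equation*}
(n-m_n^2+1)\,\mu_n(\mathcal{P})\le\sum_{k=m_n^2}^{n}\mu_k(\mathcal{P})\le\sum_{k=m_n^2}^{\infty}\mu_k(\mathcal{P}).
\end{equation*}
Dividing by $n-m_n^2+1$ gives \eqref{eq:UB_eigenvalue}, with the strict inequality inherited from the theorem.

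For the asymptotic envelope \eqref{eq:flattened_asymptotic}, I take logarithms. Since $n-m_n^2+1\ge1$, that factor can be dropped. We have $\ln C_1(m_n,c)=\ln(c^2/\pi)-\ln m_n-\ln[1-(ec/(4(m_n+1)))^2]$; the bracket tends to $1$, so $\ln C_1(m_n,c)=\mathcal{O}(\ln n)$. This leaves $-2m_n\ln(4m_n/(ce))$, which must be compared with $-2\sqrt{n}\ln(4\sqrt{n}/(ce))$.

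This comparison is the only delicate step. Write $m_n=\sqrt{n}-\delta$ with $0\le\delta<1$, and set $g(t)=2t\ln(4t/(ce))$. Then $g'(t)=2\ln(4t/(ce))+2=2\ln(4t/c)$, so by the mean value theorem $g(\sqrt{n})-g(m_n)\le \delta\sup_{t\in[m_n,\sqrt{n}]}g'(t)\le 2\ln(4\sqrt{n}/c)=\mathcal{O}(\ln n)$. Consequently $-g(m_n)\le -g(\sqrt{n})+\mathcal{O}(\ln n)$. Combining this with the estimate for $\ln C_1$ yields \eqref{eq:flattened_asymptotic}, which holds for all $n$ large enough that $m_n>ec/4$.
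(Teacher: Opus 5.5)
Your proposal is correct and matches the paper's proof: you use the same counting bound $(n-m_n^2+1)\mu_n(\mathcal{P})\le\sum_{k\ge m_n^2}\mu_k(\mathcal{P})$ and then apply Theorem~\ref{theorem:approx_error} with $N_{1D}=m_n$. Your mean-value estimate $g(\sqrt{n})-g(m_n)\le 2\ln(4\sqrt{n}/c)$ spells out the step the paper compresses into ``since $m_n=\sqrt{n}+\mathcal{O}(1)$'', and it shows why the error term is $\mathcal{O}(\ln n)$ rather than $\mathcal{O}(1)$.
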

\begin{proof}
	Since $m_n^2\leq n<(m_n+1)^2$ and the eigenvalues of
	$\mathcal{P}$ are arranged in non-increasing order,
	$$
		(n-m_n^2+1)\mu_n(\mathcal{P})
		\leq
		\sum_{k=m_n^2}^{n}\mu_k(\mathcal{P})\leq
		\sum_{k=m_n^2}^{\infty}\mu_k(\mathcal{P}).
	$$
	Applying Theorem~\ref{theorem:approx_error} with the
	one-dimensional truncation dimension set to $m_n$ yields
	\eqref{eq:UB_eigenvalue}. Since
	$m_n=\sqrt{n}+\mathcal{O}(1)$, taking the logarithm of
	\eqref{eq:UB_eigenvalue} further gives
	\eqref{eq:flattened_asymptotic}.
\end{proof}

\begin{remark}
	\label{remark:perturbation_tail_error}
	It is worth distinguishing the 1D truncation threshold
	$N_{1D}>ec/4$ from the classical 1D spatial DoF $2c/\pi$. While $2c/\pi$ equals
	the trace of the 1D concentration operator $\mathcal{Q}_c$
	and characterizes its macroscopic effective dimension, the
	threshold $ec/4$ marks the analytical onset at which the
	explicit PSWF tail envelope becomes super-exponentially
	decaying.
\end{remark}

\begin{remark}
	\label{remark:decaying_rate}
	Corollary~\ref{corollary:eigenspectrum_decay} reveals the
	dimensionality-folding behavior of the two-dimensional
	eigenspectrum. In contrast to the one-dimensional PSWF
	eigenvalues, whose explicit envelope decays as
	$e^{-2p\ln(4p/(ce))}$ with respect to the one-dimensional
	index $p$, the flattened two-dimensional eigenvalue index
	introduces the scaling $p\sim\sqrt{n}$. Consequently, the
	dominant term in \eqref{eq:flattened_asymptotic} becomes
	$$
		e^{-2\sqrt{n}
			\ln\left(\frac{4\sqrt{n}}{ce}\right)}.
	$$
	This result establishes an explicit upper envelope rather than
	an exact asymptotic equivalence; the actual eigenspectrum may
	decay faster than the derived bound.
\end{remark}

\section{Channel Capacity}
\label{sec:channel_capacity}
In this section, we quantify the information-theoretic loss incurred
when the spatially continuous channel is restricted to the finite
tensor-PSWF subspace developed in Section~\ref{sec:two dimensional case}. In contrast to an
analytical upper bound obtained by applying Jensen's inequality, our
analysis directly considers the actual ergodic capacity, in which the
instantaneous mutual information is averaged over the random channel
and the transmit covariance is optimized subject to the total power
constraint. We first establish the equivalent spatial-eigenmode
representation and actual ergodic capacity of the continuous channel,
and then compare it with the corresponding finite-dimensional PSWF-truncated
channel.

\subsection{Ergodic Capacity of the Continuous Aperture Channel}
\label{subsec:capacity_limit}

We start from the continuous electromagnetic signal model between
the transmitter and receiver. For two-dimensional planar apertures,
the spatially continuous input-output relation is given by
~\cite{Poon2005Degrees,Poon2006Impact}
\begin{equation}
	Y(\mathbf{r}_r)
	=
	\int_{\mathcal{D}_t}
	h(\mathbf{r}_r,\mathbf{r}_t)
	X(\mathbf{r}_t)
	d\mathbf{r}_t
	+
	Z(\mathbf{r}_r),
	\qquad
	\mathbf{r}_r\in\mathcal{D}_r,
	\label{eq:signal_model}
\end{equation}
where $X\in L^2(\mathcal{D}_t)$ and
$Y\in L^2(\mathcal{D}_r)$ denote the transmitted and received
fields, respectively. The additive noise $Z$ is a zero-mean
circularly symmetric complex Gaussian spatial white-noise field
satisfying $\mathbb{E}[Z(\mathbf{r}_r)Z^*(\mathbf{r}'_r)]
=\sigma_z^2\delta(\mathbf{r}_r-\mathbf{r}'_r)$.
The channel kernel $h\in L^2(\mathcal{D}_r\times\mathcal{D}_t)$ maps
the transmit aperture $\mathcal{D}_t\subset\mathbb{R}^2$ to the
receive aperture $\mathcal{D}_r\subset\mathbb{R}^2$. We assume that \(h(\mathbf r_r,\mathbf r_t)\) is a zero-mean proper complex Gaussian random field.

For the capacity analysis, we assume perfect instantaneous CSIR and statistical CSIT. We consider a symmetric transmission scenario with identical square transmit and receive
apertures, i.e., $\mathcal{D}_t=\mathcal{D}_r$. In the following,
the spatial coordinates are understood under the same normalization
used in Section~\ref{sec:two dimensional case}, such that both apertures are mapped to
$[-1,1]\times[-1,1]$ and the corresponding circular wavenumber
support is given by $k_x^2+k_y^2\leq c^2$.

Under the separable scattering assumption, the continuous channel correlation function is modeled as
\begin{align}
	\mathbb{E}\!\left[
	h(\mathbf{r}_r,\mathbf{r}_t)
	h^*(\mathbf{r}'_r,\mathbf{r}'_t)
	\right] =
	R_r(\mathbf{r}_r,\mathbf{r}'_r)
	R_t^*(\mathbf{r}_t,\mathbf{r}'_t),
	\label{eq:separable_correlation}
\end{align}
where $R_r(\mathbf{r}_r,\mathbf{r}'_r)$ and
$R_t(\mathbf{r}_t,\mathbf{r}'_t)$ denote the receive and transmit spatial autocorrelation functions, respectively. For the symmetric
square--disk channel considered in this paper, we further specify
these autocorrelation kernels, up to constant power-scaling factors,
by the circularly bandlimited kernel $G_c$ introduced in
\eqref{eq:low_pass_filter}:
\begin{align}
	R_r(\mathbf{r}_r,\mathbf{r}'_r)
	&=
	\gamma_r G_c(\mathbf{r}_r-\mathbf{r}'_r),
	\label{eq:Rr_Gc}\\
	R_t(\mathbf{r}_t,\mathbf{r}'_t)
	&=
	\gamma_t G_c(\mathbf{r}_t-\mathbf{r}'_t),
	\label{eq:Rt_Gc}
\end{align}
where $\gamma_r>0$ and $\gamma_t>0$ denote the receive and
transmit scattering-power factors, respectively. Comparing
\eqref{eq:Rr_Gc} and \eqref{eq:Rt_Gc} with the kernel of the
continuous integral operator $\mathcal{P}$ in
\eqref{eq:integral_equation}, the corresponding receive and transmit
covariance eigenfunctions coincide with the eigenfunctions of
$\mathcal{P}$. Their eigenvalues therefore satisfy
$\mu_{r,n}=\gamma_r\mu_n(\mathcal{P})$ and
$\mu_{t,n}=\gamma_t\mu_n(\mathcal{P})$. Since the product
$\gamma_r\gamma_t$ only scales the overall channel power, it is
absorbed into the effective signal-to-noise ratio (SNR) in the subsequent analysis.
Accordingly, without loss of generality for the normalized channel
model, we set $\gamma_r=\gamma_t=1$, so that
\begin{equation}
	\mu_{r,n}
	=
	\mu_{t,n}
	=
	\mu_n(\mathcal{P}),
	\qquad n=0,1,\ldots
	\label{eq:symmetric_covariance_eigenvalues}
\end{equation}

By Mercer's theorem and the Karhunen-Lo\`eve expansion of $h(\mathbf{r}_r, \mathbf{r}_t)$, the continuous input-output relation given in \eqref{eq:signal_model} can be converted to the following discrete form:
\begin{equation}
	\mathbf{y}
	=
	\mathbf{H}\mathbf{x}
	+
	\mathbf{z},
	\label{eq:equivalent_channel}
\end{equation}
where $\mathbf{x}=[x_0,x_1,\ldots]^T$,
$\mathbf{y}=[y_0,y_1,\ldots]^T$, and
$\mathbf{z}=[z_0,z_1,\ldots]^T$. $\mathbf{H}$ can be compactly represented as
\begin{equation}
	\mathbf{H}
	=
	\boldsymbol{\Lambda}^{1/2}
	\mathbf{H}_w
	\boldsymbol{\Lambda}^{1/2},
	\label{eq:H_factored}
\end{equation}
where
\begin{equation}
	\boldsymbol{\Lambda}
	\triangleq
	\operatorname{diag}
	\left(
	\mu_0(\mathcal{P}),
	\mu_1(\mathcal{P}),
	\ldots
	\right),
	\label{eq:Lambda_definition}
\end{equation}
and the entries $\{w_{n,m}\}$ of $\mathbf{H}_w$ are i.i.d. $\mathcal{CN}(0, 1)$ random variables. Consequently, the $(n,m)$-th entry of $\mathbf{H}$ is given as:
$$
	[\mathbf{H}]_{n,m}
	=
	\sqrt{\mu_n(\mathcal{P})}\,
	w_{n,m}\,
	\sqrt{\mu_m(\mathcal{P})}.
	\label{eq:H_element}
$$
We let $P_T>0$ denote the total transmit-power and the noise components $z_n$ are i.i.d $\mathcal{CN}(0, \sigma_z^2)$ due to the orthonormal projection. The SNR is defined as $\rho\triangleq P_T/\sigma_z^2$. Consequently, we have the following ergodic capacity result for the infinite dimensional channel $\mathbf{H}$ in the following lemma.

\begin{lemma}
	\label{lemma:continuous_capacity}
	Let 
	\begin{equation}
		\mathcal{I}(\mathbf{Q}) = \mathbb{E}\left[\log_2\det\left(
		\mathbf{I}
		+
		\frac{1}{\sigma_z^2}
		\mathbf{H}\mathbf{Q}\mathbf{H}^{\dagger}
		\right)
		\right],
		\label{eq:I_Q_Definition}
	\end{equation}
	where $\mathbf{Q} \triangleq \mathbb{E}[\mathbf{x}\mathbf{x}^{\dagger}]$ denotes the deterministic transmit covariance matrix and the determinant is understood in the Fredholm sense. Under perfect instantaneous CSIR and statistical CSIT, 
	the actual ergodic capacity of the continuous
	symmetric square--disk channel in~\eqref{eq:equivalent_channel} is
	\begin{align}
		C_{\mathrm{erg}}
		=
		\sup_{\substack{
				\mathbf{Q}\succeq0\\
				\operatorname{Tr}(\mathbf{Q})\leq P_T
		}}
		\mathcal{I}(\mathbf{Q}).
		\label{eq:actual_ergodic_capacity}
	\end{align}
	Further, the capacity optimization can be restricted to diagonal covariance
	matrices without loss of optimality. That is
	\begin{align}
		C_{\mathrm{erg}}
		=
		\sup_{\substack{
				\widetilde{\mathbf{Q}}\succeq0,\;
				\widetilde{\mathbf{Q}}\ {\rm diagonal}\\
				\operatorname{Tr}(\widetilde{\mathbf{Q}})\leq P_T
		}}
		\mathcal{I}(\widetilde{\mathbf{Q}}).
		\label{eq:ergodic_capacity_diagonal_Q}
	\end{align}
\end{lemma}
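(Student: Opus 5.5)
The plan has two parts: justify \eqref{eq:actual_ergodic_capacity} as the ergodic capacity of the Gaussian channel, and then reduce the optimization to diagonal covariances.

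For the first part, I would use the standard coding argument for fading channels with perfect CSIR and statistical CSIT, treating $(\mathbf{y},\mathbf{H})$ as the channel output. By the chain rule and independence of $\mathbf{x}$ and $\mathbf{H}$, we have $I(\mathbf{x};\mathbf{y},\mathbf{H})=I(\mathbf{x};\mathbf{y}\mid\mathbf{H})$. For a fixed input covariance, the conditional mutual information is maximized by a proper Gaussian input, which gives $\mathcal{I}(\mathbf{Q})$. Ergodicity of the block-fading process then gives achievability and the converse.

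The infinite-dimensional setting needs care, and I would handle it by finite truncation followed by a limit. Let $\mathbf{H}^{(K)}$ be the leading $K\times K$ block. Since $\operatorname{Tr}(\boldsymbol{\Lambda})=\operatorname{Tr}(\mathcal{P})<\infty$, we get $\mathbb{E}\|\mathbf{H}\|_{\mathrm{HS}}^2=(\operatorname{Tr}\boldsymbol{\Lambda})^2<\infty$. Hence $\mathbf{H}\mathbf{Q}\mathbf{H}^\dagger$ is almost surely trace class for any bounded $\mathbf{Q}$, and the Fredholm determinant is well defined. The bound $\log\det(\mathbf{I}+\mathbf{A})\le \operatorname{Tr}\mathbf{A}$ supplies an integrable dominating function $P_T\|\mathbf{H}\|^2/\sigma_z^2$. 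Dominated convergence then shows that the finite-$K$ mutual informations and capacities converge to their infinite-dimensional counterparts. This identifies $C_{\mathrm{erg}}$ as the supremum in \eqref{eq:actual_ergodic_capacity}.

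For the diagonal reduction, I would use the bi-unitary invariance of $\mathbf{H}_w$. I expect this step to be the crux, because the Kronecker structure \eqref{eq:H_factored} prevents full unitary invariance of $\mathbf{H}$. Let $\mathbf{D}$ be any diagonal sign matrix with entries $\pm1$; work with $K$-truncations, where there are finitely many. Since $\mathbf{D}$ commutes with $\boldsymbol{\Lambda}^{1/2}$ and $\mathbf{H}_w\mathbf{D}\overset{d}{=}\mathbf{H}_w$, we get $\mathbf{H}\mathbf{D}\overset{d}{=}\mathbf{H}$. Therefore $\mathcal{I}(\mathbf{D}\mathbf{Q}\mathbf{D})=\mathcal{I}(\mathbf{Q})$. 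The function $\mathbf{Q}\mapsto\log\det(\mathbf{I}+\sigma_z^{-2}\mathbf{H}\mathbf{Q}\mathbf{H}^\dagger)$ is concave, so $\mathcal{I}$ is concave. Averaging over all $2^K$ sign matrices gives $\widetilde{\mathbf{Q}}=2^{-K}\sum_{\mathbf{D}}\mathbf{D}\mathbf{Q}\mathbf{D}=\operatorname{diag}(\mathbf{Q})$. Jensen's inequality then yields $\mathcal{I}(\widetilde{\mathbf{Q}})\ge\mathcal{I}(\mathbf{Q})$, and $\widetilde{\mathbf{Q}}$ has the same trace and is positive semidefinite. For the infinite-dimensional case, I would apply this to the $K$-truncations $\mathbf{Q}^{(K)}$ and pass to the limit $K\to\infty$ using the continuity argument above. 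The lower semicontinuity and monotone approximation of $\mathcal{I}$ along $\mathbf{Q}^{(K)}\to\mathbf{Q}$ follow from the same Hilbert--Schmidt domination.

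This proves \eqref{eq:ergodic_capacity_diagonal_Q}. The reverse inequality is trivial, since diagonal covariances are a subset of the feasible set.
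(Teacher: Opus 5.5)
Your proposal is correct and is essentially the paper's proof. For the capacity formula, the paper only verifies well-posedness, using the same Hilbert--Schmidt bound and the same $\log\det(\mathbf{I}+\mathbf{A})\le\operatorname{Tr}\mathbf{A}$ domination you use; your coding-theorem sketch goes beyond what the paper attempts. The diagonal reduction is the same covariance symmetrization: diagonal unitaries commute with $\boldsymbol{\Lambda}$ and leave $\mathbf{H}_w$ invariant in distribution, so averaging $\mathbf{D}\mathbf{Q}\mathbf{D}^{\dagger}$ yields $\operatorname{diag}(\mathbf{Q})$, and concavity gives $\mathcal{I}(\operatorname{diag}(\mathbf{Q}))\ge\mathcal{I}(\mathbf{Q})$.

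The one difference is that the paper averages over i.i.d.\ uniform phases directly in infinite dimensions. You instead average over the finite sign group on $K$-truncations and then pass to the limit, which turns Jensen into a plain finite concavity inequality. In that limit, take $\liminf_K\mathcal{I}(\mathbf{Q}^{(K)})\ge\mathcal{I}(\mathbf{Q})$ from continuity via dominated convergence, and use Loewner monotonicity only on the diagonal side, $\operatorname{diag}(\mathbf{Q}^{(K)})\preceq\operatorname{diag}(\mathbf{Q})$. The reason is that $\mathbf{Q}^{(K)}$ itself need not be monotone in $K$ when $\mathbf{Q}$ is non-diagonal.
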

The derivation of the discrete representation in~\eqref{eq:equivalent_channel} and the proof of Lemma~\ref{lemma:continuous_capacity} is given in Appendix~\ref{appendix:channel_representations}.

We note that \eqref{eq:actual_ergodic_capacity} is the actual ergodic capacity considered in the remainder of this paper rather than a capacity upper bound based on Jensen's inequality. Importantly, \eqref{eq:ergodic_capacity_diagonal_Q} does
not require the optimal power allocation $\widetilde{\mathbf{Q}}$ to
be determined explicitly; only the non-negativity and total-power
constraint will be required in the subsequent convergence analysis.

\subsection{PSWF-Truncated Random Channel}
We now construct the finite-dimensional random channel associated
with the tensor-PSWF truncation developed in
Section~\ref{sec:two dimensional case}. Recall that the retained
spatial subspace is spanned by the tensor-product PSWFs
$\{\phi_p(x)\phi_q(y)\}_{0\leq p,q<N_{1D}}$, with
$N=N_{1D}^2$. 

Under the same separable Gaussian scattering model used for the
continuous channel, we first propose that the projected channel can be represented in the eigenbasis of $\mathbf{M}_N$, without loss of distribution, as 
\begin{equation}
	\mathbf{H}_N
	=
	\boldsymbol{\Lambda}_N^{1/2}
	\mathbf{H}_{w,N}
	\boldsymbol{\Lambda}_N^{1/2},
	\label{eq:HN_eigenbasis}
\end{equation}
where 
\begin{equation}
	\boldsymbol{\Lambda}_N
	\triangleq
	\operatorname{diag}
	\left(
	\mu_0(\mathbf{M}_N),
	\mu_1(\mathbf{M}_N),
	\ldots,
	\mu_{N-1}(\mathbf{M}_N)
	\right),
	\label{eq:Lambda_N_definition}
\end{equation}
and $\mathbf{H}_{w,N}\in\mathbb{C}^{N\times N}$ has i.i.d. $\mathcal{CN}(0,1)$ entries. We now have the following ergodic capacity result for the projected channel.
\begin{lemma}
	\label{lemma:truncated_capacity}
	Let $$\mathcal{I}_N(\mathbf{Q}_N) = \mathbb{E}
	\left[
	\log_2
	\det
	\left(
	\mathbf{I}_N
	+
	\frac{1}{\sigma_z^2}
	\mathbf{H}_N
	\mathbf{Q}_N
	\mathbf{H}_N^\dagger
	\right)
	\right], $$
	where $\mathbf{Q}_N$ denotes the truncated transmit covariance matrix.
	For the tensor-PSWF projected channel with $N=N_{1D}^2$, 
	the actual ergodic capacity is
	\begin{align}
		C_{\mathrm{erg}}^{(N)}
		\triangleq
		\max_{\substack{
				\mathbf{Q}_N\succeq0\\
				\operatorname{Tr}(\mathbf{Q}_N)\leq P_T
		}}
		\mathcal{I}_N(\mathbf{Q}_N).
		\label{eq:finite_ergodic_capacity}
	\end{align}
	Moreover, the maximization can be restricted, without loss of
	optimality, to transmit covariance matrices diagonal in the
	eigenbasis of $\mathbf{M}_N$. That is
	\begin{align}
		C_{\mathrm{erg}}^{(N)}
		= 
		\max_{\substack{
					\widetilde{\mathbf{Q}}_N\succeq0,\;
					\widetilde{\mathbf{Q}}_N\ {\rm diagonal}\\
					\operatorname{Tr}(\widetilde{\mathbf{Q}}_N)\leq P_T
		}}
		\mathcal{I}_N(\widetilde{\mathbf{Q}}_N).
		\label{eq:finite_capacity_diagonal_Q}
	\end{align}
\end{lemma}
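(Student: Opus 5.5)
The proof has two parts: (i) the supremum in \eqref{eq:finite_ergodic_capacity} is attained, so it is a maximum and equals the capacity; (ii) a diagonal covariance is optimal.

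For (i), I would first note that, conditioned on $\mathbf{H}_N$ known at the receiver, the channel $\mathbf{y}_N=\mathbf{H}_N\mathbf{x}_N+\mathbf{z}_N$ is a finite-dimensional Gaussian MIMO channel. The standard argument then applies. Conditional mutual information is maximized by a circularly symmetric Gaussian input. By the chain rule $I(\mathbf{x};\mathbf{y},\mathbf{H}_N)=I(\mathbf{x};\mathbf{y}\mid\mathbf{H}_N)$, and the input is independent of the channel because the transmitter has only statistical CSIT. Together these show that the ergodic capacity is the supremum of $\mathcal{I}_N(\mathbf{Q}_N)$ over the power constraint set.

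The feasible set $\{\mathbf{Q}_N\succeq0,\operatorname{Tr}(\mathbf{Q}_N)\le P_T\}$ is compact in $\mathbb{C}^{N\times N}$. The map $\mathbf{Q}_N\mapsto\mathcal{I}_N(\mathbf{Q}_N)$ is continuous and concave: it is an expectation of the concave function $\log\det$ composed with an affine map. Continuity follows from dominated convergence using the bound $\log_2\det(\mathbf{I}+\mathbf{H}_N\mathbf{Q}_N\mathbf{H}_N^\dagger/\sigma_z^2)\le N\log_2(1+P_T\|\mathbf{H}_N\|^2/\sigma_z^2)$, whose expectation is finite since $\|\mathbf{H}_N\|^2\le\|\mathbf{H}_N\|_{\mathrm{HS}}^2$ has finite mean. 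Hence the maximum is attained.

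For (ii), I would use a sign-flip symmetrization, exploiting invariance of the law of $\mathbf{H}_N$ in \eqref{eq:HN_eigenbasis}. Take any diagonal unitary $\mathbf{D}=\operatorname{diag}(\pm1)$. Since $\boldsymbol{\Lambda}_N^{1/2}$ is diagonal, $\mathbf{H}_N\mathbf{D}=\boldsymbol{\Lambda}_N^{1/2}(\mathbf{H}_{w,N}\mathbf{D})\boldsymbol{\Lambda}_N^{1/2}$. Because $\mathbf{H}_{w,N}\mathbf{D}$ has the same i.i.d. $\mathcal{CN}(0,1)$ law as $\mathbf{H}_{w,N}$, this gives $\mathbf{H}_N\mathbf{D}\overset{d}{=}\mathbf{H}_N$, and therefore $\mathcal{I}_N(\mathbf{D}\mathbf{Q}_N\mathbf{D}^\dagger)=\mathcal{I}_N(\mathbf{Q}_N)$.

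Averaging over all $2^N$ such sign matrices, uniformly, gives $\frac{1}{2^N}\sum_{\mathbf{D}}\mathbf{D}\mathbf{Q}_N\mathbf{D}^\dagger=\operatorname{diag}(\mathbf{Q}_N)$. This matrix is positive semidefinite and has the same trace as $\mathbf{Q}_N$, so it is feasible. By concavity (Jensen over a finite average), $\mathcal{I}_N(\operatorname{diag}(\mathbf{Q}_N))\ge\mathcal{I}_N(\mathbf{Q}_N)$. Applying this to the maximizer from (i) yields \eqref{eq:finite_capacity_diagonal_Q}. The reverse inequality is trivial because diagonal matrices are a subset of the feasible set.

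The main obstacle is justifying the representation \eqref{eq:HN_eigenbasis} itself. One has to show that projecting the separable Gaussian channel onto the tensor-PSWF subspace gives receive and transmit correlation matrices both equal to $\mathbf{M}_N$. Then one rotates to its real orthogonal eigenbasis $\mathbf{U}_N$ on both sides; unitary rotations of input and output leave $\log\det$ and the trace constraint unchanged. This yields $\mathbf{M}_N^{1/2}\mathbf{H}_{w}'\mathbf{M}_N^{1/2}$ with $\mathbf{U}_N^\dagger\mathbf{H}_w'\mathbf{U}_N\overset{d}{=}\mathbf{H}_{w,N}$ by unitary invariance. I would verify this from \eqref{eq:separable_correlation} together with \eqref{eq:M_pq_jl} before applying the symmetrization above.
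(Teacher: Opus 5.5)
Your proposal is correct and follows essentially the paper's route: project the separable Gaussian kernel onto the tensor-PSWF subspace so that the receive and transmit correlations both equal $\mathbf{M}_N$, then rotate to its eigenbasis using unitary invariance of $\mathbf{H}_{w,N}$. After that, symmetrize the covariance with diagonal unitaries and invoke concavity of $\mathcal{I}_N$ via Jensen. The differences are minor: you average over the finite group of $2^N$ sign matrices instead of the paper's independent uniform phases $\mathbf{D}_{\boldsymbol{\theta}}$, and you also justify the Gaussian-input/chain-rule form of the capacity and the attainment of the maximum (compactness plus dominated convergence), which the paper takes as given.
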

The derivation for the projected channel representation in~\eqref{eq:HN_eigenbasis} and the proof of Lemma~\ref{lemma:truncated_capacity} is given in Appendix~\ref{appendix:channel_representations}.

Hence, $\mathbf{H}_N$ is not an independently introduced
finite-dimensional surrogate, but the Gaussian random channel induced
by projecting the continuous square--disk channel onto the same
tensor-PSWF subspace used to construct $\mathbf{M}_N$. Consequently,
comparing $C_{\mathrm{erg}}^{(N)}$ with $C_{\mathrm{erg}}$ directly
quantifies the information-rate loss caused by the spatial truncation
studied in Section~\ref{sec:two dimensional case}.

\subsection{Ergodic-Capacity Gap and Convergence}
Having established the ergodic capacities of the continuous
channel and the tensor-PSWF truncated channel in the preceding
subsections, we now quantify the capacity loss induced by the
finite-dimensional spatial truncation. The key connection to the
spectral approximation developed in Section~\ref{sec:two dimensional case} is the trace defect
$\operatorname{Tr}(\mathcal{P})-\operatorname{Tr}(\mathbf{M}_N)$,
which exactly characterizes the total spectral approximation error
in Theorem~\ref{theorem:approx_error}. As shown below, the same quantity directly controls
the gap between the two ergodic capacities.

\begin{theorem}
	\label{theorem:capacity_convergence}
	Let $C_{\mathrm{erg}}$ denote the ergodic capacity of the
	continuous symmetric square--disk channel, and let
	$C_{\mathrm{erg}}^{(N)}$ denote the ergodic capacity of the
	corresponding tensor-PSWF truncated channel with
	$N=N_{1D}^2$. Under perfect instantaneous CSIR and
	statistical CSIT, the capacity loss induced by the spatial
	truncation satisfies
	\begin{align}
		&0
		\leq
		C_{\mathrm{erg}}-C_{\mathrm{erg}}^{(N)}\leq\nonumber \\
		&\qquad\frac{\rho}{\ln 2}
		\left[
		\mu_0(\mathcal{P})
		+
		\operatorname{Tr}(\mathbf{M}_N)
		\right]
		\left[
		\operatorname{Tr}(\mathcal{P})
		-
		\operatorname{Tr}(\mathbf{M}_N)
		\right].
		\label{eq:capacity_trace_defect_tight}
	\end{align}
\end{theorem}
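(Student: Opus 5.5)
The plan is to compare the two capacities through a single eigenvalue sequence: zero-pad the spectrum of $\mathbf{M}_N$ to infinite length and use monotonicity of the mutual information in the channel eigenvalues. Let $\tilde\mu_n=\mu_n(\mathbf{M}_N)$ for $n<N$ and $\tilde\mu_n=0$ for $n\ge N$, and set $\widetilde{\boldsymbol{\Lambda}}=\operatorname{diag}(\tilde\mu_0,\tilde\mu_1,\ldots)$.

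Step 1 is the interlacing inequality $\mu_n(\mathbf{M}_N)\le\mu_n(\mathcal{P})$ for $n<N$. It holds because $\mathbf{M}_N$ is the compression of $\mathcal{P}$ onto the tensor-PSWF subspace, so it follows from the max--min principle already used for Theorem~\ref{theorem:approx_error}. Consequently every difference $d_n\triangleq\mu_n(\mathcal{P})-\tilde\mu_n$ is nonnegative, and
$$\textstyle\sum_n d_n=\operatorname{Tr}(\mathcal{P})-\operatorname{Tr}(\mathbf{M}_N),$$
which is the trace-defect identity of Theorem~\ref{theorem:approx_error}. In particular $\max_n d_n$ is at most the trace defect.

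Step 2 relates $C_{\mathrm{erg}}^{(N)}$ to the padded channel. For a diagonal $\mathbf{Q}=\operatorname{diag}(q_m)$, write
$$\mathcal{I}(\mathbf{Q};\mathbf{A},\mathbf{B})=\mathbb{E}\log_2\det\Big(\mathbf{I}+\tfrac{1}{\sigma_z^2}\mathbf{A}^{1/2}\mathbf{H}_w\mathbf{B}^{1/2}\mathbf{Q}\mathbf{B}^{1/2}\mathbf{H}_w^\dagger\mathbf{A}^{1/2}\Big)$$
for diagonal receive and transmit weights $\mathbf{A},\mathbf{B}$. The rows and columns with $\tilde\mu_n=0$ drop out of the determinant. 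Since the leading $N\times N$ block of $\mathbf{H}_w$ has i.i.d. $\mathcal{CN}(0,1)$ entries, it has the same distribution as $\mathbf{H}_{w,N}$, so $\mathcal{I}(\mathbf{Q};\widetilde{\boldsymbol{\Lambda}},\widetilde{\boldsymbol{\Lambda}})=\mathcal{I}_N(\mathbf{Q}_N)$, where $\mathbf{Q}_N$ is the leading block of $\mathbf{Q}$ and satisfies $\operatorname{Tr}(\mathbf{Q}_N)\le P_T$.

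Step 3 gives the lower bound $C_{\mathrm{erg}}\ge C_{\mathrm{erg}}^{(N)}$. Take the optimal diagonal $\widetilde{\mathbf{Q}}_N$ from Lemma~\ref{lemma:truncated_capacity} and zero-pad it. The map $\mathbf{A}\mapsto\det(\mathbf{I}+\mathbf{A}^{1/2}\mathbf{K}\mathbf{A}^{1/2})$ is monotone in $\mathbf{A}\succeq0$. Also, $\mathbf{B}^{1/2}\mathbf{Q}\mathbf{B}^{1/2}$ is diagonal with entries $q_m\mathbf{B}_{mm}$, so it increases in the Loewner order as $\mathbf{B}$ increases. Replacing $\widetilde{\boldsymbol{\Lambda}}$ by $\boldsymbol{\Lambda}$ on both sides therefore cannot decrease $\mathcal{I}$, and Lemma~\ref{lemma:continuous_capacity} gives the claim.

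Step 4 is the upper bound, obtained by swapping one side at a time. Fix $\varepsilon>0$ and an $\varepsilon$-optimal diagonal $\mathbf{Q}$ for $C_{\mathrm{erg}}$, which exists by \eqref{eq:ergodic_capacity_diagonal_Q}. The key tool is that for $\mathbf{X}\succeq\mathbf{Y}\succeq0$ of trace class,
$$\ln\det(\mathbf{I}+\mathbf{X})-\ln\det(\mathbf{I}+\mathbf{Y})\le\operatorname{Tr}\big((\mathbf{I}+\mathbf{Y})^{-1}(\mathbf{X}-\mathbf{Y})\big)\le\operatorname{Tr}(\mathbf{X}-\mathbf{Y}),$$
which follows from concavity of $\ln\det$ and holds in the Fredholm sense because $\operatorname{Tr}(\mathcal{P})<\infty$.

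First replace the receive weight $\boldsymbol{\Lambda}$ by $\widetilde{\boldsymbol{\Lambda}}$. Using $\mathbb{E}[\mathbf{H}_w\mathbf{D}\mathbf{H}_w^\dagger]=\operatorname{Tr}(\mathbf{D})\mathbf{I}$, the expected trace gap is
$$\tfrac{1}{\sigma_z^2}\operatorname{Tr}(\boldsymbol{\Lambda}-\widetilde{\boldsymbol{\Lambda}})\sum_m q_m\mu_m(\mathcal{P})\le\rho\,\mu_0(\mathcal{P})\big[\operatorname{Tr}(\mathcal{P})-\operatorname{Tr}(\mathbf{M}_N)\big].$$
Next, with receive weight $\widetilde{\boldsymbol{\Lambda}}$, replace the transmit weight. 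The gap is
$$\tfrac{1}{\sigma_z^2}\operatorname{Tr}(\widetilde{\boldsymbol{\Lambda}})\sum_m q_m d_m\le\rho\operatorname{Tr}(\mathbf{M}_N)\max_m d_m\le\rho\operatorname{Tr}(\mathbf{M}_N)\big[\operatorname{Tr}(\mathcal{P})-\operatorname{Tr}(\mathbf{M}_N)\big].$$
Dividing by $\ln2$ and combining with Step 2 gives
$$C_{\mathrm{erg}}-\varepsilon\le C_{\mathrm{erg}}^{(N)}+\tfrac{\rho}{\ln2}\big[\mu_0(\mathcal{P})+\operatorname{Tr}(\mathbf{M}_N)\big]\big[\operatorname{Tr}(\mathcal{P})-\operatorname{Tr}(\mathbf{M}_N)\big].$$
Letting $\varepsilon\to0$ yields \eqref{eq:capacity_trace_defect_tight}.

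The main obstacle is the ordering of the two swaps and the requirement $\mathbf{X}\succeq\mathbf{Y}$ at each stage. This ordering is what produces the factor $\mu_0(\mathcal{P})+\operatorname{Tr}(\mathbf{M}_N)$ rather than, say, $2\operatorname{Tr}(\mathcal{P})$. It also relies on the nonnegativity from interlacing, both to keep the Loewner order valid and to bound $\max_m d_m$ by the trace defect. Justifying the infinite-dimensional determinant and the exchange of expectation and trace is routine given that $\operatorname{Tr}(\mathcal{P})$ is finite.
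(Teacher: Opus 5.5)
Your architecture is the same as the paper's. You zero-pad the spectrum of $\mathbf M_N$ and use the Ritz ordering $0\le\mu_n(\mathbf M_N)\le\mu_n(\mathcal P)$. You couple both channels through one Gaussian array, swap the receive weight first and the transmit weight second, and bound each log-det gap by an expected trace. You then pass to capacities over a common set of diagonal allocations. Steps 1--3, the transmit swap, your constants, and your remark on why this order of swaps yields $\mu_0(\mathcal P)+\operatorname{Tr}(\mathbf M_N)$ are all fine.

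The receive swap, however, fails as written. In your parametrization the receive weight acts as an outer congruence: $\mathbf X=\boldsymbol\Lambda^{1/2}\mathbf G\boldsymbol\Lambda^{1/2}$ and $\mathbf Y=\widetilde{\boldsymbol\Lambda}^{1/2}\mathbf G\widetilde{\boldsymbol\Lambda}^{1/2}$, with $\mathbf G=\sigma_z^{-2}\mathbf H_w\boldsymbol\Lambda^{1/2}\mathbf Q\boldsymbol\Lambda^{1/2}\mathbf H_w^\dagger$. Here $\boldsymbol\Lambda\succeq\widetilde{\boldsymbol\Lambda}$ does not imply $\mathbf X\succeq\mathbf Y$. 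Take $n\ge N$, so that $\tilde\mu_n=0$. The principal $2\times 2$ minor of $\mathbf X-\mathbf Y$ on the indices $\{0,n\}$ is $\mu_n(\mathcal P)\bigl[d_0G_{00}G_{nn}-\mu_0(\mathcal P)|G_{0n}|^2\bigr]$. For a rank-one $\mathbf Q$ one has $|G_{0n}|^2=G_{00}G_{nn}$, so this minor equals $-\mu_n(\mathcal P)\mu_0(\mathbf M_N)G_{00}G_{nn}<0$ almost surely. Hence the second inequality of your key tool, $\operatorname{Tr}((\mathbf I+\mathbf Y)^{-1}(\mathbf X-\mathbf Y))\le\operatorname{Tr}(\mathbf X-\mathbf Y)$, is not available at this stage. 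Your claim that interlacing keeps the Loewner order valid at each stage is true only for the transmit swap, where the weight sits between $\mathbf H_w$ and $\mathbf H_w^\dagger$.

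The missing step is the one the paper uses. Before swapping the receive weight, pass to the Gram operator via the Fredholm identity $\det(\mathbf I+\mathbf K\mathbf K^\dagger)=\det(\mathbf I+\mathbf K^\dagger\mathbf K)$, with $\mathbf K=\sigma_z^{-1}\mathbf H\widetilde{\mathbf Q}^{1/2}$, so that $\mathbf K\mathbf K^\dagger$ is your $\mathbf X$. In $\mathbf K^\dagger\mathbf K$ the receive weight sits in the middle. With the common array $\{w_{n,m}\}$ you then get the exact identity $\mathbf K^\dagger\mathbf K-\mathbf K_{r,N}^\dagger\mathbf K_{r,N}=\mathbf L_{r,N}^\dagger\mathbf L_{r,N}\succeq 0$, where $\mathbf L_{r,N}$ carries the weights $\sqrt{d_n}$ on the receive side. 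Traces are unchanged by this passage, so your expected-trace computation and the bound $\rho\,\mu_0(\mathcal P)\bigl[\operatorname{Tr}(\mathcal P)-\operatorname{Tr}(\mathbf M_N)\bigr]$ then go through verbatim. The monotonicity you invoke in Step 3 is itself a consequence of this identity, so you already had the ingredient.

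Implement this at the level of the Hilbert--Schmidt operator $\mathbf K$. Do not write it as $\det(\mathbf I+\boldsymbol\Lambda^{1/2}\mathbf G\boldsymbol\Lambda^{1/2})=\det(\mathbf I+\mathbf G^{1/2}\boldsymbol\Lambda\mathbf G^{1/2})$. In infinite dimensions $\mathbf G$ is almost surely not even bounded, since its diagonal entries are i.i.d.\ with unbounded support. By contrast, $\mathbf K$ is Hilbert--Schmidt almost surely, which is what makes the Fredholm determinants legitimate.
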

\begin{proof}
	See Appendix~\ref{appendix:capacity_convergence}.
\end{proof}
\begin{remark}
	Since
	$\mu_0(\mathcal{P})\leq1$ and
	$\operatorname{Tr}(\mathbf{M}_N)
	\leq\operatorname{Tr}(\mathcal{P})=c^2/\pi$,
	\eqref{eq:capacity_trace_defect_tight} further yields
	\begin{align}
		0
		\leq
		C_{\mathrm{erg}}-C_{\mathrm{erg}}^{(N)}
		&\leq
		\frac{\rho}{\ln 2}
		\left(
		1+\frac{c^2}{\pi}
		\right)
		\left[
		\operatorname{Tr}(\mathcal{P})
		-
		\operatorname{Tr}(\mathbf{M}_N)
		\right].
		\label{eq:capacity_trace_defect}
	\end{align}
	Moreover, when $N_{1D} > ec/4$, as required in Theorem~\ref{theorem:approx_error}, the capacity gap admits the explicit non-asymptotic upper bound
	\begin{align}
		C_{\mathrm{erg}}-C_{\mathrm{erg}}^{(N)}
		&<
		\frac{\rho}{\ln 2}
		\left(
		1+\frac{c^2}{\pi}
		\right)
		C_1(N_{1D},c)
		e^{-2N_{1D}
			\ln\left(\frac{4N_{1D}}{ce}\right)}.
		\label{eq:capacity_convergence}
	\end{align}
\end{remark}
Theorem~\ref{theorem:capacity_convergence} shows that the ergodic-capacity loss is controlled
directly by the same trace defect governing the spectral
approximation in Theorem~\ref{theorem:approx_error}.
Combining this capacity-gap bound with Theorem~\ref{theorem:approx_error} yields, once the 1D truncation threshold is exceeded, the explicit super-exponential convergence rate in~\eqref{eq:capacity_convergence}. 
\begin{remark}
	Since the proof of Theorem~\ref{theorem:capacity_convergence} depends on the spectral ordering and trace-defect structure rather than on the square geometry itself, the capacity-gap upper bound in~\eqref{eq:capacity_trace_defect_tight} also applies to the rectangular extension with $\mathcal{P}$ and  $\mathbf{M}_N$ replaced by $\mathcal{P}_{c_x, c_y}$ and $\mathbf{M}_{N_x, N_y}$, respectively.
\end{remark}

\section{Quadrature Rules and Numerical Results}
\label{sec:Quadrature Rules and Numerical Results}
\subsection{Quadrature Rules}
We now consider the quadrature error incurred when computing $M_{(pq)(j\ell)}$ in (\ref{eq:even_function_2}), which involves a two-dimensional integral in polar coordinates. To evaluate $M_{(pq)(j\ell)}$ with high accuracy and efficiency, we establish the following propositions:
\begin{proposition}
	\label{prop:radial_quadrature}
	When evaluating the inner radial integral of $M_{(pq)(j\ell)}$ in \eqref{eq:even_function_2} using Gauss-Legendre quadrature (GLQ), once the number of quadrature nodes $M_r$ exceeds $\frac{e\sqrt{2}c}{4}$, the quadrature error exhibits super-exponential convergence.
\end{proposition}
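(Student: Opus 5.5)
The plan is to treat the inner radial integrand, for fixed $\theta$, as an entire function of $r$ and to bound the Gauss--Legendre error through its Chebyshev/Legendre coefficient decay, or equivalently through the classical derivative-based remainder.

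\textbf{Reduction to a polynomial-approximation problem.} After the affine map $r=\frac{c}{2}(1+t)$, $t\in[-1,1]$, the radial integral becomes $\int_{-1}^{1} g_\theta(t)\,dt$ with
\[
g_\theta(t)=\frac{c}{2}\,\phi_p\!\Big(\tfrac{r\cos\theta}{c}\Big)\phi_q\!\Big(\tfrac{r\sin\theta}{c}\Big)\phi_j\!\Big(\tfrac{r\cos\theta}{c}\Big)\phi_\ell\!\Big(\tfrac{r\sin\theta}{c}\Big)\,r .
\]
Each $\phi_n$ is the globally defined bandlimited continuation with band $c$. Hence $\phi_n(ar/c)$, viewed as a function of $r$, is bandlimited with band $|a|$ for $a\in\{\cos\theta,\sin\theta\}$. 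The product of the four factors is therefore an entire function of exponential type at most $2|\cos\theta|+2|\sin\theta|\le 2\sqrt{2}$ in $r$. The factor $r$ does not change the type. In the variable $t$, the type becomes at most $\sqrt{2}\,c$. This is where the constant $\sqrt{2}c$ comes from: the worst case is $\theta=\pi/4$.

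\textbf{Error bound.} GLQ with $M_r$ nodes integrates polynomials of degree $\le 2M_r-1$ exactly. The standard remainder is
\[
E_{M_r}=\frac{2^{2M_r+1}(M_r!)^4}{(2M_r+1)\,[(2M_r)!]^3}\,g^{(2M_r)}(\xi).
\]
For the derivative, Bernstein's inequality for functions of exponential type $\tau=\sqrt{2}c$ bounded on the real line gives $\|g^{(k)}\|_\infty\le \tau^k\|g\|_\infty$. Here $\|g\|_\infty$ is bounded using $|\phi_n(x)|\le$ a constant on $\mathbb{R}$, which follows from \eqref{eq:Fourier_PSWF} and Cauchy--Schwarz in terms of $\mu_n$, or through the factor $\sqrt{\mu_p\mu_q\mu_j\mu_\ell}$. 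Combining the two and using Stirling's formula on the factorial ratio, which behaves like $\pi\,2^{-4M_r}/(2M_r)!$ up to polynomial factors, yields
\[
|E_{M_r}|\lesssim \frac{(\sqrt{2}c)^{2M_r}}{2^{4M_r}(2M_r)!}\lesssim \Big(\frac{e\sqrt{2}c}{8M_r}\Big)^{2M_r}=e^{-2M_r\ln\left(\frac{8M_r}{e\sqrt{2}c}\right)} .
\]
This expression is super-exponentially decaying once $8M_r>e\sqrt2 c$.

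\textbf{Matching the stated threshold.} To match the threshold $e\sqrt2 c/4$ (a factor of 2 more conservative), the factor-of-two convention needs checking. One option is to avoid the $\frac{c}{2}$ rescaling and work directly on $[0,c]$ with an interval-length bound. Another is to parallel the 1D envelope $\mu_n<\frac{c}{4n}e^{-2n\ln(4n/(ce))}$ from \eqref{eq:inequality_n_c}, with $c$ replaced by $\sqrt2c$. The second option is what I would actually present. It mirrors the paper's 1D argument: expand $g_\theta$ in Legendre polynomials, bound the coefficients of degree $\ge 2M_r$ by those of $e^{i\tau t}$, namely $\sqrt{\pi/(2\tau)}\,|J_{k+1/2}(\tau)|\le (\tau e/(2k))^{k}$ roughly, and note that GLQ kills all lower-degree terms. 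The resulting tail is $\sum_{k\ge 2M_r}(e\sqrt2c/(4k))^k$. Its ratio is below 1 exactly when $M_r>e\sqrt2c/4$ after a geometric-series bound, giving the claimed threshold.

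\textbf{Main obstacle.} The delicate step is controlling the product of four PSWF continuations uniformly in $\theta$ and in the indices. The Legendre coefficients of the product must be shown to obey the same $(e\tau/(2k))^k$ envelope. I would first try Paley--Wiener: write $g_\theta(t)=\int_{-\tau}^{\tau}\hat g(\omega)e^{i\omega t}d\omega$ with $\|\hat g\|_{L^1}$ finite. Then the Legendre coefficients are averages of those of $e^{i\omega t}$, each bounded by the spherical-Bessel envelope, which is monotone in $|\omega|\le\tau$.
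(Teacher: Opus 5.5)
Your skeleton matches the paper's. The radial integrand has exponential type at most $2(|\cos\theta|+|\sin\theta|)\le 2\sqrt{2}$ in $r$, hence at most $\sqrt{2}c$ after $r=\frac{c}{2}(1+t)$, with the worst case at $\theta=\pi/4$. The real difference is the last step. The paper imports the GLQ envelope $C(M_r)\big(\frac{e\sqrt{2}c}{4M_r}\big)^{2M_r}$ from \cite[Theorem~1]{Yan2026Computational}, whereas you derive it yourself. That is a legitimate route, but the factor-of-two ``mismatch'' you found is an arithmetic slip, not a convention issue. By the Wallis bound \eqref{eq:binomial_Wallis}, $(M_r!)^4/[(2M_r)!]^2=\binom{2M_r}{M_r}^{-2}<\pi(M_r+\tfrac12)\,16^{-M_r}$, so the Gauss--Legendre remainder constant satisfies
\[
\frac{2^{2M_r+1}(M_r!)^4}{(2M_r+1)[(2M_r)!]^3}<\frac{\pi}{4^{M_r}(2M_r)!}.
\]
So it behaves like $\pi\,2^{-2M_r}/(2M_r)!$, not $\pi\,2^{-4M_r}/(2M_r)!$. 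Combine this with Bernstein's inequality at type $\sqrt{2}c$ and Robbins' bound $(2M_r)!>\sqrt{4\pi M_r}\,(2M_r/e)^{2M_r}$. The result is exactly $\big(\frac{e\sqrt{2}c}{4M_r}\big)^{2M_r}$ up to a polynomial prefactor, which is the paper's envelope and threshold. Your first route therefore already proves the proposition, and the Legendre-coefficient detour is unnecessary. If you keep it anyway, two corrections apply. The tail should read $\sum_{k\ge 2M_r}(e\sqrt{2}c/(2k))^k$, consistent with your own envelope $(e\tau/(2k))^k$; with $4k$ you would reproduce the spurious threshold $e\sqrt{2}c/8$. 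Also, the relevant criterion is that the base at $k=2M_r$ drops below one, not the ratio of consecutive terms.

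Two details need tidying.
\begin{itemize}
\item Bernstein's inequality needs a bound on all of $\mathbb{R}$. Your justification via $\sup|\phi_n|$ does not supply this for $g$, because of the unbounded factor $r$. Apply Bernstein to the four-PSWF product $h$ instead, and handle the linear factor by Leibniz: $\frac{d^{2M_r}}{dt^{2M_r}}[(1+t)h]=(1+t)h^{(2M_r)}+2M_r\,h^{(2M_r-1)}$. This costs only a polynomial factor.
\item The continued PSWFs satisfy $|\phi_n(x)|\le\sqrt{2}/|\lambda_n|$ on $\mathbb{R}$, by \eqref{eq:PSWF_definition} and Cauchy--Schwarz. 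This is large for large $n$, but it is cancelled exactly by the prefactor $\sqrt{\mu_p\mu_q\mu_j\mu_\ell}/c^2=|\lambda_p\lambda_q\lambda_j\lambda_\ell|/(4\pi^2)$ in \eqref{eq:even_function_2}.
\end{itemize}

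Compared with the paper, your route gives a self-contained argument with explicit constants. It also gives a quadrature-error bound for $M_{(pq)(j\ell)}$ that is uniform in $\theta$ and in the indices, which the paper leaves implicit in $C(M_r)$. The paper's route gains brevity at the cost of depending on an external theorem.
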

\begin{proposition}
	\label{prop:angular_quadrature}
	When evaluating the outer angular integral of $M_{(pq)(j\ell)}$ in \eqref{eq:even_function_2} using the uniform trapezoidal rule, once the number of quadrature nodes $M_{\theta}$ exceeds $\sqrt{2}ec$, the quadrature error exhibits exponential convergence.
\end{proposition}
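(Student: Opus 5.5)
The plan is to treat the angular integral in \eqref{eq:even_function_2} as the integral over one period of a smooth $2\pi$-periodic function and to apply the classical error formula for the uniform trapezoidal rule on periodic integrands. Write
\[
g(\theta)=\int_0^c \phi_p\Big(\tfrac{r\cos\theta}{c}\Big)\phi_q\Big(\tfrac{r\sin\theta}{c}\Big)\phi_j\Big(\tfrac{r\cos\theta}{c}\Big)\phi_\ell\Big(\tfrac{r\sin\theta}{c}\Big)r\,dr ,
\]
and expand it in its Fourier series $g(\theta)=\sum_{m\in\mathbb{Z}}\hat g_m e^{im\theta}$. The $M_\theta$-node trapezoidal rule is exact on $e^{im\theta}$ unless $M_\theta\mid m$, $m\neq0$. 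The error is therefore bounded by $2\pi\sum_{k\neq0}|\hat g_{kM_\theta}|$, and the task reduces to bounding the Fourier coefficients of $g$ for large $|m|$.

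To bound $\hat g_m$, I would use the fact, recalled after \eqref{eq:PSWF_definition}, that each $\phi_n$ is an entire bandlimited function of exponential type $c$. For complex $z$ this gives a growth bound $|\phi_n(z)|\le \|\phi_n\|_{L^2(\mathbb{R})}\sqrt{c/\pi}\,e^{c|\operatorname{Im}z|}$, with the $L^2(\mathbb{R})$ norm equal to $\mu_n^{-1/2}$ by the normalization. I would then extend $g$ analytically to the strip $|\operatorname{Im}\theta|<\sigma$. There $|\cos\theta|,|\sin\theta|\le\cosh\sigma$, and the imaginary parts of $\cos\theta$ and $\sin\theta$ are bounded by $\sinh\sigma$, so the arguments $r\cos\theta/c$ and $r\sin\theta/c$ with $r\le c$ have imaginary parts at most $\sinh\sigma$.

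A cleaner route, which I would try first, is to expand the product of the four PSWFs around the combined phase. The integrand as a function of $\theta$ equals $F(r\cos\theta,r\sin\theta)$, where $F$ has a 2D Fourier transform supported in the square of side $4c$ (sums of the four supports). This is a superposition of plane waves $e^{i r(\xi\cos\theta+\eta\sin\theta)}$ with $|(\xi,\eta)|\le 2\sqrt2\,c$. Jacobi--Anger then gives Fourier coefficients of magnitude $|J_m(r\rho)|\le (r\rho/2)^{|m|}/|m|!$ with $r\rho\le 2\sqrt2\,c\cdot c/c$, i.e.\ effectively argument $2\sqrt2 c$ after normalization. Hence $|\hat g_m|\lesssim A\,(\sqrt2 c)^{|m|}/|m|!\le A\,(\sqrt2 ec/|m|)^{|m|}$ by Stirling, where $A$ collects $\mu$-normalization constants and the $r$-integral. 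Summing over $m=kM_\theta$, the error is geometric once $M_\theta>\sqrt2 ec$: the ratio $\sqrt2 ec/M_\theta<1$ gives a bound of the form $C(\sqrt2 ec/M_\theta)^{M_\theta}$, which is at least exponential decay in $M_\theta$. This matches the stated threshold, and the paper only claims exponential convergence.

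The main obstacle is the bookkeeping. It requires verifying that the relevant plane-wave magnitude is exactly $\sqrt2\,c$ after accounting for $r/c\le 1$, which sets the constant in the threshold, and controlling the Fourier-transform amplitudes of the product, which involve $\lambda_n^{-1}$ factors, by an $m$-independent constant. For the latter I would use \eqref{eq:Fourier_PSWF} together with $|\phi_n|$ bounded on $[-1,1]$. If Jacobi--Anger bookkeeping gets messy, I would fall back on the strip-analyticity bound, choosing $\sigma$ optimally (of order $\ln(M_\theta/(\sqrt2 c))$), which yields the same threshold up to constants.
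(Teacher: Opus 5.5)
Your proposal is correct and follows essentially the same route as the paper. Both arguments write the angular integrand as a superposition of plane waves whose combined wavenumber has magnitude at most $2\sqrt{2}c$ (the paper via $K_x=k_1+k_3$, $K_y=k_2+k_4$ with $r\le c$), expand with Jacobi--Anger, bound $|J_m|$ by $(A/2)^m/m!$ together with Robbins' Stirling bound to get the threshold $m>\sqrt{2}ec$, and conclude via the aliasing error of the periodic trapezoidal rule. Your bookkeeping is slightly tighter than the paper's, since you bound $|J_m(r\rho)|$ by the monotone envelope $(r\rho/2)^{|m|}/|m|!$ rather than by $J_m(A_{\max})$ itself, and you sum the aliased coefficients $\hat g_{kM_\theta}$ explicitly.
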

Detailed proofs for Propositions \ref{prop:radial_quadrature} and \ref{prop:angular_quadrature} are provided in Appendix~\ref{appendix:quadrature_nodes}.
\begin{remark}
	Leveraging the super-exponential and exponential convergence properties of the radial and angular integrals, respectively, we can choose $M_r$ and $M_\theta$ as
	$$
		M_r = \left\lfloor \frac{e\sqrt{2}c}{4}\right\rfloor + \tilde{M}_r, \quad M_\theta = \left\lfloor \sqrt{2}ec\right\rfloor  + \tilde{M}_\theta,
	$$
	where $\tilde{M}_r$ and $\tilde{M}_\theta$ are positive integer offsets beyond the corresponding analytical thresholds.
\end{remark}

\subsection{Numerical Results}
\begin{figure}[htbp]
	\centerline{\includegraphics[scale=0.38]{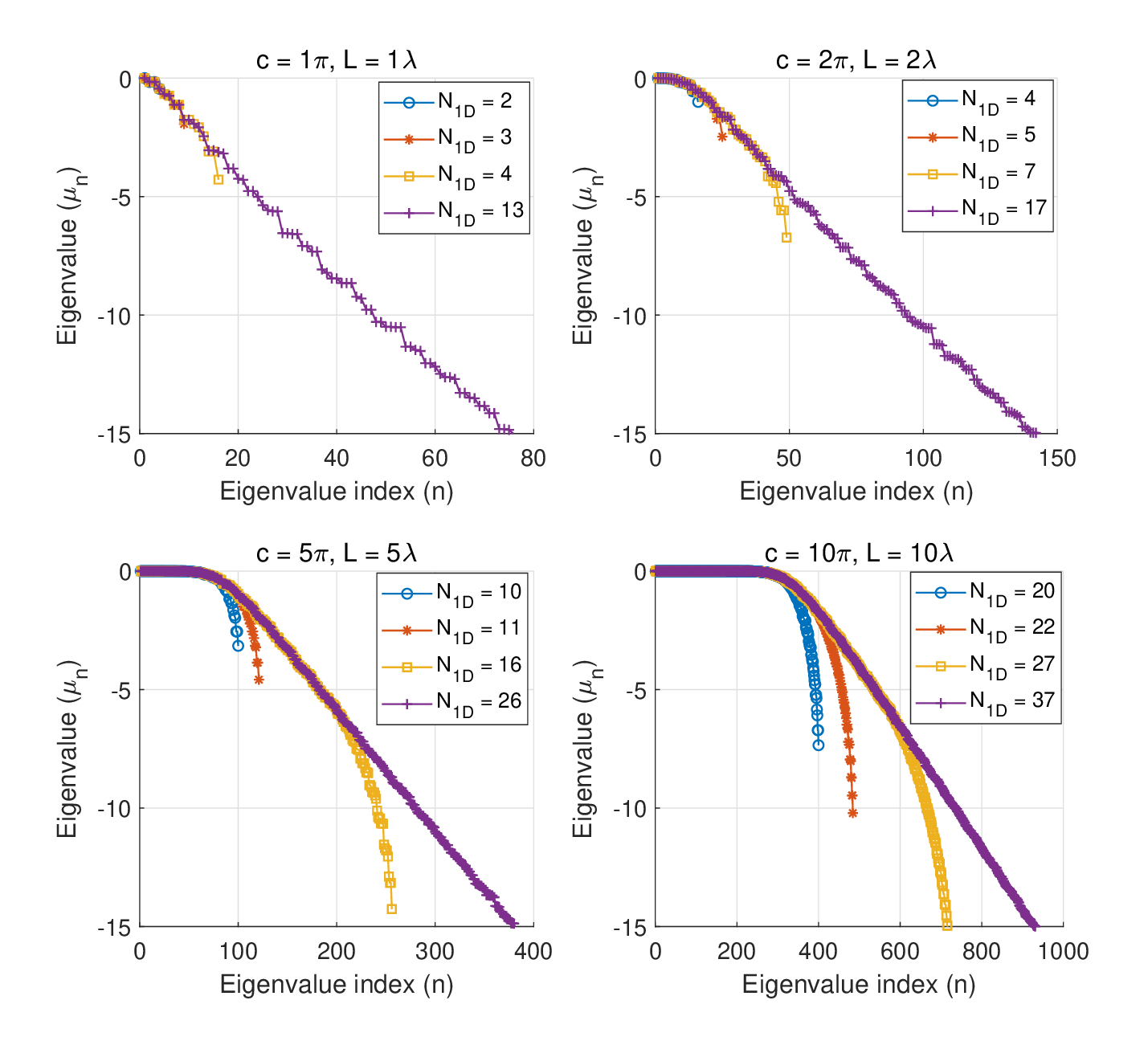}}
	\caption{Eigenvalue distribution under various settings of normalized aperture size and 1D truncation dimension $N_{1D}$.}
	\label{Fig:Eigenvalues_1D_2D_PSWFs}
\end{figure}
Figure \ref{Fig:Eigenvalues_1D_2D_PSWFs} illustrates the ordered eigenvalue distributions $\mu_n(\mathbf{M}_N)$ of the truncated sparse matrix under varying antenna apertures ($L = 1\lambda, 2\lambda, 5\lambda, 10\lambda$). For each configuration, we evaluate the discrete eigenvalue sequences reconstructed using different numbers of 1D PSWF basis functions, denoted by $N_{1D}$. Specifically, we benchmark the conventional 1D spatial DoF ($N_{1D} = 2L/\lambda$) against the derived 1D truncation threshold ($N_{1D} = \lfloor ec/4 \rfloor + 1$), alongside sufficiently large basis sets used as numerically converged references.

For each antenna aperture configuration in Fig.~\ref{Fig:Eigenvalues_1D_2D_PSWFs}, the first and second values of $N_{1D}$ are set to the conventional DoF $2L/\lambda$ and the derived threshold $\lfloor ec/4\rfloor + 1$. We observe that their numerical eigenvalues exhibit an abrupt, artificial drop, which is caused by the insufficient dimension of the subspace projection. As $N_{1D}$ continues to increase, this artificial truncation is effectively mitigated, and the discrete eigenvalues converge toward the numerically converged reference spectrum. When we focus on the purple lines with a sufficiently expanded eigenbasis, their descent rate gradually slows down as the eigenvalue index $n$ increases. This slowing descent is more pronounced in small normalized apertures (e.g., $L = 1\lambda, 2\lambda$) than large normalized apertures (e.g., $L = 10\lambda$) because the observable transition window is limited by machine precision. 
This observation is qualitatively consistent with the dimensionality-folding behavior described in Remark~\ref{remark:decaying_rate}, which yields a sub-exponential-in-$n$ upper envelope under the flattened 2D eigenvalue ordering.
Furthermore, for small normalized apertures (e.g., $L = 1\lambda$ and $L = 2\lambda$), the eigenvalue distributions display noticeable ``staircase'' patterns. This reflects the eigenvalue degeneracy inherent to the geometric symmetries of the square aperture and the circular wavenumber domain. As the aperture size increases, the modal density intensifies, and the macroscopic eigenvalue distribution smooths out. 
	
\begin{figure}[htbp]
	\centerline{\includegraphics[scale=0.37]{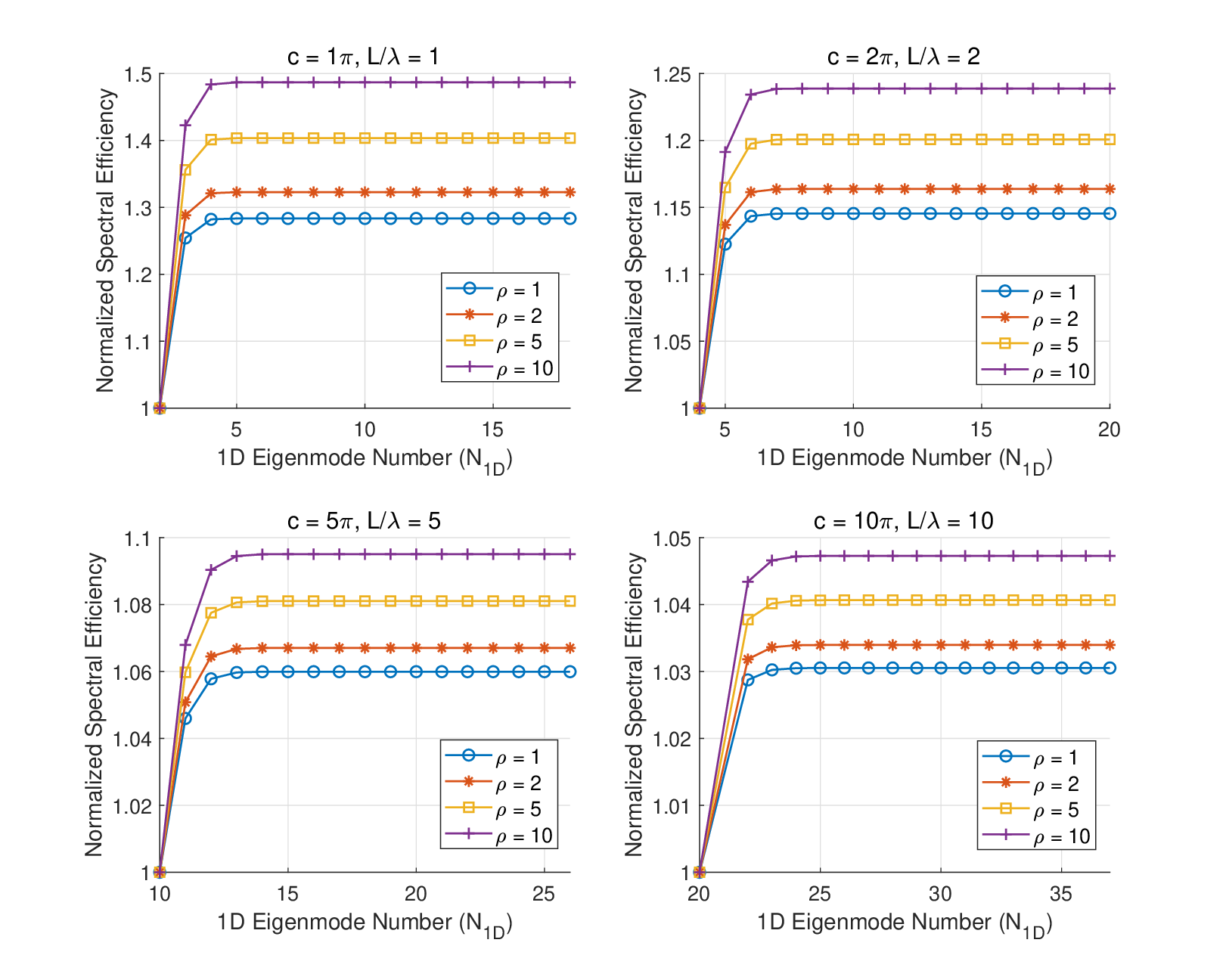}}
	\caption{Spectral Efficiency under various settings of normalized aperture size and SNR (i.e., $\rho$).}
	\label{Fig:Capacity_asymptotic}
\end{figure}

For a deterministic numerical evaluation of the spectral efficiency under different numbers of 1D PSWF basis functions, we consider an upper bound of the ergodic capacity in~\eqref{eq:actual_ergodic_capacity}.
Specifically, applying Jensen’s inequality to the concave log-determinant functional in~\eqref{eq:I_Q_Definition}, together with the Gaussian channel structure in~\eqref{eq:H_factored}, the total-power constraint $\operatorname{Tr}(\mathbf{Q})\leq P_T$, and \(\mu_0(\mathcal P)\le 1\), yields the following deterministic upper bound on the ergodic capacity~\cite{Poon2006Impact, Nam2014Capacity}:
\begin{equation}
	C_{\mathrm{upper}} = \sum_{n=0}^{\infty}\log_2(1 + \rho\mu_n(\mathcal{P})).
	\label{eq:C_upper}
\end{equation}
Given the approximation error bound established in Theorem~\ref{theorem:approx_error}, $C_{\mathrm{upper}}$ can be approached by the finite-dimensional quantity
\begin{equation}
	C_{\mathrm{upper}}^{(N)} = \sum_{n=0}^{N-1}\log_2(1 + \rho\mu_n(\mathbf{M}_N)).
	\label{eq:C_upper_N}
\end{equation}
Together with the elementary bound in~\eqref{eq:scalar_log_bound}, Theorem~\ref{theorem:approx_error} shows that $C_{\mathrm{upper}}^{(N)}$ converges to $C_{\mathrm{upper}}$ with the same certified super-exponential dependence on $N_{1D}$ once $N_{1D} > ec/4$.
 
Figure \ref{Fig:Capacity_asymptotic} illustrates the normalized spectral efficiency of the spatially continuous HMIMO channel as a function of the 1D eigenmode number $N_{1D}$ across various normalized aperture sizes ($L/\lambda \in \{1, 2, 5, 10\}$) and signal-to-noise ratios ($\rho \in \{1, 2, 5, 10\}$).
The spectral efficiencies are calculated using \eqref{eq:C_upper_N}, and the initial evaluated point of each curve is anchored at $N_{1D} = 2L/\lambda$, with the immediately subsequent point set to the 1D truncation threshold $N_{1D} = \lfloor ec/4 \rfloor + 1$. The spectral efficiencies are normalized with respect to the value of the first point of each curve.

As observed, increasing the number of basis functions yields a pronounced improvement in spectral efficiency, particularly for systems with small aperture sizes. Critically, once $N_{1D}$ reaches the derived threshold $\lfloor ec/4 \rfloor + 1$, the spectral efficiency becomes nearly saturated across all aperture configurations and SNR regimes.
This observed saturation is qualitatively consistent with the super-exponential truncation-order dependence of the actual ergodic-capacity gap established in Theorem~\ref{theorem:capacity_convergence}.
Furthermore, the numerical results highlight the substantial benefit of retaining spatial modes beyond the conventional 1D spatial DoF for small-aperture systems.
Specifically, for a highly compact aperture (e.g., $L/\lambda = 1$), the normalized spectral efficiency increases by nearly 40\% at a high SNR ($\rho = 10$) when increasing $N_{1D}$ from the 1D spatial DoF ($2L/\lambda$) to the analytical truncation threshold ($\lfloor ec/4 \rfloor + 1$). 
In contrast, for a larger aperture ($L/\lambda = 10$), the corresponding relative increase remains below 5\%. 
This contrast indicates that, for compact continuous apertures, truncating the spatial representation at the conventional $2L/\lambda$ benchmark may omit modes that make non-negligible contributions to the evaluated spectral efficiency.

\section{Conclusion}
\label{sec:conclusion}
In this paper, we characterized the eigenspectrum and eigenmodes of a non-separable continuous spatial-wavenumber integral operator by establishing a rigorous spectral approximation framework. 
We analytically demonstrated that, despite the loss of separability induced by the square--disk mismatch, the whole-spectrum approximation error remains controlled by the 1D PSWF eigenvalue tail and decays super-exponentially beyond the established 1D truncation threshold.
We further established an explicit non-asymptotic upper bound on the actual ergodic-capacity gap, which exhibits the same super-exponential dependence on the truncation order.
Numerical evaluations are consistent with these theoretical findings, further revealing that, for compact apertures, retaining modes beyond the conventional 1D spatial DoF $2L/\lambda$ can yield substantial improvements in the evaluated spectral efficiency.
Looking ahead, several directions remain for future research. Beyond rectangular apertures, extending the current framework to more general spatial geometries would be of interest.
Another challenging direction is to extend our framework to the near-field scenario, where the fundamental assumption of spatial wide-sense stationarity breaks down.

\appendices
\section{}
\label{appdix: PSWF_calculation_derivation}
A fundamental discovery in \cite{Slepian1961Prolate} is that the integral operator $\mathcal{Q}_c[\cdot](x)$ from \eqref{eq:operator_Q} commutes with the following second-order differential operator:
$$
	\mathcal{D}_x^c = -\frac{d}{dx}(1 - x^2)\frac{d}{dx} + c^2x^2.
	\label{eq:differential operator}
$$
This commutation property ensures that the PSWFs are also eigenfunctions of the singular Sturm-Liouville problem \cite{wang2017review}:
\begin{equation}
	\mathcal{D}_x^c\phi_n(x) = \chi_n\phi_n(x),	
	\label{eq:SLP}
\end{equation}
which arises from solving the Helmholtz equation in prolate spheroidal coordinates via separation of variables. Recall that the Legendre polynomials $\{P_m(x)\}$ satisfy the differential equation:
\begin{equation}
	\mathcal{D}_x^0 P_m(x) = -\frac{d}{dx}(1-x^2)\frac{dP_m(x)}{dx} = m(m+1)P_m(x).
	\label{eq:legendre_polynomials}
\end{equation}
The set $\{P_m(x)\}$ also obeys the three-term recurrence relation~\cite{szegö1975orthogonal}:
\begin{gather}
	(m+1)P_{m+1}(x) = (2m+1)xP_m(x) - mP_{m-1}(x), m \geq 1,\nonumber \\
	P_0(x) = 1, \; P_1(x) = x. \nonumber
	\label{eq:recurrence}
\end{gather}
Furthermore, we introduce the orthonormal Legendre polynomials, defined as:
\begin{align}
	\bar{P}_m(x) = \sqrt{\frac{2m+1}{2}}P_m(x)\; \text{so} \;
	\int_{-1}^{1}\bar{P}_m(x)\bar{P}_{m'}(x)dx = \delta_{mm'}.
	\label{eq: P_orthonormal}
\end{align}
By substituting the expansion \eqref{eq:Legendre_linear_combination} into \eqref{eq:SLP} and using the properties of Legendre polynomials from \eqref{eq:legendre_polynomials}–\eqref{eq: P_orthonormal}, we obtain the eigenvalue decomposition problem in \eqref{eq:eigenvalue_decomposition}.
	
\section{}
\label{appdix: PSWF_eigenvalue_computation}
PSWFs have the following properties: $\phi_n(x)$ is an even function if $n$ is even and an odd function if $n$ is odd. Furthermore, $\phi_n(x)$ has exactly $n$ real roots in $[-1, 1]$. 
For the specific case where $x = 0$, these properties imply that $\phi_n(x) \neq 0$ for even $n$, while $\phi_n(x) = 0$ and $\phi'_n(x) \neq 0$ for odd $n$. 

For the even case $(n = 2p)$, we let $x = 0$ in \eqref{eq:PSWF_definition} to get:
\begin{align}
	&i^{2p}|\lambda_{2p}|\phi_{2p}(0) = \int_{-1}^{1}\phi_{2p}(x')dx' = \nonumber \\ &\qquad\qquad\sqrt{2}\int_{-1}^{1}\phi_{2p}(x')\bar{P}_0(x')dx'
	= \sqrt{2}\beta_{2p, 0}.
	\label{eq:eigenvalue_even}
\end{align}
This derivation uses the properties $\lambda_n = i^n |\lambda_n|$ and $\bar{P}_0(x) = \sqrt{1/2}$, along with the relationship between the PSWF and the orthonormal Legendre polynomials:
$$
	\beta_{n, m} = \int_{-1}^{1}\phi_n(x)\bar{P}_m(x)dx.
	\label{eq:beta_n_k}
$$
Equation~\eqref{eq:eigenvalue_even} then leads to the expression in~\eqref{eq:lambda_computation_even}. 

For the odd case ($n = 2p+1$), we take the derivative of~\eqref{eq:PSWF_definition} with respect to $x$ and then set $x = 0$:
\begin{align}
	i^{2p+1}|\lambda_{2p+1}|\phi'_{2p+1}(0) = ic\int_{-1}^{1}x'\phi_{2p+1}(x')dx' \nonumber \\
	= \frac{ic}{\sqrt{3/2}}\int_{-1}^{1}\bar{P}_1(x')\phi_{2p+1}(x')dx' = \frac{ic \beta_{2p+1, 1}}{\sqrt{3/2}}.\nonumber
\end{align}
Here, we have used the fact that $\bar{P}_1(x) = \sqrt{3/2} x$. This result leads to the expression in~\eqref{eq:lambda_computation_odd}.

\section{}
\label{appdix:2D_PSWF_Calculation}
We first consider the inner double integral in (\ref{eq:M_pq_jl}) and rewrite it in the following convolutional form:
\begin{align}
	&\int_{-1}^{1}\int_{-1}^{1}G_c(x-x', y-y')\phi_j(x')\phi_\ell(y')\,dx'\,dy' \nonumber \\
 = & \int_{-\infty}^{\infty}\int_{-\infty}^{\infty}G_c(x-x', y-y')\tilde{\phi}_j(x')\tilde{\phi}_\ell(y')\,dx'\,dy'  \nonumber \\
	= & \; \tilde{G}_c(x, y) \ast \left\{\tilde{\phi}_j(x)\tilde{\phi}_\ell(y)\right\}.
	\label{eq:convol}
\end{align}
Here, the $\ast$ operator denotes 2D convolution. $\tilde{G}_c(x, y)$ is given by
\begin{equation}
	\tilde{G}_c(x, y) = \frac{1}{4\pi^2}\iint_{k_x^2 + k_y^2 \leq c^2}e^{i(k_x x + k_y y)}\,dk_x\,dk_y,
	\label{eq:low_pass_filter_2}
\end{equation}
and $\tilde{\phi}_n(x)$ is defined as
$$
	\tilde{\phi}_n(x) = \phi_n(x)\mathbb{I}_{[-1, 1]}(x).
	\label{eq:phi_tilde}
$$
Here, $\mathbb{I}_{[-a, a]}(x)$ is the indicator function for the interval $[-a, a]$, defined as
$$
	\mathbb{I}_{[-a, a]}(x) = 
	\left\{ \begin{array}{rcl}
		1 & \mbox{for}
		&  -a \leq x \leq a,   \\ 
		0 & & \text{otherwise}.
	\end{array}\right.
	\label{eq:indicator_function}
$$
By substituting (\ref{eq:convol}) into (\ref{eq:M_pq_jl}) and applying the indicator function to $\phi_p(x)$ and $\phi_q(y)$, we obtain the following result:
\begin{align}
	&M_{(pq)(j\ell)}  \nonumber \\ 
	=& \int_{-\infty}^{\infty}\int_{-\infty}^{\infty}\tilde{\phi}_p(x)\tilde{\phi}_q(y)\cdot \big\{\tilde{G}_c(x, y)\ast \big\{\tilde{\phi}_j(x)\tilde{\phi}_\ell(y)\big\}\big\}dxdy  \nonumber \\
	=& \frac{1}{4\pi^2}\int_{-\infty}^{\infty}\int_{-\infty}^{\infty}\mathcal{F}[\tilde{\phi}_p(x)\tilde{\phi}_q(y)](k_x, k_y) \nonumber \\
	&\times(\mathcal{F}[\tilde{\phi}_j(x)\tilde{\phi}_\ell(y)](k_x, k_y))^*
	(\mathcal{F}[\tilde{G}_c(x,y)](k_x, k_y))^*dk_xdk_y.
	\label{eq:M_pqjl_2}
\end{align}
Note that the second equality in (\ref{eq:M_pqjl_2}) follows from Parseval's identity. 
	
According to (\ref{eq:low_pass_filter_2}), $\mathcal{F}[\tilde{G}_c(x,y)](k_x, k_y)$ is an indicator function for a disk of radius $c$ in the wavenumber domain:
\begin{equation}
	\mathcal{F}[\tilde{G}_c(x,y)](k_x, k_y) = 
		\begin{cases} 
			1, & \text{if } k_x^2 + k_y^2 \leq c^2, \\ 
			0, & \text{otherwise}. 
		\end{cases}
		\label{eq:Fourier_transform_LPF}
\end{equation}

Furthermore, since the non-zero support of (\ref{eq:Fourier_transform_LPF}) inherently restricts the integration region to $k_x, k_y \in [-c, c]$, each 1D Fourier transform term in (\ref{eq:M_pqjl_2}) is strictly equivalent to its band-limited version within this domain (e.g., $\mathcal{F}[\tilde{\phi}_p](k_x)\mathbb{I}_{[-c, c]}(k_x)$).
As established in~\cite{osipov2013prolate}, this band-limited transform has the expression:
\begin{equation}
	\begin{aligned}
		&\mathcal{F}[\tilde{\phi}_p](k_x)\mathbb{I}_{[-c,c]}(k_x) = \\ &\qquad\quad
		\begin{cases} \lambda_p(-1)^p\phi_p(k_x/c), & \text{if } -c \leq k_x \leq c, \\ 
		0, & \text{otherwise}. \end{cases}
	\end{aligned}
	\label{eq:Fourier_PSWF_limit}
\end{equation}
Substituting (\ref{eq:Fourier_transform_LPF}) and (\ref{eq:Fourier_PSWF_limit}) into (\ref{eq:M_pqjl_2}) and using the property $\lambda_p = i^{p}|\lambda_p|$ \cite{osipov2013prolate}, we obtain:
\begin{align}
	&M_{(pq)(j\ell)} = \frac{\left[(-1)^{p+q+j+\ell}\right]\left(i^{p+q-j-\ell}\right)|\lambda_p||\lambda_q||\lambda_j||\lambda_\ell|}{4\pi^2} \nonumber \\
	&\times \iint_{\sqrt{k_x^2 + k_y^2} \leq c}\phi_p\Big(\frac{k_x}{c}\Big)\phi_q\Big(\frac{k_y}{c}\Big)\phi_j\Big(\frac{k_x}{c}\Big)\phi_\ell\Big(\frac{k_y}{c}\Big)dk_xdk_y .
	\label{eq:M_pqjl_3}
\end{align}
Thus, the original quadruple integral in (\ref{eq:M_pq_jl}) has been reduced to the double integral in (\ref{eq:M_pqjl_3}).

We can further simplify the integral in (\ref{eq:M_pqjl_3}) by exploiting the parity of its integrand.
Since a 1D PSWF, $\phi_n(x)$, is an even function for even $n$ and an odd function for odd $n$, the term $\phi_p(\frac{k_x}{c})\phi_j(\frac{k_x}{c})$ is an odd function of $k_x$ if $p+j$ is odd. Similarly, the term $\phi_q(\frac{k_y}{c})\phi_\ell(\frac{k_y}{c})$ is an odd function of $k_y$ if $q + \ell$ is odd. Because the integration domain (i.e., $k_x^2 + k_y^2 \leq c^2$) is symmetric with respect to both the $k_x$ and $k_y$ axes independently, the integral strictly vanishes if the integrand is odd in either $k_x$ or $k_y$. Therefore, we have:
\begin{equation}
	M_{(pq)(j\ell)} = 0 \; \text{for} \; \operatorname{mod}(p+j, 2) = 1 \; \text{or} \; \operatorname{mod}(q+\ell, 2) = 1.
	\label{eq:odd_function_2}
\end{equation}
On the other hand, if both $p+j$ and $q+\ell$ are even, the integrand is an even function with respect to both $k_x$ and $k_y$. In this case, the total sum $p+q+j+\ell$ is guaranteed to be even, meaning the term $(-1)^{p+q+j+\ell}$ simplifies to 1. $M_{(pq)(j\ell)}$ then becomes:
\begin{align}
	&M_{(pq)(j\ell)} =  \pm\frac{|\lambda_p||\lambda_q||\lambda_j||\lambda_\ell|}{4\pi^2} \nonumber \\
	& \times \iint_{\sqrt{k_x^2 + k_y^2} \leq c} \phi_p\Big(\frac{k_x}{c}\Big)\phi_q\Big(\frac{k_y}{c}\Big)\phi_j\Big(\frac{k_x}{c}\Big)\phi_\ell\Big(\frac{k_y}{c}\Big)\, dk_x \, dk_y \nonumber \\
	& \text{for} \quad \operatorname{mod}(p+j, 2) = 0 \quad \text{and} \quad \operatorname{mod}(q+\ell, 2)=0.
	\label{eq:even_function}
\end{align}
The sign in this expression is determined by the term $i^{p+q-j-\ell}$: it is positive if $\operatorname{mod}(p+q-j-\ell, 4) = 0$ and negative if $\operatorname{mod}(p+q-j-\ell, 4) = 2$. 
Finally, by using the relationship between $\{\mu_n\}$ and $\{\lambda_n\}$ from (\ref{eq:mu_n_definition}) and converting the Cartesian variables into polar coordinates ($dk_x \,dk_y = r \,dr\, d\theta$), we obtain the exact two-dimensional polar representation in (\ref{eq:even_function_2}).

\section{}
\label{appendix:truncation_error}
\subsubsection{Spectral Ordering and Trace Defect}

Recall that the eigenvalues $\{\mu_n(\mathcal{P})\}_{n=0}^{\infty}$
of the continuous concentration operator $\mathcal{P}$ in
\eqref{eq:integral_equation} are non-negative, since they are the
energy-concentration ratios defined in \eqref{eq:problem_2D_optimal_PSWF}.
Moreover, $\mathcal{P}$ is a compact self-adjoint integral operator,
and the infinite-dimensional matrix $\mathbf{M}$ in
\eqref{eq:linear_algebra} is its representation with respect to the
complete tensor-product PSWF basis. Therefore,
$$
	\mu_n(\mathbf{M})
	=
	\mu_n(\mathcal{P}),
	\qquad n=0,1,\ldots,
$$
as established in \eqref{eq:eigenvalue_equivalence}.
The truncated matrix $\mathbf{M}_N$ in \eqref{eq:EVD_N},
where $N=N_{1D}^2$, is the finite-dimensional principal truncation of
$\mathbf{M}$ corresponding to the retained tensor-product PSWFs indexed
by $0\leq p,q<N_{1D}$. 
By the max--min principle for compact self-adjoint operators~\cite[Theorem~4.22]{Borthwick2020Spectral}, projecting the operator to this finite-dimensional subspace yields
\begin{equation}
	0
	\leq
	\mu_n(\mathbf{M}_N)
	\leq
	\mu_n(\mathbf{M})
	=
	\mu_n(\mathcal{P}),
	\qquad 0\leq n<N.
	\label{eq:Ritz_ordering}
\end{equation}
Consequently, the retained spectral differences are all
non-negative, and hence
\begin{align}
	&\sum_{n=0}^{N-1}
	\left|
	\mu_n(\mathcal{P})-\mu_n(\mathbf{M}_N)
	\right|
	+
	\sum_{n=N}^{\infty}\mu_n(\mathcal{P})
	\nonumber\\
	&=
	\sum_{n=0}^{\infty}\mu_n(\mathcal{P})
	-
	\sum_{n=0}^{N-1}\mu_n(\mathbf{M}_N) =
	\operatorname{Tr}(\mathcal{P})
	-
	\operatorname{Tr}(\mathbf{M}_N).
	\label{eq:trace_defect_identity}
\end{align}

\subsubsection{Trace Defect and Dimensionality Reduction}
Since the tensor-product PSWFs form a complete orthonormal
basis, the trace of $\mathcal{P}$ is equal to the sum of the
diagonal elements of its infinite-dimensional matrix
representation $\mathbf{M}$. Hence,
\begin{align}
	\operatorname{Tr}(\mathcal{P})
	-
	\operatorname{Tr}(\mathbf{M}_N) =
	\sum_{\max(p,q)\geq N_{1D}}
	M_{(pq)(pq)}.
	\label{eq:trace_defect_diagonal}
\end{align}

Setting $j=p$ and $\ell=q$ in
\eqref{eq:even_function_2}, the sign is always positive and
the diagonal matrix element becomes
\begin{align}
	&M_{(pq)(pq)}\nonumber \\
	&=
	\frac{\mu_p\mu_q}{c^2}
	\iint_{k_x^2+k_y^2\leq c^2}
	\phi_p^2\left(\frac{k_x}{c}\right)
	\phi_q^2\left(\frac{k_y}{c}\right)
	dk_xdk_y.
	\label{eq:M_diagonal}
\end{align}
Because the disk $k_x^2+k_y^2\leq c^2$ is contained in the
square $[-c,c]^2$ and the integrand in
\eqref{eq:M_diagonal} is non-negative, we have
\begin{align}
	&\iint_{k_x^2+k_y^2\leq c^2}
	\phi_p^2\left(\frac{k_x}{c}\right)
	\phi_q^2\left(\frac{k_y}{c}\right)
	dk_x \, dk_y
	\nonumber\\
	& \leq\iint_{[-c,c]^2}
	\phi_p^2\left(\frac{k_x}{c}\right)
	\phi_q^2\left(\frac{k_y}{c}\right)
	dk_x \, dk_y\nonumber \\
	&= \left[\int_{-c}^{c} \phi_p^2\left(\frac{k_x}{c}\right)dk_x\right]\left[\int_{-c}^{c} \phi_q^2\left(\frac{k_y}{c}\right)\, dk_y\right]
	=c^2, \nonumber
\end{align}
where the last equality follows from the orthonormality of the
1D PSWFs on $[-1,1]$. Consequently, we have
\begin{equation}
	0\leq M_{(pq)(pq)}
	\leq \mu_p\mu_q.
	\label{eq:M_diagonal_bound}
\end{equation}

Substituting \eqref{eq:M_diagonal_bound} into
\eqref{eq:trace_defect_diagonal} yields
$$
	\operatorname{Tr}(\mathcal{P})
	-
	\operatorname{Tr}(\mathbf{M}_N)
	\leq
	\sum_{\max(p,q)\geq N_{1D}}\mu_p\mu_q.
	\label{eq:trace_defect_tail_1}
$$
The index set satisfying $\max(p,q)\geq N_{1D}$ is the
complement of the retained square
$\{(p,q):0\leq p,q<N_{1D}\}$ in the full non-negative
index plane. Therefore,
\begin{align}
	\sum_{\max(p,q)\geq N_{1D}}\mu_p\mu_q
	&=
	\sum_{p=0}^{\infty}\sum_{q=0}^{\infty}\mu_p\mu_q
	-
	\sum_{p=0}^{N_{1D}-1}
	\sum_{q=0}^{N_{1D}-1}\mu_p\mu_q
	\nonumber\\
	&=
	\left(\sum_{p=0}^{\infty}\mu_p\right)^2
	-
	\left(\sum_{p=0}^{N_{1D}-1}\mu_p\right)^2.
	\label{eq:trace_defect_tail_2}
\end{align}
Noting that $\sum_{p=0}^{N_{1D}-1}\mu_p = \sum_{p=0}^{\infty}\mu_p - \sum_{p=N_{1D}}^\infty\mu_p$, and applying the difference-of-squares identity, the right-hand side of
\eqref{eq:trace_defect_tail_2} can be rewritten as
$$
	2
	\left(\sum_{p=0}^{\infty}\mu_p\right)
	\left(\sum_{p=N_{1D}}^{\infty}\mu_p\right)
	-
	\left(\sum_{p=N_{1D}}^{\infty}\mu_p\right)^2\leq
	\frac{4c}{\pi}
	\sum_{p=N_{1D}}^{\infty}\mu_p,
$$
where the inequality is because
$\sum_{p=0}^{\infty}\mu_p=2c/\pi$
from~\cite{Landau1962PSWF_PartIII} and the non-negativity of the tail sum \(\sum_{p=N_{1D}}^{\infty}\mu_p\).
Therefore,
\begin{equation}
	\operatorname{Tr}(\mathcal{P})
	-
	\operatorname{Tr}(\mathbf{M}_N)
	\leq
	\frac{4c}{\pi}
	\sum_{p=N_{1D}}^{\infty}\mu_p.
	\label{eq:trace_defect_1D_tail}
\end{equation}

\subsubsection{Strict Non-Asymptotic Envelope of the 1D Tail}

According to Theorems 3.19 and 3.20 of
\cite{osipov2013prolate}, the eigenvalue $\lambda_n(c)$ defined
in \eqref{eq:PSWF_definition} satisfies
$$
	|\lambda_n(c)|
	\leq
	\nu(n,c),
$$
where
\begin{equation}
	\nu(n,c)
	=
	\frac{\sqrt{\pi}\,c^n(n!)^2}
	{(2n)!\Gamma(n+3/2)}.
	\label{eq:nu_n_c}
\end{equation}
Using $\Gamma\left(n+\frac{3}{2}\right)=\frac{(2n+2)!\sqrt{\pi}}{4^{n+1}(n+1)!}$
and $\binom{2n}{n}=\frac{(2n)!}{(n!)^2}$,
\eqref{eq:nu_n_c} can be rewritten exactly as
\begin{equation}
	\nu(n,c)
	=
	\frac{2(4c)^n}
	{(2n+1)n!\binom{2n}{n}^{2}}.
	\label{eq:nu_n_c_2}
\end{equation}

For every integer $n\geq1$, 
the Wallis-product bound
in~\cite[Eq.~(3)]{Hirschhorn2015WallissPA} gives
\begin{equation}
	\binom{2n}{n}
	>
	\frac{4^n}
	{\sqrt{\pi(n+1/2)}}.
	\label{eq:binomial_Wallis}
\end{equation}
Substituting \eqref{eq:binomial_Wallis} into
\eqref{eq:nu_n_c_2} yields
$$
	\nu(n,c)
	<
	\frac{\pi}{n!}
	\left(\frac{c}{4}\right)^n.
	\label{nu_n_c_upperbound}
$$
Furthermore, the strict Stirling bound of Robbins
\cite{robbins1955remark} gives $n! > \sqrt{2\pi n}\left(\frac{n}{e}\right)^n$.
We therefore obtain:
$$
	|\lambda_n(c)|
	<
	\sqrt{\frac{\pi}{2n}}
	\left(\frac{ce}{4n}\right)^n.
	\label{lambda_n_upperbound}
$$

Using the relationship between $\mu_n$ and $\lambda_n$ in
\eqref{eq:mu_n_definition}, we consequently obtain the strict
non-asymptotic eigenvalue envelope
\begin{equation}
	\mu_n(c)
	<
	\frac{c}{4n}
	\left(\frac{ce}{4n}\right)^{2n}
	=
	\frac{c}{4n}
	e^{-2n\ln\left(\frac{4n}{ce}\right)}.
	\label{eq:inequality_n_c}
\end{equation}
It is clear from~\eqref{eq:inequality_n_c} that $\mu_n(c)$ enters the super-exponential decay regime when the following threshold condition is satisfied:
$$
	n > \frac{ec}{4}.
	\label{eq:n_threshold}
$$

Substituting \eqref{eq:inequality_n_c} into \eqref{eq:trace_defect_1D_tail}, we now evaluate the corresponding infinite series. For $p\geq N_{1D}$, we define
$$
	a_p
	\triangleq
	\frac{c}{4p}
	\left(\frac{ec}{4p}\right)^{2p}.
$$
Then
$$
	\sum_{p=N_{1D}}^{\infty}\mu_p
	<
	\sum_{p=N_{1D}}^{\infty}a_p.
	\label{eq:mu_tail_ap}
$$
The ratio of two successive terms is
$$
	\frac{a_{p+1}}{a_p}
	=
	\frac{p}{p+1}
	\left(\frac{p}{p+1}\right)^{2p}
	\left(
	\frac{ec}{4(p+1)}
	\right)^2.
$$
Since $p/(p+1)<1$, the ratio is bounded by:
$$
	\frac{a_{p+1}}{a_p} <
	\left(
	\frac{ec}{4(p+1)}
	\right)^2 \leq
	\left(
	\frac{ec}{4(N_{1D}+1)}
	\right)^2
	\triangleq
	q(N_{1D},c).
	\label{eq:q_definition}
$$
Therefore, $a_{N_{1D} + m}$ can be upper-bounded by the geometric progression:
$$
	a_{N_{1D} + m}\leq a_{N_{1D}}(q(N_{1D}, c))^m.
$$
When $N_{1D}>ec/4$, we have $0<q(N_{1D},c)<1$.
Consequently,
\begin{align}
	\sum_{p=N_{1D}}^{\infty}\mu_p
	< a_{N_{1D}}\sum_{m=0}^\infty q(N_{1D}, c)^m = 
	\frac{a_{N_{1D}}}
	{1-q(N_{1D},c)}.
	\label{eq:infinite_series_inequality}
\end{align}

\subsubsection{Conclusion}
Combining \eqref{eq:trace_defect_identity},
\eqref{eq:trace_defect_1D_tail}, and
\eqref{eq:infinite_series_inequality}, we obtain
\begin{align}
	&\sum_{n=0}^{N-1}
	\left|
	\mu_n(\mathcal{P})
	-
	\mu_n(\mathbf{M}_N)
	\right|
	+
	\sum_{n=N}^{\infty}\mu_n(\mathcal{P})
	\nonumber\\
	&\quad<
	\frac{c^2}{\pi N_{1D}}
	\left[
	1-
	\left(
	\frac{ec}{4(N_{1D}+1)}
	\right)^2
	\right]^{-1} \cdot e^{-2N_{1D}
		\ln\left(\frac{4N_{1D}}{ce}\right)}.
	\label{eq:trace_defect_final}
\end{align}
By the definition of $C_1(N_{1D},c)$ in
\eqref{eq:C1_definition}, \eqref{eq:trace_defect_final}
is precisely the bound in
Theorem~\ref{theorem:approx_error}, which completes the proof.

\section{}
\label{appendix:channel_representations}

\subsubsection{Proof of Lemma~\ref{lemma:continuous_capacity}}
By Mercer's theorem, the receive and transmit spatial autocorrelation
functions in \eqref{eq:Rr_Gc} and \eqref{eq:Rt_Gc} admit the spectral
decompositions
$$
	R_r(\mathbf{r}_r,\mathbf{r}'_r)
	=
	\sum_{n=0}^{\infty}
	\mu_n(\mathcal{P})
	\phi_{r,n}(\mathbf{r}_r)
	\phi_{r,n}^{*}(\mathbf{r}'_r),
	\label{eq:Rr_mercer}
$$
$$
	R_t(\mathbf{r}_t,\mathbf{r}'_t)
	=
	\sum_{m=0}^{\infty}
	\mu_m(\mathcal{P})
	\phi_{t,m}(\mathbf{r}_t)
	\phi_{t,m}^{*}(\mathbf{r}'_t),
	\label{eq:Rt_mercer}
$$
where $\{\phi_{r,n}\}_{n=0}^{\infty}$ and
$\{\phi_{t,m}\}_{m=0}^{\infty}$ denote the corresponding
orthonormal spatial eigenfunctions at the receiver and transmitter,
respectively. Since the channel is assumed to be zero-mean proper complex Gaussian with the separable correlation structure in
\eqref{eq:separable_correlation}, the Karhunen--Lo\`eve expansion is
$$
	h(\mathbf{r}_r,\mathbf{r}_t)
	=
	\sum_{n=0}^{\infty}
	\sum_{m=0}^{\infty}
	w_{n,m}
	\sqrt{\mu_n(\mathcal{P})\mu_m(\mathcal{P})}
	\phi_{r,n}(\mathbf{r}_r)
	\phi_{t,m}^{*}(\mathbf{r}_t),
	\label{eq:h_infinite_series}
$$
where $\{w_{n,m}\}$ are i.i.d. $\mathcal{CN}(0,1)$ random variables.

We then express the transmit field, the received field, and the noise
field using their respective spatial basis functions as
$X(\mathbf{r}_t)=\sum_{m=0}^{\infty}x_m\phi_{t,m}(\mathbf{r}_t)$,
$Y(\mathbf{r}_r)=\sum_{n=0}^{\infty}y_n\phi_{r,n}(\mathbf{r}_r)$,
and
$Z(\mathbf{r}_r)=\sum_{n=0}^{\infty}z_n\phi_{r,n}(\mathbf{r}_r)$.
Substituting these expansions into \eqref{eq:signal_model} and
exploiting the orthonormality of both the transmit and receive basis
functions yields
\begin{equation}
	y_n
	=
	\sum_{m=0}^{\infty}
	\left(
	\sqrt{\mu_{r,n}}\,w_{n,m}\sqrt{\mu_{t,m}}
	\right)x_m+z_n,
	\qquad n\geq0.
	\label{eq:y_n_infty}
\end{equation}
Using \eqref{eq:symmetric_covariance_eigenvalues},
\eqref{eq:y_n_infty} gives the equivalent channel representation in
\eqref{eq:equivalent_channel} and the factorization in
\eqref{eq:H_factored}.

The equivalent channel $\mathbf{H}$ is well defined as a Hilbert--Schmidt random operator. In fact, we have $\sum_{n=0}^{\infty}\mu_n(\mathcal{P})
=
\operatorname{Tr}(\mathcal{P})
=
\frac{c^2}{\pi}$.
Using~\eqref{eq:H_factored}, we can obtain:
$$
	\mathbb{E}\left[\|\mathbf{H}\|_{\mathrm{HS}}^2\right]
	=
	\left(\sum_{n=0}^{\infty}\mu_n(\mathcal{P})\right)^2
	=
	\left(\frac{c^2}{\pi}\right)^2
	<\infty.
	\label{eq:H_HS}
$$
Hence, $\mathbf{H}$ is Hilbert--Schmidt almost surely. For every feasible $\mathbf{Q}\succeq0$, we have
$\|\mathbf{Q}\|\leq\operatorname{Tr}(\mathbf{Q})\leq P_T$. Moreover,
$$
	\operatorname{Tr}
	\left(
	\mathbf{H}\mathbf{Q}\mathbf{H}^{\dagger}
	\right)
	=
	\left\|
	\mathbf{H}\mathbf{Q}^{1/2}
	\right\|_{\mathrm{HS}}^2
	\leq
	\|\mathbf{H}\|_{\mathrm{HS}}^2
	\|\mathbf{Q}\|
	\leq
	P_T\|\mathbf{H}\|_{\mathrm{HS}}^2.
	\label{eq:HQH_trace_bound}
$$
Therefore, we have
$$
	\mathbb{E}
	\left[
	\operatorname{Tr}
	\left(
	\mathbf{H}\mathbf{Q}\mathbf{H}^{\dagger}
	\right)
	\right]
	\leq
	P_T
	\mathbb{E}
	\left[
	\|\mathbf{H}\|_{\mathrm{HS}}^2
	\right]
	=
	P_T
	\left(
	\frac{c^2}{\pi}
	\right)^2
	<\infty.
	\label{eq:HQH_expectation_bound}
$$
Thus $\mathbf{H}\mathbf{Q}\mathbf{H}^{\dagger}$ is positive trace
class almost surely. Together with
$\log_2\det(\mathbf{I}+\mathbf{A})\leq
\operatorname{Tr}(\mathbf{A})/\ln2$ for $\mathbf{A}\succeq0$, this
also verifies the Fredholm well-posedness and finiteness of the
expectation in~\eqref{eq:I_Q_Definition}.

We next show that the capacity optimization can be restricted to
diagonal transmit covariance matrices. Following the
covariance-symmetrization argument for right-symmetric MIMO
channels~\cite[Theorem~4.2]{RheeTaricco2006}, let
$$
	\mathbf{D}_{\boldsymbol{\theta}}
	\triangleq
	\operatorname{diag}
	\left(e^{i\theta_0},e^{i\theta_1},\ldots\right)
$$
be an arbitrary diagonal unitary matrix. Since
$\mathbf{D}_{\boldsymbol{\theta}}$ and $\boldsymbol{\Lambda}$ in \eqref{eq:Lambda_definition} are
both diagonal, they commute. Furthermore, because the entries of
$\mathbf{H}_w$ are independent circularly symmetric complex Gaussian
random variables, we have $\mathbf{H}_w\mathbf{D}_{\boldsymbol{\theta}}
	\overset{d}{=}
	\mathbf{H}_w$,
where $\overset{d}{=}$ denotes equality in distribution.

We have $\mathcal{I}(\mathbf{Q})$ as defined in~\eqref{eq:I_Q_Definition}. Therefore, for any feasible $\mathbf{Q}$, using the property $\mathbf{H}_w\mathbf{D}_{\boldsymbol{\theta}}
\overset{d}{=}
\mathbf{H}_w$, we have
\begin{align}
	&\mathcal{I}
	\left(
	\mathbf{D}_{\boldsymbol{\theta}}
	\mathbf{Q}
	\mathbf{D}_{\boldsymbol{\theta}}^{\dagger}
	\right) =
	\nonumber\\
	&
	\mathbb{E}
	\left[
	\log_2\det
	\left(
	\mathbf{I}
	+
	\frac{1}{\sigma_z^2}
	\boldsymbol{\Lambda}^{1/2}
	\mathbf{H}_w
	\mathbf{D}_{\boldsymbol{\theta}}
	\boldsymbol{\Lambda}^{1/2}
	\mathbf{Q}
	\boldsymbol{\Lambda}^{1/2}
	\mathbf{D}_{\boldsymbol{\theta}}^{\dagger}
	\mathbf{H}_w^{\dagger}
	\boldsymbol{\Lambda}^{1/2}
	\right)
	\right]
	\nonumber\\
	&=
	\mathcal{I}(\mathbf{Q}).
	\label{eq:I_phase_invariance}
\end{align}
Let the phases $\{\theta_m\}_{m=0}^{\infty}$ be independent and
uniformly distributed over $[0,2\pi)$. Averaging the phase-rotated
covariance matrix gives
$$
	\mathbb{E}_{\boldsymbol{\theta}}
	\left[
	\mathbf{D}_{\boldsymbol{\theta}}\mathbf{Q}
	\mathbf{D}_{\boldsymbol{\theta}}^{\dagger}
	\right]
	=
	\operatorname{diag}(Q_{0,0},Q_{1,1},\ldots)
	=
	\widetilde{\mathbf{Q}}.
	\label{eq:Q_phase_average}
$$
Indeed, the $(m,\ell)$-th entry of the left-hand side is $Q_{m,\ell}
\mathbb{E}_{\boldsymbol{\theta}}
\left[
e^{i(\theta_m-\theta_\ell)}
\right]$, which equals $Q_{m,m}$ for $m=\ell$ and vanishes for
$m\neq\ell$.

Since $\mathbf{Q}\succeq0$, the phase averaging preserves feasibility:
$$
	\widetilde{\mathbf{Q}}\succeq0,
	\qquad
	\operatorname{Tr}(\widetilde{\mathbf{Q}})
	=
	\operatorname{Tr}(\mathbf{Q})
	\leq P_T.
	\label{eq:Q_tilde_feasible}
$$
The functional $\mathcal{I}(\mathbf{Q})$ is concave in
$\mathbf{Q}\succeq0$. Hence, applying Jensen's inequality with
respect to the auxiliary random phases $\boldsymbol{\theta}$ yields
$$
	\mathcal{I}(\tilde{\mathbf{Q}}) = \mathcal{I}
	\left(
	\mathbb{E}_{\boldsymbol{\theta}}
	\left[
	\mathbf{D}_{\boldsymbol{\theta}}
	\mathbf{Q}
	\mathbf{D}_{\boldsymbol{\theta}}^{\dagger}
	\right]
	\right) \geq
	\mathbb{E}_{\boldsymbol{\theta}}
	\left[
	\mathcal{I}
	\left(
	\mathbf{D}_{\boldsymbol{\theta}}
	\mathbf{Q}
	\mathbf{D}_{\boldsymbol{\theta}}^{\dagger}
	\right)
	\right]
	=
	\mathcal{I}(\mathbf{Q}),
	\label{eq:Q_diagonalization}
$$
where the last equality follows from \eqref{eq:I_phase_invariance}.
Therefore, for every feasible transmit covariance matrix $\mathbf{Q}$,
there exists a feasible diagonal covariance matrix
$\widetilde{\mathbf{Q}}$ whose ergodic rate is no smaller, which
yields \eqref{eq:ergodic_capacity_diagonal_Q}. It is important to emphasize that the argument relies on
the statistical-CSIT assumption, under which the transmit covariance
is independent of the instantaneous realization of $\mathbf{H}_w$. 

The above subsection provides the detailed derivation of the discrete representation in~\eqref{eq:equivalent_channel} and proves Lemma~\ref{lemma:continuous_capacity}.

\subsubsection{Proof of Lemma~\ref{lemma:truncated_capacity}}

Recall that the retained spatial subspace is spanned by the
tensor-product PSWFs
$\{\phi_p(x)\phi_q(y)\}_{0\leq p,q<N_{\mathrm{1D}}}$,
with $N=N_{\mathrm{1D}}^2$. For notational convenience, let $\psi_{pq}(\mathbf r)
\triangleq
\phi_p(x)\phi_q(y)$ with $\mathbf{r} = (x, y)$.

Projecting the continuous channel kernel
$h(\mathbf r_r,\mathbf r_t)$ onto this subspace at both the
transmit and receive sides gives the $N$-dimensional random
channel matrix $\mathbf H_N^{\mathrm{PSWF}}$, whose
$((pq),(j\ell))$-th entry is
$$
	\label{eq:projected_channel_element}
	[\mathbf H_N^\mathrm{PSWF}]_{(pq),(j\ell)}
	=
	\int_{\mathcal D_r}
	\int_{\mathcal D_t}
	\psi_{pq}^{*}(\mathbf r_r)
	h(\mathbf r_r,\mathbf r_t)
	\psi_{j\ell}(\mathbf r_t)
	\,d\mathbf r_t\,d\mathbf r_r .
$$
Since $h(\mathbf r_r,\mathbf r_t)$ is a zero-mean proper complex Gaussian random field, the entries of $\mathbf H_N^{\mathrm{PSWF}}$, being linear functionals of $h$,
are jointly zero-mean proper complex Gaussian random variables. We next determine their covariance. For two arbitrary pairs of
transmit and receive indices, we have
\begin{align}
	&\mathbb E\!
	\left[
	[\mathbf H_N^{\mathrm{PSWF}}]_{(pq),(j\ell)}
	[\mathbf H_N^{\mathrm{PSWF}}]_{(p'q'),(j'\ell')}^{*}
	\right]
	\nonumber\\
	&=
	\int_{\mathcal D_r}
	\int_{\mathcal D_r}
	\int_{\mathcal D_t}
	\int_{\mathcal D_t}
	\psi_{pq}^{*}(\mathbf r_r)
	\psi_{p'q'}(\mathbf r_r')
	\psi_{j\ell}(\mathbf r_t)
	\psi_{j'\ell'}^{*}(\mathbf r_t')
	\nonumber\\
	&\qquad\times
	\mathbb E
	\left[
	h(\mathbf r_r,\mathbf r_t)
	h^{*}(\mathbf r_r',\mathbf r_t')
	\right]
	\,d\mathbf r_t\,d\mathbf r_t'\,
	d\mathbf r_r\,d\mathbf r_r'.
	\label{eq:projected_channel_covariance_1}
\end{align}
Given the separable channel correlation model in~\eqref{eq:separable_correlation}, \eqref{eq:projected_channel_covariance_1} separates
into the receive- and transmit-side terms as
\begin{align}
	&\mathbb E\!
	\left[
	[\mathbf H_N^{\mathrm{PSWF}}]_{(pq),(j\ell)}
	[\mathbf H_N^{\mathrm{PSWF}}]_{(p'q'),(j'\ell')}^{*}
	\right]
	\nonumber\\
	&= 
	[\mathbf R_{r,N}]_{(pq),(p'q')}
	[\mathbf R_{t,N}]_{(j\ell),(j'\ell')}^{*},
	\label{eq:projected_channel_covariance_2}
\end{align}
where
\begin{align}
	&[\mathbf R_{r,N}]_{(pq),(p'q')}\nonumber \\
	&\quad \triangleq
	\int_{\mathcal D_r}\int_{\mathcal D_r}
	\psi_{pq}^{*}(\mathbf r_r)
	R_r(\mathbf r_r,\mathbf r_r')
	\psi_{p'q'}(\mathbf r_r')
	\,d\mathbf r_r'\,d\mathbf r_r ,
	\label{eq:projected_receive_covariance}\\
	&[\mathbf R_{t,N}]_{(j\ell),(j'\ell')}
	\triangleq
	\int_{\mathcal D_t}\int_{\mathcal D_t}
	\psi_{j\ell}^{*}(\mathbf r_t)
	R_t(\mathbf r_t,\mathbf r_t')
	\psi_{j'\ell'}(\mathbf r_t')
	\,d\mathbf r_t'\,d\mathbf r_t .
	\label{eq:projected_transmit_covariance}
\end{align}

For the symmetric normalized square--disk channel considered in
this paper, $R_r=R_t=G_c$. Since the tensor-product PSWFs are real-valued, comparing
\eqref{eq:projected_receive_covariance} and
\eqref{eq:projected_transmit_covariance} with the definition of
$M_{(pq)(j\ell)}$ in \eqref{eq:M_pq_jl}, we obtain
$\mathbf R_{r,N} = \mathbf R_{t,N} = \mathbf M_N$.
Since $\mathbf M_N$ is real symmetric, it follows from
\eqref{eq:projected_channel_covariance_2} that
\begin{align}
	&\mathbb E\!
	\left[
	[\mathbf H_N^{\mathrm{PSWF}}]_{(pq),(j\ell)}
	[\mathbf H_N^{\mathrm{PSWF}}]_{(p'q'),(j'\ell')}^{*}
	\right]\nonumber \\
	& =
	[\mathbf M_N]_{(pq),(p'q')}
	[\mathbf M_N]_{(j\ell),(j'\ell')}. \nonumber
	\label{eq:projected_channel_covariance_MN}
\end{align}
Now let $\mathbf H_{w,N}\in\mathbb C^{N\times N}$ have i.i.d. $\mathcal{CN}(0,1)$ entries and define
\begin{equation}
	\widehat{\mathbf H}_N
	\triangleq
	\mathbf M_N^{1/2}
	\mathbf H_{w,N}
	\mathbf M_N^{1/2}. \nonumber
	\label{eq:projected_channel_auxiliary}
\end{equation}
Since \(\mathbf M_N\succeq0\), its square root is well defined. As a deterministic linear transformation of \(\mathbf H_{w,N}\), \(\widehat{\mathbf H}_N\) is a zero-mean proper complex Gaussian random matrix.
Moreover, since $\mathbf H_{w,N}$ has i.i.d. $\mathcal{CN}(0, 1)$ entries, we have
\begin{align}
	&\mathbb E
	\left[
	[\widehat{\mathbf H}_N]_{(pq),(j\ell)}
	[\widehat{\mathbf H}_N]_{(p'q'),(j'\ell')}^{*}
	\right]
	\nonumber\\
	&=
	[\mathbf M_N]_{(pq),(p'q')}
	[\mathbf M_N]_{(j\ell),(j'\ell')}. \nonumber
	\label{eq:auxiliary_channel_covariance}
\end{align}
Thus, $\widehat{\mathbf H}_N$ and
$\mathbf H_N^{\mathrm{PSWF}}$ are both zero-mean proper complex
Gaussian random matrices with identical covariance matrices.
Consequently, they have the same distribution:
$$
		\mathbf H_N^{\mathrm{PSWF}}
		\overset{d}{=}
		\mathbf M_N^{1/2}
		\mathbf H_{w,N}
		\mathbf M_N^{1/2}.
	\label{eq:HN_PSWF_distribution}
$$

Since the spatial covariance matrix $\mathbf{M}_N$ can be eigen-decomposed as $\mathbf{M}_N
=
\mathbf{U}_N
\boldsymbol{\Lambda}_N
\mathbf{U}_N^\dagger$, expressing the projected channel in the eigenbasis of $\mathbf{M}_N$ gives
$$
	\mathbf{U}_N^\dagger
	\mathbf{H}_N^{\mathrm{PSWF}}
	\mathbf{U}_N
	=
	\boldsymbol{\Lambda}_N^{1/2}
	\left(
	\mathbf{U}_N^\dagger
	\mathbf{H}_{w,N}
	\mathbf{U}_N
	\right)
	\boldsymbol{\Lambda}_N^{1/2}.
$$
Because $\mathbf{H}_{w,N}$ has i.i.d. $\mathcal{CN}(0, 1)$ entries, its distribution is invariant under deterministic unitary
transformations from the left and right. Hence,
$$
	\mathbf{U}_N^\dagger
	\mathbf{H}_{w,N}
	\mathbf{U}_N
	\overset{d}{=}
	\mathbf{H}_{w,N}.
$$
Accordingly, without loss of distribution, the projected channel $\mathbf{H}_N^{\mathrm{PSWF}}$ can
be represented in the eigenbasis of $\mathbf{M}_N$ as~\eqref{eq:HN_eigenbasis}. 

Applying the same phase-invariance and
concavity argument established in the proof of
Lemma~\ref{lemma:continuous_capacity} to this finite-dimensional
channel shows that the maximization in
\eqref{eq:finite_ergodic_capacity} can be restricted, without loss of
optimality, to diagonal transmit covariance matrices $\widetilde{\mathbf{Q}}_N$, yielding
\eqref{eq:finite_capacity_diagonal_Q}. The above subsection gives the derivation for the projected channel representation in~\eqref{eq:HN_eigenbasis} and proves
Lemma~\ref{lemma:truncated_capacity}.

\section{}
\label{appendix:capacity_convergence}

We prove Theorem~\ref{theorem:capacity_convergence} by comparing
the continuous and PSWF-truncated channels for a common admissible
power-allocation sequence. Throughout this appendix, let $\widetilde{\mathbf Q}
=
\operatorname{diag}(q_0,q_1,\ldots),\,
q_m\geq0$ satisfy
\begin{equation}
	\sum_{m=0}^{\infty}q_m
	=
	\operatorname{Tr}(\widetilde{\mathbf Q})
	\leq P_T .
	\label{eq:appendixF_power}
\end{equation}
For the $N$-dimensional truncated channel, only the first $N$
entries $q_0,\ldots,q_{N-1}$ are relevant.

Recall that $\boldsymbol{\Lambda}$ denotes the common transmit-
and receive-side covariance eigenvalue matrix of the continuous
symmetric channel as defined in
\eqref{eq:Lambda_definition}. In contrast,
$\boldsymbol{\Lambda}_N$ retains its
finite-dimensional definition in
\eqref{eq:Lambda_N_definition}. To compare the continuous and
truncated spectra on the common index set $n=0,1,\ldots$, we
extend only the eigenvalue sequence of $\mathbf M_N$ by the
convention
\begin{equation}
	\mu_n(\mathbf M_N)\triangleq0,
	\qquad n\geq N.
	\label{eq:MN_zero_extension}
\end{equation}
By the eigenvalue ordering established in \eqref{eq:Ritz_ordering} of
Appendix~\ref{appendix:truncation_error}, this convention gives
\begin{equation}
	0
	\leq
	\mu_n(\mathbf M_N)
	\leq
	\mu_n(\mathcal P),
	\qquad n=0,1,\ldots ,
	\label{eq:capacity_spectral_ordering}
\end{equation}
and
\begin{equation}
	\sum_{n=0}^{\infty}
	\left[
	\mu_n(\mathcal P)-\mu_n(\mathbf M_N)
	\right]
	=
	\operatorname{Tr}(\mathcal P)
	-
	\operatorname{Tr}(\mathbf M_N).
	\label{eq:trace_equivalence}
\end{equation}
The comparison below is carried out in the respective covariance
eigenbases of the continuous and truncated channels, with their
eigenvalues paired according to the common non-increasing ordering
used in Theorem~\ref{theorem:approx_error}. We use the same
Gaussian array $\{w_{n,m}\}_{n,m\geq0}$ throughout the proof;
the finite matrix $\mathbf H_{w,N}$ in~\eqref{eq:HN_eigenbasis} is identified
with its leading $N\times N$ block.

\subsubsection{An Auxiliary Log-Determinant Bound}

We first establish an inequality used in both the receive- and
transmit-side comparisons. Let $\mathbf{A}$ and $\mathbf{B}$ be
positive trace-class operators satisfying $\mathbf{A}\succeq\mathbf{B}\succeq\mathbf{0}$.
By the min--max principle, their ordered eigenvalues satisfy
$$
	\lambda_n(\mathbf{A})
	\geq
	\lambda_n(\mathbf{B}),
	\qquad n=0,1,\ldots
$$
Moreover, due to the concavity of $\log_2(1 + x)$, for $x\geq y\geq0$, we have
\begin{equation}
	0
	\leq
	\log_2(1+x)-\log_2(1+y)
	\leq
	\frac{x-y}{\ln2}.
	\label{eq:scalar_log_bound}
\end{equation}
Therefore, we can obtain:
\begin{align}
	0
	&\leq
	\log_2\det(\mathbf{I}+\mathbf{A})
	-
	\log_2\det(\mathbf{I}+\mathbf{B})
	\nonumber\\
	&=
	\sum_{n=0}^{\infty}
	\left[
	\log_2\!\left(1+\lambda_n(\mathbf{A})\right)
	-
	\log_2\!\left(1+\lambda_n(\mathbf{B})\right)
	\right]
	\nonumber\\
	&\leq
	\frac{1}{\ln2}
	\sum_{n=0}^{\infty}
	\left[
	\lambda_n(\mathbf{A})
	-
	\lambda_n(\mathbf{B})
	\right] =
	\frac{
		\operatorname{Tr}(\mathbf{A}-\mathbf{B})
	}{\ln2}.
	\label{eq:logdet_trace_bound}
\end{align}

\subsubsection{Receive-Side Truncation}

We now compare the continuous channel $\mathbf H$ and an
auxiliary receive-truncated channel for a fixed
$\widetilde{\mathbf Q}$. Recall that $\mathbf H$ is given in
\eqref{eq:H_factored}. For the purpose of the comparison, define
the receive-truncated random operator $\mathbf H_{r,N}$ through
its matrix elements as
$$
	[\mathbf H_{r,N}]_{n,m}
	\triangleq
	\sqrt{\mu_n(\mathbf M_N)}
	\,w_{n,m}\,
	\sqrt{\mu_m(\mathcal P)},
	\qquad n,m\geq0,
	\label{eq:receive_truncated_channel}
$$
where the convention in \eqref{eq:MN_zero_extension} is used.
For notational simplicity, we further define
$$
	\mathbf K
	\triangleq
	\frac{1}{\sigma_z}
	\mathbf H\widetilde{\mathbf Q}^{1/2},
	\qquad
	\mathbf K_{r,N}
	\triangleq
	\frac{1}{\sigma_z}
	\mathbf H_{r,N}\widetilde{\mathbf Q}^{1/2}.
	\label{eq:weighted_receive_channels}
$$
Their matrix elements are therefore
\begin{equation}
	[\mathbf K]_{n,m}
	\triangleq
	\frac{1}{\sigma_z}
	\sqrt{\mu_n(\mathcal P)}
	\,w_{n,m}\,
	\sqrt{\mu_m(\mathcal P)q_m},
	\label{eq:K_element}
\end{equation}
\begin{equation}
	[\mathbf K_{r,N}]_{n,m}
	\triangleq
	\frac{1}{\sigma_z}
	\sqrt{\mu_n(\mathbf M_N)}
	\,w_{n,m}\,
	\sqrt{\mu_m(\mathcal P)q_m}.
	\label{eq:KrN_element}
\end{equation}
Note that the same realization of the i.i.d. Gaussian
coefficients $\{w_{n,m}\}$ is utilized for both
$\mathbf K$ and $\mathbf K_{r,N}$.

By the Hilbert--Schmidt well-posedness established in
Appendix~\ref{appendix:channel_representations}, $\mathbf K$ is Hilbert--Schmidt almost surely.
Using $\mathbb E[|w_{n,m}|^2]=1$, we obtain
\begin{align}
	\mathbb{E}
	\left[
	\|\mathbf{K}\|_{\mathrm{HS}}^2
	\right] &= \mathbb{E}\left[\sum_{n=0}^{\infty}\sum_{m=0}^{\infty}\left|\mathbf{K}_{n,m}\right|^2\right] =\sum_{n=0}^{\infty}\sum_{m=0}^{\infty}\mathbb{E}\left[\left|\mathbf{K}_{n,m}\right|^2\right] \nonumber \\
	&=
	\frac{1}{\sigma_z^2}
	\left(
	\sum_{n=0}^{\infty}\mu_n(\mathcal{P})
	\right)
	\left(
	\sum_{m=0}^{\infty}
	\mu_m(\mathcal{P})q_m
	\right)
	\nonumber\\
	&\leq
	\rho\mu_0(\mathcal{P})
	\operatorname{Tr}(\mathcal{P})
	<\infty,
	\label{eq:K_HS_full}
\end{align}
where $\rho=P_T/\sigma_z^2$, and the last inequality follows
from \eqref{eq:appendixF_power} and the non-increasing ordering
of $\mu_m(\mathcal P)$. Specifically, $\sum_{m=0}^{\infty}
	\mu_m(\mathcal P)q_m
	\leq
	\mu_0(\mathcal P)
	\sum_{m=0}^{\infty}q_m
	\leq
	\mu_0(\mathcal P)P_T$.
The second equality in \eqref{eq:K_HS_full} follows from
Tonelli's theorem because
$|[\mathbf K]_{n,m}|^2\geq0$.
Similarly,
$$
	\mathbb E
	\left[
	\|\mathbf K_{r,N}\|_{\mathrm{HS}}^2
	\right]
	\leq
	\frac{\mu_0(\mathcal P)P_T}{\sigma_z^2}
	\operatorname{Tr}(\mathbf M_N)
	<\infty.
	\label{eq:KrN_HS_bound}
$$
Hence, $\mathbf K$ and $\mathbf K_{r,N}$ are Hilbert--Schmidt
operators almost surely. 
Consequently,
$\mathbf K\mathbf K^\dagger$,
$\mathbf K^\dagger\mathbf K$,
$\mathbf K_{r,N}\mathbf K_{r,N}^\dagger$, and
$\mathbf K_{r,N}^\dagger\mathbf K_{r,N}$
are positive trace-class operators.

Since $\mathbf K\mathbf K^\dagger$ and
$\mathbf K^\dagger\mathbf K$ have the same non-zero
eigenvalues, their Fredholm determinants are equal:
\begin{equation}
	\det
	\left(
	\mathbf I+\mathbf K\mathbf K^\dagger
	\right)
	=
	\det
	\left(
	\mathbf I+\mathbf K^\dagger\mathbf K
	\right).
	\label{eq:K_fredholm_identity}
\end{equation}
Likewise,
\begin{equation}
	\det
	\left(
	\mathbf I+\mathbf K_{r,N}\mathbf K_{r,N}^\dagger
	\right)
	=
	\det
	\left(
	\mathbf I+\mathbf K_{r,N}^\dagger\mathbf K_{r,N}
	\right).
	\label{eq:KrN_fredholm_identity}
\end{equation}

To compare the two Gram operators (i.e., $\mathbf{K}^{\dagger}\mathbf{K}$ and $\mathbf{K}_{r, N}^{\dagger}\mathbf{K}_{r,N}$), define
$\mathbf{L}_{r,N}$ by
\begin{equation}
	[\mathbf L_{r,N}]_{n,m}
	\triangleq
	\frac{1}{\sigma_z}
	\sqrt{
		\mu_n(\mathcal P)-\mu_n(\mathbf M_N)
	}
	\,w_{n,m}\,
	\sqrt{\mu_m(\mathcal P)q_m}.
	\label{eq:LrN_element}
\end{equation}
The square root is well defined by
\eqref{eq:capacity_spectral_ordering}. Moreover, following the
same calculation as in \eqref{eq:K_HS_full}, we have
\begin{equation}
	\mathbb E
	\left[
	\|\mathbf L_{r,N}\|_{\mathrm{HS}}^2
	\right]
	\leq
	\rho\mu_0(\mathcal P)
	\left[
	\operatorname{Tr}(\mathcal P)
	-
	\operatorname{Tr}(\mathbf M_N)
	\right]
	<\infty.
	\label{eq:LrN_HS_bound}
\end{equation}
Thus, $\mathbf L_{r,N}$ is Hilbert--Schmidt almost surely.
Because $\mathbf K$, $\mathbf K_{r,N}$, and
$\mathbf L_{r,N}$ are constructed using the same Gaussian
coefficients $\{w_{n,m}\}$, their Gram operators satisfy
\begin{equation}
	\mathbf K^\dagger\mathbf K
	-
	\mathbf K_{r,N}^\dagger\mathbf K_{r,N}
	=
	\mathbf L_{r,N}^\dagger\mathbf L_{r,N}
	\succeq0.
	\label{eq:receive_gram_identity}
\end{equation}
The identity follows directly by comparing the matrix elements
in \eqref{eq:K_element}, \eqref{eq:KrN_element}, and
\eqref{eq:LrN_element}.

Given the definition above, the intermediate receive-truncated
ergodic rate is defined as
$$
	\mathcal I_{r,N}
	\left(
	\widetilde{\mathbf Q}
	\right)
	\triangleq
	\mathbb E
	\left[
	\log_2
	\det
	\left(
	\mathbf I+
	\mathbf K_{r,N}
	\mathbf K_{r,N}^\dagger
	\right)
	\right].
	\label{eq:IrN_definition}
$$
Applying \eqref{eq:logdet_trace_bound} to
\eqref{eq:receive_gram_identity}, together with
\eqref{eq:K_fredholm_identity} and
\eqref{eq:KrN_fredholm_identity}, yields
$$
	0
	\leq
	\mathcal I
	\left(
	\widetilde{\mathbf Q}
	\right)
	-
	\mathcal I_{r,N}
	\left(
	\widetilde{\mathbf Q}
	\right) \leq
	\frac{1}{\ln2}
	\mathbb E
	\left[
	\operatorname{Tr}
	\left(
	\mathbf L_{r,N}^\dagger
	\mathbf L_{r,N}
	\right)
	\right].
	\label{eq:receive_rate_gap_1}
$$
Using
$\operatorname{Tr}
(\mathbf L_{r,N}^\dagger\mathbf L_{r,N})
=
\|\mathbf L_{r,N}\|_{\mathrm{HS}}^2$
and \eqref{eq:LrN_HS_bound}, we finally obtain
\begin{equation}
	0
	\leq
	\mathcal I
	\left(
	\widetilde{\mathbf Q}
	\right)
	-
	\mathcal I_{r,N}
	\left(
	\widetilde{\mathbf Q}
	\right)
	\leq
	\frac{\rho\mu_0(\mathcal P)}{\ln2}
	\left[
	\operatorname{Tr}(\mathcal P)
	-
	\operatorname{Tr}(\mathbf M_N)
	\right].
	\label{eq:receive_rate_gap}
\end{equation}

\subsubsection{Transmit-Side Truncation}

We next retain the receive-side truncated spectrum and truncate
the transmit-side spectrum from
$\{\mu_m(\mathcal P)\}_{m\geq0}$ to
$\{\mu_m(\mathbf M_N)\}_{m\geq0}$, where the zero-extension
convention in \eqref{eq:MN_zero_extension} is used.
For the common admissible power-allocation sequence
$\{q_m\}_{m\geq0}$ introduced in
\eqref{eq:appendixF_power}, define the finite-dimensional
diagonal covariance matrix
\begin{equation}
	\widetilde{\mathbf Q}_N
	\triangleq
	\operatorname{diag}
	(q_0,q_1,\ldots,q_{N-1}).
	\label{eq:QN_from_common_sequence}
\end{equation}
Clearly,
$\widetilde{\mathbf Q}_N\succeq0$ and
$\operatorname{Tr}(\widetilde{\mathbf Q}_N)\leq P_T$.

Using the same Gaussian coefficients $\{w_{n,m}\}$, we define
the fully truncated weighted random operator $\mathbf K_N$
through its elements as
\begin{equation}
	[\mathbf K_N]_{n,m}
	\triangleq
	\frac{1}{\sigma_z}
	\sqrt{\mu_n(\mathbf M_N)}
	\,w_{n,m}\,
	\sqrt{\mu_m(\mathbf M_N)q_m},
	\;\; n,m\geq0.
	\label{eq:KN_element}
\end{equation}
By the zero-extension convention in
\eqref{eq:MN_zero_extension}, $\mathbf K_N$ has nonzero entries
only in its leading $N\times N$ block. Using the identification
of $\mathbf H_{w,N}$ with the leading $N\times N$ block of the
Gaussian array $\{w_{n,m}\}$, this leading block is precisely
\begin{equation}
	\frac{1}{\sigma_z}
	\mathbf H_N
	\widetilde{\mathbf Q}_N^{1/2}.
	\label{eq:KN_leading_block}
\end{equation}
Note that $\mathbf K_N$ is only a
zero-padded auxiliary embedding of the finite-dimensional
weighted truncated channel.

Since $\mathbf K_N$ has finite rank, it is Hilbert--Schmidt
almost surely. Define the transmit-side perturbation operator
$\mathbf L_{t,N}$ by
\begin{gather}
	[\mathbf L_{t,N}]_{n,m}
	\triangleq
	\frac{1}{\sigma_z}
	\sqrt{\mu_n(\mathbf M_N)}
	\,w_{n,m}\,
	\sqrt{
		\left[
		\mu_m(\mathcal P)-\mu_m(\mathbf M_N)
		\right]q_m
	}, \nonumber \\
	n,m\geq0.
	\label{eq:LtN_element}
\end{gather}
Again, the square root is well defined by
\eqref{eq:capacity_spectral_ordering}. Moreover,
\begin{align}
	\mathbb E
	\left[
	\|\mathbf L_{t,N}\|_{\mathrm{HS}}^2
	\right]
	=
	\frac{\operatorname{Tr}(\mathbf M_N)}
	{\sigma_z^2}
	\sum_{m=0}^{\infty}
	\left[
	\mu_m(\mathcal P)-\mu_m(\mathbf M_N)
	\right]q_m
	<\infty .
	\label{eq:LtN_HS}
\end{align}
Hence, $\mathbf L_{t,N}$ is Hilbert--Schmidt almost surely. Following directly from
\eqref{eq:KrN_element},
\eqref{eq:KN_element}, and
\eqref{eq:LtN_element}, the corresponding Gram operators
satisfy the exact identity
\begin{equation}
	\mathbf K_{r,N}\mathbf K_{r,N}^{\dagger}
	-
	\mathbf K_N\mathbf K_N^{\dagger}
	=
	\mathbf L_{t,N}\mathbf L_{t,N}^{\dagger}
	\succeq0 .
	\label{eq:transmit_gram_identity}
\end{equation}

We next connect the auxiliary operator $\mathbf K_N$ to the
finite-dimensional rate functional
$\mathcal I_N(\cdot)$ defined in~\eqref{eq:finite_capacity_diagonal_Q}. Since
$\mathbf K_N$ is supported only on its leading $N\times N$
block, its Fredholm determinant reduces exactly to the
corresponding finite-dimensional determinant. Therefore,
using \eqref{eq:KN_leading_block},
\begin{align}
	&\mathbb E
	\left[
	\log_2
	\det
	\left(
	\mathbf I+
	\mathbf K_N\mathbf K_N^\dagger
	\right)
	\right]
	\nonumber\\
	&=
	\mathbb E
	\left[
	\log_2
	\det
	\left(
	\mathbf I_N+
	\frac{1}{\sigma_z^2}
	\mathbf H_N
	\widetilde{\mathbf Q}_N
	\mathbf H_N^\dagger
	\right)
	\right] =
	\mathcal I_N
	\left(
	\widetilde{\mathbf Q}_N
	\right).
	\label{eq:KN_finite_rate_identity}
\end{align}

Applying \eqref{eq:logdet_trace_bound} to
\eqref{eq:transmit_gram_identity} and using
\eqref{eq:KN_finite_rate_identity} gives
$$
	0
	\leq
	\mathcal I_{r,N}
	\left(
	\widetilde{\mathbf Q}
	\right)
	-
	\mathcal I_N
	\left(
	\widetilde{\mathbf Q}_N
	\right)
	\leq
	\frac{1}{\ln2}
	\mathbb E
	\left[
	\operatorname{Tr}
	\left(
	\mathbf L_{t,N}\mathbf L_{t,N}^{\dagger}
	\right)
	\right].
	\label{eq:transmit_rate_gap_1}
$$

Since
$q_m\leq\sum_{k=0}^{\infty}q_k\leq P_T$
for every $m\geq0$, we obtain from~\eqref{eq:LtN_HS} that
\begin{align}
	&
	\mathbb E
	\left[
	\operatorname{Tr}
	\left(
	\mathbf L_{t,N}\mathbf L_{t,N}^{\dagger}
	\right)
	\right]
	\leq
	\rho\operatorname{Tr}(\mathbf M_N)
	\sum_{m=0}^{\infty}
	\left[
	\mu_m(\mathcal P)-\mu_m(\mathbf M_N)
	\right]
	\nonumber\\
	&=
	\rho\operatorname{Tr}(\mathbf M_N)
	\left[
	\operatorname{Tr}(\mathcal P)
	-
	\operatorname{Tr}(\mathbf M_N)
	\right], \nonumber
	\label{eq:LtN_trace_bound}
\end{align}
where the last equality follows from
\eqref{eq:trace_equivalence}. Therefore,
\begin{equation}
	0
	\leq
	\mathcal I_{r,N}
	\left(
	\widetilde{\mathbf Q}
	\right)
	-
	\mathcal I_N
	\left(
	\widetilde{\mathbf Q}_N
	\right)
	\leq
	\frac{
		\rho\operatorname{Tr}(\mathbf M_N)
	}{\ln2}
	\left[
	\operatorname{Tr}(\mathcal P)
	-
	\operatorname{Tr}(\mathbf M_N)
	\right].
	\label{eq:transmit_rate_gap}
\end{equation}

\subsubsection{Capacity Optimization and Conclusion}

Combining \eqref{eq:receive_rate_gap} and
\eqref{eq:transmit_rate_gap}, for every admissible
power-allocation sequence $\{q_m\}_{m\geq0}$ satisfying
\eqref{eq:appendixF_power}, we obtain
\begin{align}
	0
	&\leq
	\mathcal I
	\left(
	\widetilde{\mathbf Q}
	\right)
	-
	\mathcal I_N
	\left(
	\widetilde{\mathbf Q}_N
	\right)
	\nonumber\\
	&\leq
	\frac{\rho}{\ln2}
	\left[
	\mu_0(\mathcal P)
	+
	\operatorname{Tr}(\mathbf M_N)
	\right]
	\left[
	\operatorname{Tr}(\mathcal P)
	-
	\operatorname{Tr}(\mathbf M_N)
	\right].
	\label{eq:fixed_Q_capacity_gap}
\end{align}

We now relate the fixed-allocation comparison above to the two
capacity optimizations. By construction, every admissible sequence in \eqref{eq:appendixF_power} induces the continuous covariance \(\widetilde{\mathbf Q}\) and its finite-dimensional restriction \(\widetilde{\mathbf Q}_N\) in \eqref{eq:QN_from_common_sequence}; conversely, every feasible \(N\)-dimensional diagonal covariance admits a zero extension satisfying \eqref{eq:appendixF_power}.

According to the definitions of $C_{\mathrm erg}$ and $C_{\mathrm erg}^{(N)}$ given in Lemma~\ref{lemma:continuous_capacity} and Lemma~\ref{lemma:truncated_capacity}, respectively, their capacity gap is given by
\begin{align}
	C_{\rm erg}
	-
	C_{\rm erg}^{(N)}
	&=
	\sup_{\{q_m\}}
	\mathcal I
	\left(
	\widetilde{\mathbf Q}
	\right)
	-
	\sup_{\{q_m\}}
	\mathcal I_N
	\left(
	\widetilde{\mathbf Q}_N
	\right)
	\nonumber\\
	&\leq
	\sup_{\{q_m\}}
	\left[
	\mathcal I
	\left(
	\widetilde{\mathbf Q}
	\right)
	-
	\mathcal I_N
	\left(
	\widetilde{\mathbf Q}_N
	\right)
	\right],
	\label{eq:capacity_sup_gap}
\end{align}
where all suprema in \eqref{eq:capacity_sup_gap} are taken over
the admissible power-allocation sequences satisfying
\eqref{eq:appendixF_power}. Note that the continuous and
truncated channels need not share the same capacity-achieving
power allocation; the inequality above only requires the two
optimizations to be parameterized over the same admissible
power-allocation set.

Since \eqref{eq:fixed_Q_capacity_gap} holds for every admissible power-allocation sequence, it also implies $C_{\rm erg}\ge C_{\rm erg}^{(N)}$. Combining \eqref{eq:fixed_Q_capacity_gap} and
\eqref{eq:capacity_sup_gap}, we obtain~\eqref{eq:capacity_trace_defect_tight} in
Theorem~\ref{theorem:capacity_convergence}. Finally, since
$\mu_0(\mathcal P)\leq1$ and
$\operatorname{Tr}(\mathbf M_N)
\leq
\operatorname{Tr}(\mathcal P)=c^2/\pi$, we obtain the bound in~\eqref{eq:capacity_trace_defect}.
When $N_{1D}>ec/4$, applying the upper bound in Theorem~\ref{theorem:approx_error} for $\operatorname{Tr}(\mathcal{P})
-
\operatorname{Tr}(\mathbf{M}_N)$ in \eqref{eq:capacity_trace_defect} further yields the explicit bound in~\eqref{eq:capacity_convergence}, which completes the proof.

\section{}
\begin{lemma}
	For any real parameter $A > 0$ and integer order $m \ge 1$, the magnitude of the Bessel function of the first kind $J_m(A)$ is strictly bounded by the following super-exponential envelope:
	\begin{equation}
		\left|J_m(A)\right| < \frac{1}{\sqrt{2\pi m}} \left( \frac{e A}{2 m} \right)^m.
		\label{eq:Bessel_inequality_1}
	\end{equation}
	\label{lemma:Bessel_inequality}
\end{lemma}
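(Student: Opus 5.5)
The plan is to combine the classical sharp bound $|J_m(A)|\le \frac{(A/2)^m}{m!}$, valid for real $A$ and integer $m\ge0$, with the strict Stirling lower bound of Robbins that was already used in Appendix~\ref{appendix:truncation_error}. For the first ingredient I would start from Poisson's integral representation
$$
J_m(A)=\frac{(A/2)^m}{\sqrt{\pi}\,\Gamma(m+\tfrac12)}\int_{0}^{\pi}\cos(A\cos\theta)\sin^{2m}\theta\,d\theta .
$$
Bounding $|\cos(A\cos\theta)|\le1$ and using $\int_0^\pi\sin^{2m}\theta\,d\theta=\sqrt{\pi}\,\Gamma(m+\tfrac12)/\Gamma(m+1)$ gives $|J_m(A)|\le (A/2)^m/m!$.

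This inequality is in fact strict for $A>0$. The function $\cos(A\cos\theta)$ is continuous, and it equals $1$ in absolute value only where $A\cos\theta\in\pi\mathbb{Z}$. For fixed $A>0$ that happens on a finite set of $\theta\in[0,\pi]$. Hence $|\cos(A\cos\theta)|<1$ on a set of positive measure where $\sin^{2m}\theta>0$, and the integral inequality is strict. This strictness handles the case $A=$ anything positive uniformly, so no separate treatment of small or large $A$ is needed.

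Next I apply Robbins' bound $m!>\sqrt{2\pi m}\,(m/e)^m$, which holds for every integer $m\ge1$. Substituting it gives
$$
|J_m(A)|<\frac{(A/2)^m}{\sqrt{2\pi m}\,(m/e)^m}=\frac{1}{\sqrt{2\pi m}}\left(\frac{eA}{2m}\right)^m ,
$$
which is exactly \eqref{eq:Bessel_inequality_1}. Even without the strictness argument above, strictness of the final inequality would already follow from the strict Stirling step together with $A>0$, since then $(A/2)^m>0$.

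The only step needing any care is justifying the Poisson representation and the normalizing constant. Both are standard: expand $\cos(A\cos\theta)$ in its power series, integrate term by term using the Beta integral, and compare with the series $J_m(A)=\sum_k(-1)^k(A/2)^{2k+m}/(k!\,\Gamma(k+m+1))$. As a fallback, the weaker $|J_m(A)|\le (A/2)^m/m!$ can be obtained directly from this series for $A\le 2\sqrt{m+1}$ via an alternating-series argument. The integral route is cleaner and covers all $A>0$ at once, so that is the one I would use.
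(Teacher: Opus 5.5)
Your proposal is correct and matches the paper's proof: Poisson's integral with $|\cos(A\cos\theta)|\le 1$ and the identity $\int_0^\pi\sin^{2m}\theta\,d\theta=\sqrt{\pi}\,\Gamma(m+\tfrac12)/\Gamma(m+1)$ give $|J_m(A)|\le (A/2)^m/m!$, and Robbins' strict Stirling bound then finishes the argument. Your strictness argument for the integral step is valid but not needed, since, as you note yourself, strictness already follows from Robbins' inequality together with $(A/2)^m>0$.
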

\begin{proof}
	We first invoke Poisson's integral representation of the Bessel function~\cite[Eq.9.1.20]{Abramowitz1964Handbook}
	$$
		J_m(A) = \frac{(A/2)^m}{\sqrt{\pi} \Gamma(m + 1/2)} \int_0^\pi \cos(A \cos \theta) \sin^{2m} \theta \, d\theta.
	$$
	Taking the absolute value on both sides and applying the triangle inequality for integrals, we bound the integrand by exploiting $\left|\cos(A \cos \theta)\right| \le 1$ for all real $A$ and $\theta$:
	\begin{equation}
		\left|J_m(A)\right| \le \frac{(A/2)^m}{\sqrt{\pi} \Gamma(m + 1/2)} \int_0^\pi \sin^{2m} \theta \, d\theta.
		\label{eq:inequality_Bessel}
	\end{equation}
	The remaining definite integral is a standard trigonometric identity that evaluates exactly to Gamma functions:
	\begin{equation}
		\int_0^\pi \sin^{2m} \theta \, d\theta = \frac{\sqrt{\pi} \Gamma(m + 1/2)}{\Gamma(m + 1)}.
		\label{eq:trigono_Gamma}
	\end{equation}
	Substituting (\ref{eq:trigono_Gamma}) into (\ref{eq:inequality_Bessel}), we obtain:
	\begin{equation}
		\left|J_m(A)\right| \le \frac{(A/2)^m}{\Gamma(m + 1)} = \frac{(A/2)^m}{m!}.
		\label{eq:Bessel_inequality_2}
	\end{equation}
	To formulate this factorial envelope into a tractable closed-form, we employ the strict lower bound of Stirling's approximation established by Robbins~\cite{robbins1955remark}: $m! > \sqrt{2\pi m} \left( \frac{m}{e} \right)^m$.
	Substituting Robbins' bound into (\ref{eq:Bessel_inequality_2}), we obtain the strict super-exponential envelope in (\ref{eq:Bessel_inequality_1}).
	This completes the proof.
\end{proof}
	
\section{}
\label{appendix:quadrature_nodes}
To determine the convergence thresholds, we map the band-limited physical properties of the 1D PSWFs into the 2D polar coordinate system to evaluate their maximum spatial frequencies, or effective exponential types.

\subsubsection{Radial quadrature threshold} We first analyze the radial integrand, given by
\begin{equation}
	F_r(r) = \phi_p(\frac{r\cos\theta}{c})\phi_q(\frac{r\sin\theta}{c})\phi_j(\frac{r\cos\theta}{c})\phi_\ell(\frac{r\sin\theta}{c})r.
	\label{eq:F_r_r}
\end{equation}
Recall from \eqref{eq:Fourier_PSWF} that the Fourier transform of a 1D PSWF $\phi_n(t)$ is strictly supported within the physical bandwidth $[-c, c]$. Specifically, $\phi_n(\frac{r\cos\theta}{c})$ can be expressed as:
\begin{equation}
	\phi_n\Big(\frac{r\cos\theta}{c}\Big) = \frac{1}{2\pi}\int_{-c}^{c} S_n(k) e^{i k \frac{r\cos\theta}{c}} \,dk,
	\label{eq:phi_n_Fourier}
\end{equation}
where $S_n(k) = \frac{2\pi}{c \lambda_n}\phi_n(\frac{k}{c})$.
From (\ref{eq:phi_n_Fourier}), $\phi_n\Big(\frac{r\cos\theta}{c}\Big) $ can be viewed as a continuous superposition of complex exponential functions $e^{i\frac{k\cos\theta}{c}r}$. 
For a given angle $\theta$, and noting that $k\in[-c, c]$, the maximum spatial frequency with respect to the radial variable $r$ is bounded by $\left|\cos\theta\right|$.
Similarly, the maximum spatial frequency for the terms containing $\sin\theta$ is $|\sin\theta|$.

By the Paley-Wiener theorem~\cite{rudin1974real}, each term in (\ref{eq:F_r_r}) is an entire function of exponential type $|\cos\theta|$ or $|\sin\theta|$. Since the polynomial multiplier $r$ is of degree 1, it does not alter the overall exponential type.
By the properties of entire functions, the exponential type of a product of entire functions is at most the sum of their individual exponential types, which corresponds to the convolution of their bounded spectral supports.
Therefore, the exponential type of $F_r(r)$ is upper bounded by:
$$
	B_r(\theta) = 2(\left|\cos\theta\right| + \left|\sin\theta\right|).
	\label{eq:total_bandwidth}
$$
To guarantee convergence across all integration angles, we evaluate the global maximum of $B_r(\theta)$, which occurs at odd integer multiples of $\frac{\pi}{4}$,
yielding a uniform upper bound of $2\sqrt{2}$ on the exponential type. 

To apply the GLQ method, the physical radial interval $r \in [0, c]$ is affinely mapped to the standard GLQ interval $s \in [-1, 1]$ via the transformation $r = \frac{c}{2}(s+1)$. This mapping scales the uniform upper bound on the exponential type with respect to $s$ to:
$$
	\Omega_s = 2\sqrt{2} \cdot \frac{c}{2} = \sqrt{2}c.
$$
According to~\cite[Theorem 1]{Yan2026Computational}, for an entire function of exponential type $\sqrt{2}c$, its quadrature error is globally upper-bounded by $C(M_r)\left(\frac{e\sqrt{2}c}{4M_r}\right)^{2M_r}$, where $C(M_r)$ is an algebraically decaying prefactor. Consequently, the super-exponential convergence regime is reached when the base of the error envelope is less than one (i.e., $\frac{e\sqrt{2}c}{4M_r} < 1$), which yields the threshold:
$$
	M_r > \frac{e\sqrt{2}c}{4}.
	\label{eq:M_r_threshold}
$$
	
\subsubsection{Angular quadrature threshold}
For the angular dimension $\theta \in [0, 2\pi]$, we analyze the integrand as a periodic function, denoted by
$$
	F_\theta(\theta) = \phi_p(\frac{r\cos\theta}{c})\phi_q(\frac{r\sin\theta}{c})\phi_j(\frac{r\cos\theta}{c})\phi_\ell(\frac{r\sin\theta}{c}). 
$$
Using the continuous spectral representation of the 1D PSWFs, this product can be formulated as a quadruple integral over the four-dimensional wavenumber domain $[-c, c]^4$:
\begin{align}
	F_\theta(\theta) = \iiiint_{[-c,c]^4}& S_p(k_1)S_q(k_2)S_j(k_3)S_\ell(k_4) \nonumber \\ 
	&\times e^{i \Phi(r, \theta, \mathbf{k})} \,dk_1 dk_2 dk_3 dk_4,
	\label{eq:def_F_theta}
\end{align}
where the composite phase function $\Phi(r, \theta, \mathbf{k})$ is a linear superposition of the individual exponential arguments:
$$
	\Phi(r, \theta, \mathbf{k}) = \frac{r}{c} (k_1\cos\theta + k_2\sin\theta + k_3\cos\theta + k_4\sin\theta).
$$
We regroup the angular variables as $K_x = k_1 + k_3$ and $K_y = k_2 + k_4$. Since the spectral support of each PSWF is bounded by $k_i \in [-c, c]$, the regrouped terms $K_x$ and $K_y$ are confined to the expanded domains $K_x \in [-2c, 2c]$ and $K_y \in [-2c, 2c]$. Through harmonic addition, the phase function simplifies to:
\begin{align}
	\Phi(r, \theta,K_x, K_y) & = \frac{r}{c} (K_x\cos\theta + K_y\sin\theta) \nonumber \\
	& = \frac{r}{c} \sqrt{K_x^2 + K_y^2} \cos(\theta - \alpha),
	\label{eq:Phi_phase}
\end{align} 
where $\alpha$ is a phase angle determined by $K_x$ and $K_y$.
The maximum possible phase amplitude, denoted by $A_{\max}$, represents the worst-case angular fluctuation of the integrand across all possible wavenumber combinations $(K_x, K_y) \in [-2c, 2c]^2$ and all radial distances $r \in [0, c]$. Based on the coefficient of the cosine term in (\ref{eq:Phi_phase}), this maximum is achieved when $r=c$, $K_x = \pm 2c$, and $K_y = \pm 2c$, yielding:
\begin{equation}
	A_{\max} = \frac{c}{c}\sqrt{(\pm 2c)^2 + (\pm 2c)^2} = 2\sqrt{2}c.
	\label{eq:A_max}
\end{equation}  
Because $F_\theta(\theta)$ in (\ref{eq:def_F_theta}) is a continuous superposition of $e^{i\Phi(r, \theta, \mathbf{k})}$, the decay of its angular Fourier coefficients can be controlled using the worst-case phase amplitude $A_{\max}$. To quantify this decay, we invoke the classic Jacobi-Anger expansion~\cite{Abramowitz1964Handbook}:  
$$
	e^{i A_{\max} \cos(\theta - \alpha)} = \sum_{m=-\infty}^{\infty} i^m J_m(A_{\max}) e^{i m (\theta - \alpha)},
$$
where $J_m(\cdot)$ denotes the Bessel function of the first kind of integer order $m$. 

Given the strict upper bound established in Lemma \ref{lemma:Bessel_inequality}, $J_m(A_{\max})$ enters the super-exponential decay regime when $\frac{eA_{\max}}{2m} < 1$, which corresponds to $m > \frac{eA_{\max}}{2}$. Since $A_{\max} = 2\sqrt{2}c$ as given in (\ref{eq:A_max}), we obtain the super-exponential decay threshold as $m > \sqrt{2}ec$. Therefore, $e^{iA_{\max}\cos(\theta-\alpha)}$, and consequently $F_{\theta}(\theta)$, have Fourier coefficients that enter the super-exponential decay regime beyond this threshold.

We evaluate the $2\pi$-periodic function $F_\theta(\theta)$ using the trapezoidal rule on a uniform grid, which provides exponential convergence~\cite{Trefethen2014Exponentially}.
Under this rule, the quadrature error is dominated by the aliasing of Fourier coefficients beyond the sampling rate. 
Thus, choosing $M_{\theta}$ beyond the same threshold ensures that the aliased Fourier coefficients lie in the super-exponentially decaying tail identified above.
This yields the following sufficient quadrature threshold:
$$
	M_\theta > \sqrt{2}ec.
$$

\bibliographystyle{IEEEtran}
\bibliography{mybibliography.bib}

\end{document}